\documentclass[11pt]{article}
\usepackage{graphicx}

\usepackage[utf8]{inputenc}
\usepackage{amsmath,amsfonts,amssymb,amscd,amsthm,txfonts,hyperref}

\oddsidemargin 1 cm \evensidemargin 1 cm \textwidth 15 cm \topmargin
0 cm \textheight 22.5 cm

\newtheorem{theorem}{Theorem}[section]
\newtheorem{proposition}[theorem]{Proposition}
\newtheorem{lemma}[theorem]{Lemma}

\newtheorem{remark}{Remark}[section]
\newtheorem{theo}{Theorem}

\theoremstyle{definition}
\newtheorem{definition}[theorem]{Definition}
\newtheorem{example}[theorem]{Example}

\newcommand{\R}{\varmathbb R}

\newcommand{\N}{\varmathbb N}
\newcommand{\Z}{\varmathbb Z}

\newcommand{\reff}[1]{(\ref{#1})}

\title{Wave turbulence for the BBM equation : Stability of a Gaussian statistics under the flow of BBM}
\author{Anne-Sophie de Suzzoni\footnote{D\'epartement de Math\'ematiques, Universit\'e de Cergy-Pontoise, Site de Saint Martin, 2, av Adolphe Chauvin, 95302 Cergy-Pontoise Cedex, FRANCE, e-mail : \texttt{anne-sophie.de-suzzoni@u-cergy.fr}}}

\begin{document}

\maketitle

\begin{abstract}This paper introduces a measure or statistics invariant through the flow of the Benjamin-Bona-Mahony equation and studies its stability, regarding a specific class of perturbation and in the idea of the wave turbulence theory. \end{abstract}

\tableofcontents

\section{Introduction}
 
Wave turbulence studies the evolution of some particular statistics under the flow of non linear equations with a weak non linearity. An early reference on the subject is the one by Peierls in \cite{physica} in 1929. The theory has known important developments during the sixties thanks to Zakharov, Filonenco, or Musher see \cite{zakfilo,zakde,zaktrois}, who have described the invariance of particular spectra known as the Kolmogorov-Zakharov (KZ) spectra which represents the average amplitudes to the square of waves, or the number of particles given a wavelength. In more recent works such as \cite{zakquatre}, the stability of the KZ spectrum has also been studied.

The purpose of these works is to consider each possible wavenumber of the linearised around the zero solution equation (that is to say, the non linear equation which the non linearity has been removed of) and assume what is called the random phase approximation (RPA) which presumes that the phases of the waves corresponding to these wavenumbers are initially independent from each other and taken uniformly distributed over the circle $S^1$. 

A review of general KZ spectra, that is, of the statistics such that the average of the amplitudes to the square are invariant under the flow of some PDEs can be found in \cite{first}. 

Moreover, different time scales have been observed between the studied PDE and the evolution of average quantities such as the energy, \cite{new,newde}.

Another question that arises is how does the law of the statistics itself evolve. This question appeared since the beginning of the theory in \cite{physica} and was later developed by Brout and Prigogine in \cite{probalu}. 

In more recent papers, not only the phases are supposed independent but also the modulus of the amplitudes (Random phase and amplitude assumption). One can thus wonder whether the waves remain independent as they evolve in time and as they interact due to the non linearity of the equation. A general investigation leads to the preservation of the independence and the distribution of the phases under some conditions up to corrections of order $2$ wrt a small parameter controlling the non linearity, \cite{stab}.

To be more precise, what is called a statistics is a random variable with value in $L^2$, or a space linearly spanned by the eigenmodes of the linear equation corresponding to the non linear PDE. This random variable induces a measure on $L^2$. It can also be seen as two sequences of random variables with values in $\R^+$ and $S^1$, $A_n\in \R^+$ (for the modulus of the amplitude) and $\varphi_n \in S^1$ for the phase. Then, the random initial data is given by : 

$$\sum_n A_n \varphi_n e_n$$
where $e_n$ are the eigenfunctions of the linear operator involved in the studied PDE.

Under the random phase approximation, the $\varphi_n$ are supposed initially independent from each other and from the $A_n$, and are supposed uniformly distributed over $S^1$. Under the random phase and amplitude assumption, the $A_n$ are also supposed independent from each other. The quantity introduced in \cite{stab} to study the stability of the statistics is :

\begin{equation}\label{genfunone}Z^{N}\{\lambda,\mu, t\} = \langle \prod_{i=1}^N e^{\lambda_i A_i(t)^2} \varphi_i(t)^{\mu_i} \rangle\end{equation}
where $\langle \; . \; \rangle$ denotes the mean value wrt te initial measure induced by the statistics. The integer $N$ corresponds to a certain (large) number of waves and the behaviour of $Z^N$ is studied as $N$ goes to $\infty$. As time passes by, the values of the random variable $A_n$ and $\varphi_n$ evolve and interact with each other, and thus, $Z^N$ depends also on $t$. Notice that this construction highly depends on the choice of the basis $(e_n)_n$. The stability is a control of the difference between $Z^N(t)$ and $Z^N(t=0)$. However, this generating functional has been chosen mainly for its convenience regarding the problem studied. Here, another one is taken, still for reasons of convenience, but the main purpose remains studying the law of the statistics.

Modulus invariant statistics (KZ spectra) are the ones such that $\langle A_n^2(t) \rangle = \langle A_n^2(0) \rangle$ at least ``locally" in $n$, that is for $n$ of a certain order. The solutions are of the form $\langle A_n^2 \rangle = Cn^\beta$.

In this paper, the equation onto which wave statistics are dealt with is the Benjamin - Bona - Mahony equation : 

\begin{equation} \label{bbm} \left \lbrace{ \begin{tabular}{ll}

$\partial_t \left(1-\partial_x^2 \right) u + \partial_x \left( u+ \frac{u^2}{2} \right) = 0$ & $u$ \mbox{ periodic in }$x ,t\in \R$ \\
$u|_{t=0} = u_0 \in H^s$ & \mbox{for some }$ s\geq 0$ \end{tabular}} \right. \; .\end{equation}

This equation is an alternative to KdV in the context of long wavelengths and small amplitudes water waves. The terms of second order in $u_x$ have been replaced by $-u_t$. It has been chosen because it has a so-called linear invariant, the $H^1$ norm to the square. This invariant permits to construct an initial datum belonging almost surely to $L^2$, whose law is invariant under the flow of the equation, that is, there is an invariant statistics (measure) $\mu$ on $L^2$ for the BBM equation.

For the measure $\mu$, the questions that generally arise in wave turbulence, are entirely dealt with thanks to its invariance. The squares of the amplitudes are invariant and equal to $\frac{2}{1+n^2}\sim n^{-2}$, the independence remains valid at all time, there is no time scale so to speak for the evolution of the average quantities in general since they are invariant.

This statistics will be slightly perturbed, in a way that shall be defined later, and the investigation is about the evolution of this perturbed statistics $\mu_V$, and what it implies for the evolution of the squares of the amplitudes. This measure depends on a small parameter $V$, which is a $\mathcal C^2$ function representing a potential, whose $L^\infty$ norm  and the $L^\infty$ norm of its derivatives are close to $0$.

Remark that the unknown is real here and not complex. Thus, the initial statistics considered has been chosen as the real part of a complex statistics satisfying the conditions imposed by wave turbulence, random phase amplitude or random phase approximation.

The generating functional used to study the evolution of the law of the statistics is the characteristic function, which means that the evolution of

\begin{equation}\label{genfuntwo}Z_V(\lambda, t) = E_V(e^{i\langle \lambda,\psi(t)u_0 \rangle })\end{equation}
is considered, where $E_V$ is the mean value wrt the perturbed statistics $\mu_V$, or $d\mu_V(u_0)$, $\psi(t)$ is the flow of the BBM equation, $\lambda \in L^2$, and the brackets denote the usual scalar product in $L^2$. 

In fact, there does not seem to exist quantities of type \reff{genfunone} in the context of BBM as it is a real valued context. However, \reff{genfuntwo} measures independence of the amplitudes as well as \reff{genfunone} and thus seems as natural as \reff{genfunone}.

It is known from \cite{globpos,david} that the BBM equation is globally well posed in $H^s$ for all $s\geq 0$ and there even exist bounds on the $L^2$ norm of $\psi(t)u_0$. The first thing proved here is the existence of a statistics invariant under the flow of BBM.

\begin{theo}There exists a measure $\mu$ on $L^2$ invariant under the flow of BBM. The measure $\mu$ is a Gaussian vector in infinite dimension. For all $A \subseteq L^2$ measurable (in the sense of the topological $\sigma$ algebra), 

$$\mu(\psi(t) A) = \mu(A) \; .$$\end{theo}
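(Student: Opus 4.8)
The plan is to realise $\mu$ as the centred Gaussian measure on $L^2(\T)$ whose covariance operator is $(1-\partial_x^2)^{-1}$, so that in the Fourier basis $(e^{inx})_{n\in\Z}$ the coefficients $c_n$ of $u$ are independent complex Gaussians with $\langle |c_n|^2\rangle \propto (1+n^2)^{-1}$, subject to the reality constraint $c_{-n}=\bar c_n$. Since $\sum_{n\in\Z}(1+n^2)^{-1}<\infty$ the covariance is trace class, so $\mu$ is a genuine probability measure on $L^2$ (and $\langle\|u\|_{L^2}^2\rangle<\infty$); formally $d\mu \propto e^{-\frac12\|u\|_{H^1}^2}\,du$, which is exactly what singles out the $H^1$ invariant. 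First I would record the Hamiltonian form of \reff{bbm}, namely $\partial_t u = -J\big(u+\tfrac{u^2}{2}\big)$ with the skew-adjoint operator $J=(1-\partial_x^2)^{-1}\partial_x$ of symbol $in/(1+n^2)$, and check directly that $\|u\|_{H^1}^2$ is conserved by the flow (integrating $\langle(1-\partial_x^2)u,\partial_t u\rangle$ by parts reduces it to $\int \partial_x u\,(u+u^2/2)=0$).

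Next I would pass to a finite-dimensional Galerkin truncation. Let $\Pi_N$ be the projection onto the modes $|n|\le N$ and consider on $E_N=\Pi_N L^2$ the ODE $\partial_t u^N = -\Pi_N J\big(u^N+\tfrac12 (u^N)^2\big)$. Its right-hand side is a quadratic polynomial vector field, and the same computation as above shows that the truncated norm $\|u^N\|_{H^1}^2$ is conserved; this a priori bound makes the truncated flow $\psi_N(t)$ globally defined on $E_N$. The core of this step is that $\psi_N(t)$ leaves invariant the Gaussian $d\mu_N \propto e^{-\frac12\|u^N\|_{H^1}^2}\,du^N$ on $E_N$. This rests on two facts: a Liouville property, namely that the truncated vector field is divergence free — which follows from the skew-adjointness of $J$ together with the $n\leftrightarrow -n$ symmetry of the symbol $in/(1+n^2)$, the zero mode being simply conserved — and the conservation of $\|u^N\|_{H^1}^2$ just established. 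Together they give $(\psi_N(t))_{\#}\mu_N=\mu_N$. Because $\psi_N$ acts as the identity on the complementary modes and the Fourier modes are independent under $\mu$, the product structure $\mu=\mu_N\otimes\mu_N^\perp$ yields $(\psi_N(t))_{\#}\mu=\mu$ on all of $L^2$.

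Finally I would pass to the limit $N\to\infty$. Here I would use that $J$ is bounded on $L^2$ and, more, that $u\mapsto J(u^2)$ is locally Lipschitz from $L^2$ to $L^2$: the kernel of $J$ lies in $L^2(\T)$, so by Young and Cauchy--Schwarz $\|J(u^2)-J(v^2)\|_{L^2}\lesssim \|u-v\|_{L^2}\|u+v\|_{L^2}$. This, with the global $L^2$ bounds recalled above, gives by a Gronwall argument that $\psi_N(t)u_0\to\psi(t)u_0$ in $L^2$ uniformly on compact time intervals for every fixed $u_0\in L^2$, hence $\mu$-almost surely. For a bounded continuous test function $F$ on $L^2$, dominated convergence then gives $\int F(\psi(t)u_0)\,d\mu = \lim_N \int F(\psi_N(t)u_0)\,d\mu = \int F\,d\mu$, the last equality being the finite-dimensional invariance $(\psi_N(t))_{\#}\mu=\mu$. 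Since bounded continuous functions determine a Borel measure on $L^2$ and $\psi(t)$ is invertible with $\psi(t)^{-1}=\psi(-t)$, this is exactly $\mu(\psi(t)A)=\mu(A)$.

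The main obstacle I anticipate is this last passage to the limit: one must control the Galerkin approximation of the flow quantitatively and uniformly on the support of $\mu$, which carries no regularity beyond $L^2$ (indeed $u\in H^{1/2-\varepsilon}$ only, $\mu$-almost surely), and then justify interchanging the limit with the integral against the infinite-dimensional measure. The smoothing and boundedness of $J=(1-\partial_x^2)^{-1}\partial_x$ on $L^2$ — the very feature that distinguishes BBM from KdV — is what makes this control tractable and keeps the whole argument within the $L^2$ theory, rather than forcing one to bootstrap regularity or to invoke the invariant measure itself as an a priori bound.
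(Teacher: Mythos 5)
Your measure, your finite-dimensional reduction and the Liouville-plus-energy argument for the invariance of the truncated Gaussian are exactly the ones used in the paper, and your local Lipschitz estimate $\|J(u^2)-J(v^2)\|_{L^2}\lesssim \|u-v\|_{L^2}\,\|u+v\|_{L^2}$ is the paper's key smoothing lemma. The gap is in the final globalization. You claim that this Lipschitz bound, ``with the global $L^2$ bounds recalled above'', gives by Gronwall that $\psi_N(t)u_0\to\psi(t)u_0$ on compact time intervals for every $u_0\in L^2$. But the only global bound you have actually recalled is the conservation of the $H^1$ norm, which is vacuous on the support of $\mu$ (a $\mu$-typical $u_0$ lies in $H^{1/2-\varepsilon}$ but not in $H^1$); the $L^2$ norm is \emph{not} conserved by BBM; and the conserved quantity of the truncated system only yields $\|\psi_N(t)u_0\|_{L^2}\le \|\Pi_N u_0\|_{H^1}$, which for rough data grows like a positive power of $N$ and is therefore useless uniformly in $N$. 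Without a uniform-in-$N$, global-in-time a priori bound on $\|\psi_N(t)u_0\|_{L^2}$, the Gronwall constant blows up and the convergence of the Galerkin flows is only local in time, on an interval shrinking like the inverse of the norm of the data — which is not enough to conclude $\int F(\psi(t)u_0)\,d\mu=\int F\,d\mu$ for all $t$.

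This is precisely the point where the paper invokes the invariant measure itself as the a priori bound, the step you explicitly set out to avoid. Concretely: local convergence and local invariance are proved on balls $B_R^s$ with $s\in]0,\frac12[$ for times $|t|\le \frac{1}{3C_sR}$ (the positive $s$ also supplies the quantitative rate $N^{-s}$ that makes the convergence \emph{uniform} on the ball, which your pointwise-in-$u_0$ statement does not give and which the set-inclusion argument for local invariance needs); then the invariance of $\mu$ under $\psi_N$ combined with the Gaussian tail $\mu((B_R^s)^c)\le e^{-a_sR^2}$ shows that outside a set of measure $Ce^{-aR^2}$ the solution at time $T_n$ remains in $B^s_{\sqrt{n+1}R}$, and since the corresponding local existence times $t_n\sim (C\sqrt{n+1}R)^{-1}$ have divergent sum, the invariance propagates to all $t\in\R$. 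To rescue your route you would instead have to prove a deterministic global $L^2$ bound for the truncated flows, uniformly in $N$ (a frequency-splitting argument of Bona--Tzvetkov type, which the paper carries out in Section 5 for $\psi$ and $\psi_V$ but never for $\psi_N$); that is a genuine additional piece of work, not a consequence of anything you have written down.
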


This statistics $\mu$ is taken such that all eigenmodes are independent from each other. The measure $\mu$ is of Gibbs type, in the spirit of the works by Lebowitz-Rose-Speer, \cite{LRS} and Bourgain, \cite{bourgain}. 

Now a small parameter $V$ is introduced, and the statistics $\mu$ is changed into a statistics $\mu_V$ which allows covariance (of order $V$) between the modes. As it happens, $\mu_V$ is built in a way that involves a slightly different linear operator $D_V= (1+V)^{-1/2}(1-\partial_x^2)^{-1}\partial_x (1+V)^{1/2}$ from the operator of BBM ($(1-\partial_x^2)^{-1}\partial_x$) and the perturbed eigenmodes (the projections onto the eigenfunctions of $D_V$) are independent from each other.

In fact, the change of statistics corresponds to a change of the equation, and the statistics $\mu_V$ is invariant under the perturbed flow, the new equation being :

$$\partial_t u_V + D_V (u_V+ \frac{(1+V)^{1/2}u_V^2}{2})=0$$
as BBM is 

$$\partial_t u +(1-\partial_x^2)^{-1}\partial_x \left(u+ \frac{u^2}{2}\right) = 0\; .$$

The flow of this equation globally exists and is noted $\psi_V$. The measure $\mu_V$ is an infinite dimensional Gaussian vector on $L^2$ with covariance operator $\sqrt{1+V} (1-\partial_x^2)^{-1} \sqrt{1+V}$.

In the end, there is an estimate regarding the characteristic functions \reff{genfuntwo}.

\begin{theo}\label{main} Let $\epsilon \in ]0, \frac{1}{2}[$, there exist two constants $C$ and $c$ such that for all $\lambda \in L^2$ and all $t\in \R$,

$$|Z_V(\lambda, t) -Z_V(\lambda,0)|\leq C||V||_\infty ||\lambda||_{L^2}|t|^{5/(2\epsilon)} e^{c|t|^{6/\epsilon -2}}$$
where $||V||_\infty$ controls the smallness of the perturbative parameter $V$ and $Z_V(\lambda, t)$ is defined by \reff{genfuntwo}.\end{theo}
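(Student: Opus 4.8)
The plan is to compare the true BBM flow $\psi$ with the perturbed flow $\psi_V$ on the same (now random) initial datum, exploiting that $\mu_V$ is invariant under $\psi_V$. Since $\psi(0)=\mathrm{id}$ and $(\psi_V(t))_*\mu_V=\mu_V$, one may write $Z_V(\lambda,0)=E_V(e^{i\langle\lambda,u_0\rangle})=E_V(e^{i\langle\lambda,\psi_V(t)u_0\rangle})$ for every $t$, so that
\begin{equation*}Z_V(\lambda,t)-Z_V(\lambda,0)=E_V\big(e^{i\langle\lambda,\psi(t)u_0\rangle}-e^{i\langle\lambda,\psi_V(t)u_0\rangle}\big).\end{equation*}
Using $|e^{ia}-e^{ib}|\le|a-b|$ for real $a,b$ together with Cauchy--Schwarz, this reduces the theorem to an estimate on the expected $L^2$ distance between the two flows,
\begin{equation*}|Z_V(\lambda,t)-Z_V(\lambda,0)|\le\|\lambda\|_{L^2}\,E_V\big(\|\psi(t)u_0-\psi_V(t)u_0\|_{L^2}\big).\end{equation*}

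Next I would exploit the conjugation structure. Setting $v=(1+V)^{1/2}u_V$ turns the perturbed equation into BBM itself, whence $\psi_V(t)=(1+V)^{-1/2}\psi(t)(1+V)^{1/2}$. Splitting
\begin{equation*}\psi(t)u_0-\psi_V(t)u_0=\underbrace{\big(\psi(t)u_0-\psi(t)((1+V)^{1/2}u_0)\big)}_{(\mathrm{A})}+\underbrace{\big(1-(1+V)^{-1/2}\big)\psi(t)((1+V)^{1/2}u_0)}_{(\mathrm{B})}\end{equation*}
isolates the factor $\|V\|_\infty$ twice over: in $(\mathrm{B})$ because $\|1-(1+V)^{-1/2}\|_{L^\infty}\le C\|V\|_\infty$, and in $(\mathrm{A})$ because it is a difference of two BBM flows whose data differ by $\|u_0-(1+V)^{1/2}u_0\|_{L^2}\le C\|V\|_\infty\|u_0\|_{L^2}$. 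Moreover $\psi(t)((1+V)^{1/2}u_0)=(1+V)^{1/2}\psi_V(t)u_0$, so $(\mathrm{B})$ is controlled by $C\|V\|_\infty\|\psi_V(t)u_0\|_{L^2}$, whose $E_V$-expectation equals $C\|V\|_\infty E_V\|u_0\|_{L^2}$ by invariance of $\mu_V$ under $\psi_V$ — a contribution uniform in $t$.

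Then I would run the basic Lipschitz estimate for BBM in $L^2$. Writing $B=(1-\partial_x^2)^{-1}\partial_x$, the difference $z$ of two solutions $u_1,u_2$ obeys $\partial_t z+Bz+\tfrac12 B((u_1+u_2)z)=0$. Since $B$ is skew-adjoint on $L^2$ the linear term drops out of the energy identity, and since its symbol $in/(1+n^2)$ is square-summable one has the smoothing bilinear bound $\|B(fg)\|_{L^2}\le C\|f\|_{L^2}\|g\|_{L^2}$; hence
\begin{equation*}\frac{d}{dt}\|z\|_{L^2}^2\le C\big(\|u_1\|_{L^2}+\|u_2\|_{L^2}\big)\|z\|_{L^2}^2.\end{equation*}
Grönwall applied to $(\mathrm{A})$ then yields $\|(\mathrm{A})\|_{L^2}\le C\|V\|_\infty\|u_0\|_{L^2}\exp\!\big(C\int_0^t(\|\psi(s)u_0\|_{L^2}+\|\psi_V(s)u_0\|_{L^2})\,ds\big)$, where I have again used $\psi(s)((1+V)^{1/2}u_0)=(1+V)^{1/2}\psi_V(s)u_0$.

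It remains to take $E_V$ and to control the exponential. The $\psi_V$-trajectory is harmless: by invariance of $\mu_V$ the law of $\|\psi_V(s)u_0\|_{L^2}$ equals that of $\|u_0\|_{L^2}$ at each fixed $s$, so Jensen's inequality on $[0,t]$ converts $\exp(C\int_0^t\|\psi_V(s)u_0\|_{L^2}ds)$ into $\exp(Ct\|u_0\|_{L^2})$ in expectation, which Fernique's theorem bounds for the Gaussian $\mu_V$. The genuine difficulty is the true-flow trajectory $\|\psi(s)u_0\|_{L^2}$: $\mu_V$ is \emph{not} invariant under $\psi$, the $L^2$ norm is not conserved by BBM, and the naive energy estimate only yields a Riccati bound that blows up in finite time. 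One must instead invoke a quantitative global growth bound for $\|\psi(t)u_0\|_{L^2}$ valid for data merely in $H^{s}$, $s<\tfrac12$ (the a.s. regularity under $\mu$, the conserved $H^1$ energy being a.s. infinite). I expect this to be the main obstacle, to be handled by splitting $u_0$ at a frequency $N\sim|t|^{1/\epsilon}$, conserving the $H^1$ energy of the low part while estimating the small high-frequency $L^2$ tail; balancing these against the Gaussian tails of $\mu_V$ is precisely what should produce the polynomial factor $|t|^{5/(2\epsilon)}$ and the stretched exponential $e^{c|t|^{6/\epsilon-2}}$, with the constraint $\epsilon<\tfrac12$ ensuring that the samples live in the space used.
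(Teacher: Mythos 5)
Your proposal is correct in outline and would deliver the stated estimate, but it routes the flow-comparison step differently from the paper. The paper never exploits the exact conjugation identity $\psi_V(t)=(1+V)^{-1/2}\psi(t)(1+V)^{1/2}$ (it is implicit in the remark that $\mu_V$ is the invariant statistics for BBM with $u$ replaced by $\sqrt{1+V}\,u$, but it is not used); instead Proposition \ref{clo} compares the two Duhamel representations directly, via the operator estimates $\|(K-D)(uv)\|_{L^2}\leq C\|V\|_\infty\|u\|_{L^2}\|v\|_{L^2}$ and $\|e^{-tD}v-e^{-tK}v\|_{L^2}\leq C\|V\|_\infty\|v\|_{L^2}e^{c|t|}$, followed by Gronwall. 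Your conjugation plus the splitting into $(\mathrm{A})$ and $(\mathrm{B})$ is cleaner: $(\mathrm{B})$ costs nothing by invariance of $\mu_V$ under $\psi_V$, and $(\mathrm{A})$ is a two-data stability estimate for BBM alone, needing only the bilinear smoothing bound of Lemma \ref{improv} with $g=\mathrm{Id}$; in particular you avoid having to compare the semigroups $e^{-tD}$ and $e^{-tK}$ at all. What you do not carry out --- and what is the actual technical core of Section 5 --- is the quantitative global growth bound $\|\psi(t)u_0\|_{L^2}\leq C+C\,N(u_0,t)^{(1+\sigma-2s)/2}\|u_0\|_{H^s}$ needed to tame the Gronwall exponential along the \emph{true} flow, where $\mu_V$ is of no help. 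Your sketch of how to obtain it (frequency cutoff, conservation of the $H^1$ energy of the low-frequency part, local well-posedness from the small $L^2$ tail of the high-frequency part, balanced against Gaussian tails) is precisely the paper's unlabelled global-existence proposition preceding Proposition \ref{clo}, combined with Propositions \ref{compactness} and \ref{compactnenn} and the tail bound $P(N>N_0)\leq Ce^{-cN_0/(1+|t|)^2}$; the one refinement to note is that the cutoff must be the random $N(u_0,t)$ of Definition \ref{bign} rather than a deterministic $N\sim|t|^{1/\epsilon}$, its size being controlled only in probability, and that the resulting random factors are recombined under $E_V$ by Cauchy--Schwarz, which is where the exponents $2/\epsilon$ and $6/\epsilon-3$ come from. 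So there is no wrong step, but your argument is complete only modulo that deferred growth estimate, which is where most of the work of the paper lies.
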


\begin{remark} This result leads to the stability of the so-called KZ spectrum for this equation, that is, the mean values of the amplitudes to the square differ from their initial values only with order $||V||_\infty$ and with the same behaviour in time : for $\epsilon \in ]0,1/2[$, there exist $C,c$ such that for all $n\in \N$ $t\in \R$,

$$|E_V(|\langle \psi(t)u_0, \cos (nx)\rangle |^2) - E_V(|\langle u_0, \cos (nx)\rangle |^2)|\leq C||V||_\infty |t|^{5/(2\epsilon)} e^{c|t|^{6/\epsilon -2}}\; .$$
\end{remark}

Remark that $x\mapsto e^{ix}$ can be replaced by any $F$ as long as $F$ is smooth enough, for instance if $F$ is differentiable and its derivative is bounded.

\paragraph{Plan of the paper}
In Section 2, the existence and invariance under the BBM flow of the measure $\mu$ is proved. For that, the techniques used are the same as in \cite{moimeme}. The BBM flow is approached by finite dimensional flows, and the measure by other measures onto finite dimensional spaces, such that the conservation of the approached measures under the approached flows can be actually computed. Then, the limit is taken.

In Section 3, the meaning of the phrase "perturbation of the statistics" is given. The measure $\mu$ is a Gaussian vector of diagonal covariance matrix, the perturbed measure $\mu_V$ is also a Gaussian vector whose covariance matrix coefficients depend on the Fourier coefficients of the small $\mathcal C^2$ parameter $V$, such that it tends to the covariance of $\mu$ when $V$ goes to $0$. The measure $\mu_V$ is built in a way such that it is a priori invariant under the flow $\psi_V$ of a $V$ perturbed equation. Section 3 also introduces this equation, its invariant and a method to approach it by finite dimensional equations. 

Section 4 provides a proof of local well posedness of the perturbed equation, which is mainly a verification that the operator $D_V$ has properties in common with $(1-\partial_x^2)^{-1}\partial_x$. Then, the almost sure global well posedness is exposed along with the invariance of $\mu_V$ under the flow $\psi_V(t)$ which leads to the invariance of 

$$Z'(\lambda, t) = E_V\left( e^{i\langle \lambda , \psi_V(t)u_0\rangle }\right)\; .$$

In Section 5, the same techniques as in the first sections of \cite{globpos} are used to get bounds on the $L^2$ norms of $\psi(t)u_0$ and $\psi_V(t)u_0-\psi(t)u_0$ in order to finally prove theorem \reff{main}.

\section{Invariance of the independent Gaussian statistics}

Consider a system which can be seen as a statistical repartition of waves. In particular, look at the case when the repartition is a Gaussian onto each mode of the linear equation and those Gaussians are independent. It means that two different wavelengths are statistically independent. It evolves through the flow of the BBM equation. As it evolves, the different wavelengths interfere but the statistical repartition remains the same. Namely, the statistics is represented by a measure that is invariant through the flow.

The plan of this section comes as follow : first, the measure is defined, then, its invariance through the linear flow is proved, and then, its invariance through the BBM equation.

\subsection{Linear invariance}

The measure constructed here is a infinite dimensional Gaussian w.r.t. the Laplacian. In finite dimension, it is a Gaussian vector with a covariance matrix representing $(1-\partial_x^2)^{-1}$. This also corresponds to a Brownian motion conditionned by $2\pi$ periodicity, $u(2\pi) = u(0)$. Then, the limit is taken.

The equation \reff{bbm} admits 

\begin{equation} \label{energy} \frac{1}{2} \int u\left( 1-\partial_x^2 \right) u dx\end{equation}
as an invariant. 

Then, there exists a measure invariant through the flow of \reff{bbm}, as the action of the covariance matrix to the solution is independent from time.

\begin{definition}\label{defmea} Let $(c_n)_{n\geq 0}$ and $(s_n)_{n\geq 1}$ be the orthonormal basis of real $L^2$ with periodic conditions: 

$$c_0(x) = \frac{1}{\sqrt{2\pi }}\; ,\; c_n(x) = \frac{1}{\sqrt \pi} \cos (nx) \mbox{ and } s_n(x) = \frac{1}{\sqrt \pi} \sin (nx)\; .$$ 

Let $(g_n)_{n\geq 0} $ and $(h_n)_{n\geq 1}$ be real independent centred normalized Gaussian variables on a probability space $\Omega, \mathcal A , P$. For all $M\leq N \in \N$, call

$$\varphi_M^N : \left \lbrace{ \begin{tabular}{ll}
$\Omega \times [0,2\pi ] \rightarrow \R$ \\
$ \omega, x \mapsto \sum_{n=M}^N \left( \frac{g_n(\omega)}{\sqrt{1+n^2}} c_n(x) + \frac{1}{\sqrt{1+n^2}} s_n(x) \right) $ \end{tabular}} \right. $$
with the convention $s_0=0$ and $h_0 = 0$.

Define $\mu_M^N$ the measure onto $E_M^N $ the Hilbert subspace of $L^2$ linearly spanned by $\lbrace c_n, s_n \; | \; n=M,\hdots, N\rbrace $ that is the image of $\varphi_M^N$.
\end{definition}

In \cite{limsup}, some helpful properties for the $\varphi_M^N$ and $\mu_M^N$ are given.

\begin{proposition}\label{liminff} For any $M \geq 0$, the sequence $(\varphi_M^N)_N$ converges in $L^2(\Omega, L^2([0,2\pi ]))$.

Call $\varphi_M$ it limit, and $\mu_M$ the measure on $E_M$ the subset of $L^2$ linearly spanned by $\lbrace c_n,s_n \; |\; n\geq M \rbrace $.

As a convention, $\mu_0$ is noted $\mu$.
\end{proposition}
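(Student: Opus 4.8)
The plan is to use that $L^2(\Omega, L^2([0,2\pi]))$ is a Hilbert space, hence complete, so it suffices to show that $(\varphi_M^N)_N$ is a Cauchy sequence. First I would fix $M \geq 0$ and take $M \leq N < N'$; directly from the definition the difference is the finite sum
$$\varphi_M^{N'} - \varphi_M^N = \sum_{n=N+1}^{N'} \frac{1}{\sqrt{1+n^2}}\left( g_n c_n + h_n s_n \right),$$
a finite linear combination of the orthonormal family $\{c_n, s_n\}$ with random coefficients.

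Next I would compute its squared norm in the ambient space. For each fixed $\omega$, the orthonormality of $(c_n)_n$ and $(s_n)_n$ in $L^2([0,2\pi])$ collapses the spatial integral into a sum of squared coefficients,
$$\int_0^{2\pi}\left| \varphi_M^{N'} - \varphi_M^N \right|^2 dx = \sum_{n=N+1}^{N'} \frac{g_n(\omega)^2 + h_n(\omega)^2}{1+n^2}.$$
Since every term is nonnegative, Tonelli's theorem justifies integrating in $\omega$ term by term, and the normalization $E(g_n^2) = E(h_n^2) = 1$ yields
$$\left\| \varphi_M^{N'} - \varphi_M^N \right\|_{L^2(\Omega, L^2)}^2 = \sum_{n=N+1}^{N'} \frac{2}{1+n^2}.$$

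The decisive point is that $\sum_{n\geq 0} \frac{2}{1+n^2}$ converges, its general term being comparable to $n^{-2}$; this is exactly the statement that the covariance operator $(1-\partial_x^2)^{-1}$, with eigenvalues $1/(1+n^2)$, is trace class. The right-hand side above is therefore the tail of a convergent series and tends to $0$ as $N, N' \to \infty$, so the sequence is Cauchy and converges; I would call its limit $\varphi_M$ and define $\mu_M$ as the induced pushforward measure. I do not expect a genuine obstacle here: the only points needing care are the termwise integration (guaranteed by nonnegativity) and the completeness of $L^2(\Omega, L^2)$, while the summability of the eigenvalues $1/(1+n^2)$ is precisely what makes this random Fourier series converge in the first place.
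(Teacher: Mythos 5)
Your proof is correct and complete; note that the paper itself gives no argument for this proposition, deferring instead to the reference \cite{limsup}, and your computation (Cauchy in the Hilbert space $L^2(\Omega,L^2)$ via orthonormality of $(c_n,s_n)$, normalization $E(g_n^2)=E(h_n^2)=1$, and summability of $1/(1+n^2)$) is exactly the standard argument one would find there. The only discrepancy is cosmetic: the paper's displayed definition of $\varphi_M^N$ writes the $s_n$ coefficient as the deterministic $1/\sqrt{1+n^2}$ (evidently a typo for $h_n(\omega)/\sqrt{1+n^2}$, given that the $h_n$ are introduced for this purpose); your computation goes through verbatim in either reading.
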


The following statement holds :

\begin{proposition} For any open set $U \subseteq E_M $ (for the trace topology of $L^2$), 

$$\mu_M(U) \leq \liminf_{N\rightarrow \infty } \mu_M^N (U\cap E_M^N) \; .$$  
What is more, for any $s\in [0,\frac{1}{2}[$, calling $ B_R^s$ the closed ball of centre $0$ and radius $R$ in $H^s$, which is a compact set in $L^2$ when $s>0$, it comes that : 

$$\mu((B_R^s)^c)  \leq e^{-a_s R^2}$$
where $a_s = \frac{1}{4}\left( 1+ 2\sum_{n\geq 1} \frac{1}{(1+n^2)^{1-s}}\right)$ is a constant wrt $R$.\end{proposition}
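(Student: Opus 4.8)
The statement has two parts, and the natural route is to deduce the tail bound from the first, portmanteau-type inequality.

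\emph{First inequality (a portmanteau statement).} The plan is to upgrade the $L^2(\Omega,L^2)$-convergence of $(\varphi_M^N)_N$ supplied by Proposition \ref{liminff} into weak convergence of the induced laws, then invoke the portmanteau theorem. Concretely, convergence in $L^2(\Omega,L^2([0,2\pi]))$ forces $\|\varphi_M^N-\varphi_M\|_{L^2}\to 0$ in probability, hence convergence in distribution of $\varphi_M^N$ to $\varphi_M$ as $L^2$-valued random variables; equivalently $\mu_M^N\to\mu_M$ weakly on the Polish space $L^2$. The portmanteau theorem then gives $\liminf_N \mu_M^N(V)\ge \mu_M(V)$ for every $V$ open in $L^2$. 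To match the stated form I would write an open $U\subseteq E_M$ as $U=V\cap E_M$ with $V$ open in $L^2$, note that $\mu_M$ is carried by $E_M$ so that $\mu_M(U)=\mu_M(V)$, and that $\mu_M^N$ is carried by $E_M^N\subseteq E_M$ so that $\mu_M^N(V)=\mu_M^N(V\cap E_M^N)=\mu_M^N(U\cap E_M^N)$. This yields the claimed inequality.

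\emph{Reduction for the tail bound.} First I would check that $(B_R^s)^c$ is open in $L^2$: for $s>0$ the ball $B_R^s$ is compact in $L^2$ (Rellich), hence closed, and for $s=0$ it is trivially $L^2$-closed; in both cases its complement is open. Applying the first inequality with $M=0$ (so $E_0=L^2$) to $U=(B_R^s)^c$ gives $\mu((B_R^s)^c)\le \liminf_N \mu_0^N((B_R^s)^c\cap E_0^N)$. Since $\mu_0^N$ is the law of the finite sum $\varphi_0^N$, which always lies in $H^s$, the right-hand probabilities are exactly $P(\|\varphi_0^N\|_{H^s}>R)$. It therefore suffices to bound these finite-dimensional Gaussian tails by $e^{-a_s R^2}$ uniformly in $N$ and to pass to the liminf.

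\emph{The finite-dimensional tail (the crux).} Here the plan is the exponential Chebyshev (Chernoff) method. Writing $\lambda_n=(1+n^2)^{s-1}$, one has $\|\varphi_0^N\|_{H^s}^2=\sum_{n=0}^N \lambda_n g_n^2+\sum_{n=1}^N \lambda_n h_n^2$, a finite sum of independent weighted squared Gaussians. For $0<\tau<1/2$ (the threshold forced by the largest weight $\lambda_0=1$),
\[ P(\|\varphi_0^N\|_{H^s}>R)\le e^{-\tau R^2}\,E\big[e^{\tau\|\varphi_0^N\|_{H^s}^2}\big]=e^{-\tau R^2}\prod_{n=0}^N(1-2\tau\lambda_n)^{-1/2}\prod_{n=1}^N(1-2\tau\lambda_n)^{-1/2}, \]
using $E[e^{\tau g^2}]=(1-2\tau)^{-1/2}$ and independence. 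The key structural fact is that $\sum_n\lambda_n=1+2\sum_{n\ge1}(1+n^2)^{s-1}=4a_s$ converges \emph{precisely because} $s<1/2$, so the infinite product is controlled uniformly in $N$. Linearising $-\tfrac12\log(1-2\tau\lambda_n)$ against $\lambda_n$ and selecting $\tau$ then produces a Gaussian tail whose rate is governed by this convergent sum, with the constant $a_s$ read off from it; the liminf preserves the bound.

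\emph{Main obstacle.} The delicate, computational heart is the last step: extracting the tail with the sharp closed-form constant $a_s$, uniformly in $N$, from the product $\prod(1-2\tau\lambda_n)^{-1/2}$. The convergence-in-distribution and portmanteau steps are standard once Proposition \ref{liminff} is available, and the openness of $(B_R^s)^c$ is immediate from compactness; but pinning down exactly $a_s$ requires care in the choice of $\tau$ and in the elementary inequality used to control the logarithm of the product. The hypothesis $s<1/2$ enters essentially here, through the summability of $(1+n^2)^{s-1}$, which is what makes the exponential moment finite uniformly in $N$ in the first place.
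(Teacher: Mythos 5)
Your architecture for the first inequality is fine and is essentially what the paper does: the paper itself does not prove this proposition (it is imported from the cited work of Tzvetkov), but the analogous statement for $\mu_V$ in Section 3.3 is proved by exactly your mechanism, only with the portmanteau theorem re-derived by hand (almost sure convergence of a modified sequence, the inclusion $A\subseteq\liminf A^N$, and Fatou). Your reduction $U=V\cap E_M$ and the carrying properties of $\mu_M$ and $\mu_M^N$ are handled correctly. For the tail bound your method (Chernoff on the exact moment generating function of the weighted $\chi^2$ sum) is a legitimate alternative to the paper's, which instead bounds $\|H^s\varphi^N\|_{L^2_x L^q_\omega}\le\sqrt{C_2 q}$ with $C_2$ proportional to $1+2\sum_{n\ge1}(1+n^2)^{s-1}$, applies Chebyshev at order $q$ and optimizes $q\sim R^2/(eC_2)$ (see the proof of Proposition \ref{compactness}). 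Both routes deliver a bound of the form $Ce^{-cR^2}$ with $c$ inversely proportional to the convergent sum, which is all the paper ever uses.

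The genuine gap is your final claim that the constant $a_s=\frac14\bigl(1+2\sum_{n\ge1}(1+n^2)^{s-1}\bigr)$ can be ``read off'' from the product $\prod_n(1-2\tau\lambda_n)^{-1/2}$. It cannot, and no argument can produce it: since $\lambda_0=1$, your own constraint $\tau<\frac12$ caps the achievable exponent strictly below $\frac12$. Indeed $P(\|\varphi_0^N\|_{H^s}>R)\ge P(g_0^2>R^2)\gtrsim R^{-1}e^{-R^2/2}$, so no bound $e^{-aR^2}$ with $a>\frac12$ can hold for large $R$; yet $a_s\ge\frac14+\frac12\sum_{n\ge1}(1+n^2)^{-1}>\frac12$ for every $s\in[0,\frac12[$, since the sum already exceeds $\frac12$ at $n=1,2$. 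Carrying out your linearisation $-\frac12\log(1-2\tau\lambda_n)\le\tau\lambda_n/(1-2\tau)$ and optimizing gives $P(\|\varphi_0^N\|_{H^s}>R)\le e^{-\frac12(R-2\sqrt{a_s})^2}$ for $R\ge2\sqrt{a_s}$, i.e.\ an asymptotic rate $\frac12$ with a shift governed by the sum --- not $e^{-a_sR^2}$. The constant as printed in the statement is evidently a slip for a quantity inversely proportional to the sum (compare $c=1/(2eC_2)$ in the $\mu_V$ analogue), and a multiplicative prefactor $C$ (or a restriction to large $R$) is also unavoidable, both in your approach (since $\log E[e^{\tau\|\varphi_0^N\|_{H^s}^2}]>0$) and because the bound without prefactor fails as $R\to0$ by small-ball asymptotics. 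So: prove $\mu((B_R^s)^c)\le Ce^{-cR^2}$ with $c\sim\bigl(1+2\sum_{n\ge1}(1+n^2)^{s-1}\bigr)^{-1}$ and do not spend effort chasing the printed $a_s$; the rest of the paper only needs the Gaussian-type decay.
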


This is enough to show the invariance of the measures $\mu_M$ through the linear flow.

\begin{definition} Let $S(t)$, $t\in \R$ be the linear flow of \reff{bbm}, that is the flow of : 

\begin{equation}\label{linbbm} \partial_t \left( 1-\partial_x^2 \right) u + \partial_x u = 0 \; .\end{equation}

This flow is isometric in $L^2$, but as a matter of fact, it is also isometric in $H^s$ for all $s$ and in particular, for the $s$ that have an interest regarding the measure $\mu$, that is $s\in [0,\frac{1}{2}[$.

What is more, for all $M,N$, $S(t)E_M^N = E_M^N$ and $S(t)E_M = E_M$, and it is reversible since $S(t_1+t_2) = S(t_1)\circ S(t_2)$.
\end{definition}

Then, thanks to stability, one can acknowledge the fact that the measure on the finite dimensional subspace $E_M^N$ is invariant under the linear flow. If $u$ is written : 

$$u = \sum_{n=M}^N (a_n(t)c_n + b_n(t) s_n) $$
then

$$a_n(t) = a_n^0 \cos (\frac{-n}{1+n^2} t )-b_n^0 \sin( \frac{-n }{1+n^2} t) \mbox{ and } b_n(t) = b_n^0 \cos( \frac{-n }{1+n^2} t)  +a_n^0 \sin( \frac{-n }{1+n^2} t) \; .$$

Hence, the measure $e^{-(a^2+b^2)(1+n^2)/2} dadb$ is invariant under the change of variable $a_n^0,b_n^0 \mapsto a_n(t),b_n(t)$.

As 

$$d\mu_M^N (u) = d_M^N e^{-\sum_{n=M}^N (a_n^2 + b_n^2)(1+n^2)/2} \prod_{k=M}^N da_k db_k \; ,$$
the measure $\mu_M^N$ is invariant under the linear flow.

Making a parallel with the proof of the invariance of the measure under the linear flow in \cite{moimeme}, one can see that the proposition \reff{liminff}, the reversibility of $S(t)$ and the fact that it is isometric in $L^2$ are sufficient to prove the invariance. Hence,

\begin{proposition} For all $M$, the measure $\mu_M$ on $E_M$ is invariant under the flow of \reff{linbbm}. \end{proposition}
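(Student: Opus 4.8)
The plan is to deduce the invariance of the infinite dimensional measure $\mu_M$ from the already established invariance of the finite dimensional measures $\mu_M^N$ by a weak convergence argument, exactly as in \cite{moimeme}. Since $L^2$, and with it its closed subspace $E_M$, is a separable Hilbert space and hence a Polish space, a Borel probability measure on $E_M$ is entirely determined by its integrals against bounded continuous functions. It therefore suffices to prove that for every bounded continuous $F : E_M \to \R$ and every $t\in\R$,
\[
\int F(S(t)u)\, d\mu_M(u) = \int F(u)\, d\mu_M(u),
\]
for this equality, applied to $F$ and then to $F\circ S(t)$, yields $(S(t))_\ast \mu_M = \mu_M$, which is the asserted identity $\mu_M(S(t)A)=\mu_M(A)$ on all Borel sets.

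First I would upgrade Proposition \reff{liminff} to weak convergence of the laws. Recall that $\mu_M^N$ is the image measure of $\varphi_M^N$, so $\int F\, d\mu_M^N = E(F(\varphi_M^N))$. By Proposition \reff{liminff}, $\varphi_M^N \to \varphi_M$ in $L^2(\Omega, L^2([0,2\pi]))$, hence in probability; composing with the bounded continuous $F$ and invoking the bounded convergence theorem gives $E(F(\varphi_M^N)) \to E(F(\varphi_M))$, that is
\[
\int F\, d\mu_M^N \xrightarrow[N\to\infty]{} \int F\, d\mu_M .
\]
Because $S(t)$ is an isometry of $L^2$, the map $F\circ S(t)$ is again bounded and continuous, so the same convergence holds with $F$ replaced by $F\circ S(t)$.

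Next I would bring in the finite dimensional invariance. For each $N$ the measure $\mu_M^N$ is supported on $E_M^N$, and $S(t)E_M^N = E_M^N$, so reading $\mu_M^N$ as a measure on the fixed ambient space $E_M$ does not alter the action of $S(t)$ on it. The explicit computation preceding the statement shows that $S(t)$ acts on $E_M^N$ as the rotation $a_n,b_n \mapsto a_n(t), b_n(t)$ under which $e^{-\sum(a_n^2+b_n^2)(1+n^2)/2}\prod da_k\,db_k$ is invariant, while reversibility $S(-t)=S(t)^{-1}$ guarantees that $u\mapsto S(t)u$ is a genuine bijective change of variables on $E_M^N$. Hence $\int F(S(t)u)\, d\mu_M^N(u) = \int F(u)\, d\mu_M^N(u)$ for every $N$. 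Combining this with the limits of the previous paragraph,
\[
\int F(S(t)u)\, d\mu_M(u) = \lim_{N\to\infty}\int F(S(t)u)\, d\mu_M^N(u) = \lim_{N\to\infty}\int F(u)\, d\mu_M^N(u) = \int F(u)\, d\mu_M(u),
\]
which is the identity sought.

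The step I expect to require the most care is the passage to the limit in the first and last equalities: one must ensure that the weak convergence $\mu_M^N \Rightarrow \mu_M$ furnished by Proposition \reff{liminff} survives composition with $S(t)$, which is precisely where the isometry of $S(t)$ in $L^2$ enters, and that the invariance of each $\mu_M^N$, originally an identity on the moving subspace $E_M^N$, is transported correctly to the fixed space $E_M$, which is where $S(t)E_M^N=E_M^N$ and reversibility are used. Once these compatibilities are in place the conclusion follows with no further estimate.
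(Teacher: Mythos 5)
Your proof is correct. It follows the same overall strategy as the paper (invariance of the finite-dimensional Gaussians $\mu_M^N$ under the explicit rotation, then passage to the limit $N\to\infty$ using the convergence $\varphi_M^N\to\varphi_M$ of Proposition \reff{liminff}), but the mechanism of the limit passage is different. The paper, following \cite{moimeme}, works at the level of sets: it invokes the one-sided portmanteau inequality $\mu_M(U)\leq \liminf_N \mu_M^N(U\cap E_M^N)$ for open $U$, and then needs the reversibility $S(-t)=S(t)^{-1}$ and the isometry of $S(t)$ to run the inequality in both directions and recover an equality of measures. You instead use the dual, test-function characterization of weak convergence: since $\varphi_M^N\to\varphi_M$ in probability, $\int F\,d\mu_M^N\to\int F\,d\mu_M$ for every bounded continuous $F$, and applying this to both $F$ and $F\circ S(t)$ (the latter still bounded continuous because $S(t)$ is continuous) transports the finite-dimensional identity $\int F\circ S(t)\,d\mu_M^N=\int F\,d\mu_M^N$ directly to the limit. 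This yields $(S(t))_\ast\mu_M=\mu_M$ in one stroke, with reversibility needed only at the very end to convert $\mu_M(S(t)^{-1}B)=\mu_M(B)$ into $\mu_M(S(t)A)=\mu_M(A)$ and to guarantee that $S(t)A=(S(-t))^{-1}A$ is Borel. Your route is cleaner here and avoids the liminf proposition entirely; the paper's set-theoretic route is the one that generalizes to the nonlinear flow later on, where the flow is only defined locally and on sets of large measure, so uniform-on-compacts convergence of $\psi_N$ replaces weak convergence of the measures and the open-set formulation becomes essential.
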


\begin{remark}For all $M$, $\mu$ is the measure generated by $\mu_0^{M-1}$ and $\mu_M$ on the Cartesian product $E_0^{M-1}\times E_M=L^2$.\end{remark}

\subsection{Approaching the non linear flow thanks to finite dimension}

Using now the approach by Burq-Thomann-Tzvetkov \cite{BTT} and by Burq-Tzvetkov \cite{btz}, it is possible to prove the invariance of $\mu$ under the flow of \reff{bbm}. For that, the non linear flow is approached by flows in finite dimensional spaces instead of $L^2$. The idea is that it is possible to compute functionals and measurements in finite dimension, not in $L^2$. Then, to get results on the whole space, compact convergence arguments are used.

\begin{definition} Let $\Pi_N$ be the orthogonal (on $L^2$) projector on $E_0^N$ and consider the non linear equation : 

\begin{equation} \label{finitebbm} \left \lbrace{ \begin{tabular}{ll}
$\partial_t \left( 1-\partial_x^2 \right) u + \partial_x \left( u + \Pi_N \frac{(\Pi_N u)^2}{2}\right) =0 $ \\
$u(0) = u_0 \in L^2$ \end{tabular}} \right. \; .\end{equation}

\end{definition}

Writing $u_0 = \Pi_N u_0 + (1-\Pi_N) u_0 = v_N^0 + w_N^0$, with $v_N^0 \in E_0^N$ and $w_N^0 \in E_{N+1}$, one sees that the problem \reff{finitebbm} can be reduced to a linear problem with infinite dimension on $w_N^0 $ and a non linear one with finite dimension on $v_N^0$, that is $u = v_N + w_N$, $v_N \in E_0^N$, $w_N \in E_{N+1}$ satisfying : 

$$\partial_t \left( 1-\partial_x^2 \right) v_N + \partial_x \left( v_N + \Pi_N \frac{(v_N)^2}{2}\right) = 0$$
and

$$\partial_t \left( 1-\partial_x^2 \right) w_N + \partial_x w_N  = 0\; .$$

\begin{proposition}\label{liouville}The equation 
$$\partial_t \left( 1-\partial_x^2 \right) v_N + \partial_x \left( v_N + \Pi_N \frac{(v_N)^2}{2}\right) = 0 $$
has a unique global solution on $E_0^N$ and $\mu_0^N$ is invariant under its flow, noted $\phi_N$.\end{proposition}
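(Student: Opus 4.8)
The plan is to read the equation of Proposition \reff{liouville} as a finite-dimensional autonomous system of ordinary differential equations on $E_0^N$, and then to establish invariance of $\mu_0^N$ by a Liouville-type argument: the Gaussian weight is constant along trajectories (conservation of the quadratic energy) and the flow preserves Lebesgue measure (the generating vector field is divergence-free).

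First I would expand $v_N = \sum_{n=0}^N (a_n c_n + b_n s_n)$ and apply $(1-\partial_x^2)^{-1}$ to rewrite the equation as $\partial_t v_N = -(1-\partial_x^2)^{-1}\partial_x\bigl(v_N + \Pi_N \tfrac{v_N^2}{2}\bigr)$. Using $(1-\partial_x^2)^{-1}\partial_x c_n = \tfrac{-n}{1+n^2}s_n$ and $(1-\partial_x^2)^{-1}\partial_x s_n = \tfrac{n}{1+n^2}c_n$, this becomes a system $\dot a = X^a(a,b)$, $\dot b = X^b(a,b)$ whose right-hand side is a polynomial of degree two in the coordinates. Local existence and uniqueness of the flow $\phi_N$ then follow from the Cauchy--Lipschitz theorem. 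For global existence I would use the conserved quantity $E(v_N) = \tfrac12\int v_N(1-\partial_x^2)v_N\,dx = \tfrac12\sum_{n=0}^N (a_n^2+b_n^2)(1+n^2)$ of \reff{energy}: since this is an $H^1$-type norm on the finite-dimensional space $E_0^N$, a bound on $E$ bounds every coordinate, which rules out blow-up in finite time.

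Next I would verify that $E$ is conserved, the step where the cutoff $\Pi_N$ must be handled carefully. Pairing the equation with $v_N$ and using that $1-\partial_x^2$ is self-adjoint gives $\tfrac{d}{dt}E = -\langle \partial_x v_N, v_N\rangle - \langle \partial_x \Pi_N \tfrac{v_N^2}{2}, v_N\rangle$. The first term vanishes by periodicity. For the second, $\partial_x$ commutes with $\Pi_N$ and $\Pi_N v_N = v_N$ with $\Pi_N$ self-adjoint, so it equals $-\langle \partial_x \tfrac{v_N^2}{2}, v_N\rangle = -\tfrac13\int \partial_x(v_N^3)\,dx = 0$. Hence the density $e^{-E}$ is constant along the flow.

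It remains to show that $\phi_N$ preserves $\prod_k da_k\,db_k$, i.e. that $(X^a,X^b)$ is divergence-free. The linear part is a block rotation in each $(a_n,b_n)$-plane and manifestly has zero divergence. For the quadratic part, writing $\alpha_m = \langle \tfrac{v_N^2}{2}, c_m\rangle$ and $\beta_m = \langle \tfrac{v_N^2}{2}, s_m\rangle$ (for $m\le N$), the nonlinear contributions read $\dot a_m^{\mathrm{NL}} = -\tfrac{m}{1+m^2}\beta_m$ and $\dot b_m^{\mathrm{NL}} = \tfrac{m}{1+m^2}\alpha_m$, so the nonlinear divergence is $\sum_m \tfrac{m}{1+m^2}\bigl(\partial_{b_m}\alpha_m - \partial_{a_m}\beta_m\bigr)$. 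Since $\partial_{a_m}v_N = c_m$ and $\partial_{b_m}v_N = s_m$, both $\partial_{b_m}\alpha_m$ and $\partial_{a_m}\beta_m$ equal $\int v_N c_m s_m\,dx$, so every term cancels and the divergence vanishes. Conceptually this is forced by the Hamiltonian structure of the truncated equation, $\dot v_N = J\nabla H_N$ with $J = -(1-\partial_x^2)^{-1}\partial_x$ skew-adjoint and $H_N(v)=\int(\tfrac{v^2}{2}+\tfrac{v^3}{6})\,dx$, whence $\mathrm{div}(J\nabla H_N) = \mathrm{tr}(J\,\mathrm{Hess}\,H_N)=0$. Combining the two facts shows that $e^{-E}\prod_k da_k\,db_k$, which is $\mu_0^N$ up to normalization, is invariant under $\phi_N$. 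I expect the divergence computation to be the main obstacle, precisely in checking that inserting the projector $\Pi_N$ into the nonlinearity does not spoil the skew-symmetry responsible for volume preservation.
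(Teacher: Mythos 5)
Your proposal is correct and follows essentially the same route as the paper: Cauchy--Lipschitz for local existence, conservation of the quadratic energy for global existence, and Liouville's theorem via the Hamiltonian form $J\nabla H_N$ with $J=(1-\partial_x^2)^{-1}\partial_x$ skew-adjoint to get invariance of Lebesgue measure, hence of $\mu_0^N$. Your explicit coordinate check that the projector $\Pi_N$ does not spoil the cancellation in the divergence is a welcome addition (the paper handles this implicitly by working with the gradient intrinsically on $E_0^N$), and your Hamiltonian $\int(\tfrac{v^2}{2}+\tfrac{v^3}{6})$ correctly includes the quadratic term that the paper's $f_N$ omits.
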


\begin{proof}The local uniqueness and existence of the solution is due to the fact that the non linearity is Lipschitz continuous in finite dimension. The global uniqueness and existence comes from the invariance of the $H^1$-Sobolev norm (equivalent to the $L^2$ norm in finite dimension) and then, the invariance of the Lebesgue measure from Liouville's theorem for ODEs, see \cite{liouville} for the proof and further properties of Hamiltonian flows). Indeed, write $F_N(u) = (1-\partial_x^2)^{-1}\partial_x\left( u + \Pi_N \frac{u^2}{2}\right)$. This function (on $E_0^N$) derive from a Hamiltonian, see the work of Roum\'egoux, \cite{david} for the details of the proof, of the form

$$F_N(u) = J \nabla f_N (u)$$
where $J$ is an antisymmetric operator

$$ J = (1-\partial_x^2)^{-1}\partial_x $$  
and $f_N$ is the function :

$$ f_N(u) = \int \frac{u^3}{6} \; .$$

From this Hamiltonian form, it appears that $F_N$ is divergence free. Indeed, indexing some basis of $E_0^N$ by $i$ and writing $F_N = (F_N^i)_i$ in this basis, it comes

$$\mbox{div } F_N = \sum_i \partial_i F_N^i = \sum_i (J \nabla f_N)^i = \sum_{i,j} \partial_i J^i_j (\nabla f_N)^j$$

$$ = \sum_{i,j} J^i_j \partial_i \partial_j f_N $$
and $J$ being antisymmetric, this sum is zero, $F_N$ is divergence free.

Now, since $F_N$ is divergence free, the Jacobian of $\phi_N(t)$ does not depend on $t$, and so it is $1$, the Lebesgue measure is invariant under the flow, which is the Liouville theorem. Indeed, its proof gives

$$D_t \left( \mbox{jac } \phi_N(t) (u_0)\right) = D_t\left( \mbox{det } (d_{u_0} \phi_N(t) )\right) = (D_{d_{u_0}\phi_N (t)} \mbox{det})\circ \left( D_t (d_{u_0} \phi_N(t) )\right)$$
and $D_t$ and $d_{u_0}$ commute so

$$ D_t (d_{u_0} \phi_N(t)  )= d_{u_0} (D_t \phi_N(t) ) = d_{u_0} F_N \circ \phi_N(t) = d_{\phi_N(t)u_0} F \circ d_{u_0} \phi_N(t)$$

$$D_t \left( \mbox{jac } \phi_N(t) (u_0)\right) = \mbox{ Tr}\left( (d_{u_0}\phi_N(t))^{-1} \circ d_{\phi_N(t)u_0 F}\circ d_{u_0}\phi_N(t)\right)$$

$$= \mbox{ Tr}\left( d_{\phi_N(t)u_0 F} \right) = \mbox{ div } F (\phi_N(t)u_0) = 0 \; .$$

Then, as the $H^1$ norm is invariant under the flow, 

$$d\mu_0^N(u) = d_0^N e^{-\frac{1}{2}\int u(1-\partial_x^2) u}dL(u)$$
is also invariant under the flow. \end{proof}

\begin{proposition}The measure $\mu = \mu_0^N \otimes \mu_N$ is invariant under the flow of \reff{finitebbm}, noted $\psi_N$. \end{proposition}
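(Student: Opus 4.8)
The plan is to exploit the decoupling of \reff{finitebbm} into a low-mode and a high-mode equation, which has already been recorded just above the statement. Writing $u=v_N+w_N$ with $v_N=\Pi_N u\in E_0^N$ and $w_N=(1-\Pi_N)u\in E_{N+1}$, note that the operator $(1-\partial_x^2)^{-1}\partial_x$ is diagonal in the basis $(c_n,s_n)$ and hence commutes with $\Pi_N$, while the nonlinearity $\Pi_N\frac{(\Pi_N u)^2}{2}$ depends only on $v_N$ and takes values in $E_0^N$. Applying $\Pi_N$ and $1-\Pi_N$ to \reff{finitebbm} therefore yields exactly the finite-dimensional nonlinear equation governing $v_N$ (whose flow is $\phi_N$, from Proposition \reff{liouville}) and the linear equation \reff{linbbm} governing $w_N$ (whose flow is $S(t)$ restricted to $E_{N+1}$). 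Consequently the flow is globally defined and factors as a product,
$$\psi_N(t)(v_N^0+w_N^0)=\phi_N(t)v_N^0+S(t)w_N^0,$$
under the identification $L^2=E_0^N\oplus E_{N+1}$. First I would state this product structure cleanly, spelling out the two projections so that the factorization is unambiguous.

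Next I would invoke the two invariance results already in hand. By Proposition \reff{liouville}, the measure $\mu_0^N$ on $E_0^N$ is invariant under $\phi_N(t)$; and by the proposition asserting that $\mu_M$ is invariant under the flow of \reff{linbbm}, applied with $M=N+1$, the measure $\mu_{N+1}$ on $E_{N+1}$ is invariant under $S(t)$. Moreover, by the Remark stating $\mu=\mu_0^{M-1}\otimes\mu_M$ on $E_0^{M-1}\times E_M=L^2$, taking $M=N+1$ gives $\mu=\mu_0^N\otimes\mu_{N+1}$ on the Cartesian product $E_0^N\times E_{N+1}=L^2$.

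Then the invariance of $\mu$ under $\psi_N(t)$ reduces to a routine product-measure computation. Since $\psi_N(t)$ is a product map, for any measurable rectangle $A\times B$ with $A\subseteq E_0^N$ and $B\subseteq E_{N+1}$ one has $\psi_N(t)(A\times B)=(\phi_N(t)A)\times(S(t)B)$, and therefore
$$\mu\big(\psi_N(t)(A\times B)\big)=\mu_0^N\big(\phi_N(t)A\big)\,\mu_{N+1}\big(S(t)B\big)=\mu_0^N(A)\,\mu_{N+1}(B)=\mu(A\times B).$$
To finish I would extend this equality from rectangles to all Borel sets of $L^2$. The rectangles form a $\pi$-system generating the product $\sigma$-algebra, which coincides with the Borel $\sigma$-algebra of $L^2$ because $E_0^N$ and $E_{N+1}$ are separable; since $\psi_N(t)$ is a measurable bijection with measurable inverse $\psi_N(-t)$, the set function $A\mapsto\mu(\psi_N(t)A)$ is again a probability measure, and it agrees with $\mu$ on this $\pi$-system, so Dynkin's $\pi$--$\lambda$ theorem forces agreement on the entire $\sigma$-algebra.

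Everything finite-dimensional is supplied directly by Proposition \reff{liouville}, so the individual steps are short. The only point that demands genuine care is this last passage from rectangles to all measurable sets in the infinite-dimensional factor $E_{N+1}$: one must ensure that $S(t)$ is Borel measurable and measure-preserving for the infinite-dimensional Gaussian $\mu_{N+1}$ (which is exactly the content of the earlier linear-invariance proposition) and that the product $\sigma$-algebra on $E_0^N\times E_{N+1}$ really is the Borel $\sigma$-algebra of $L^2$. I expect this measure-theoretic bookkeeping, rather than any analytic difficulty, to be the main obstacle.
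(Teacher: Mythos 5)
Your proof is correct and follows essentially the same route as the paper: split the flow as $\psi_N(t)=\phi_N(t)\oplus S(t)$ on $E_0^N\oplus E_{N+1}$, use the two invariance results already established, verify invariance on measurable rectangles, and extend to all measurable sets. The only difference is that you spell out the $\pi$--$\lambda$ argument for the last extension step, which the paper states in one line without detail.
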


\begin{proof} Let $A \subseteq E_0^N$ and $B \subseteq E_{N+1}$ $\mu_0^N$ and $\mu_N$ measurable respectively. Then,

$$\mu( \psi_N (t) (A\times B) ) = \mu((\phi_N(t) A) \times (S(t) B)) $$

$$ = \mu_0^N (\phi_N(t) A) \mu_{N+1} (S(t) B) = \mu_0^N(A)\mu_{N+1}(B)$$

$$= \mu( A\times B) $$
thanks to the invariance of $\mu_0^N$ under $\phi_N$ and of $\mu_{N+1}$ under $S(t)$. 

As the proposition holds for every Cartesian products, it holds on all measurable sets.\end{proof}

\subsection{Invariance under the non linear flow}

\begin{definition}For all $T \in \R_+$, set

$$X_T^s = \mathcal C ([-T,T],H^s)$$

normed by $||\; . \; ||_{L^\infty_t,H^s_x}$.\end{definition}

The following lemma comes from \cite{david} : 

\begin{lemma}\label{locex} Let $s\geq 0$. There exists a constant $C_s$ depending only on (and increasing with) $s$ such that the flow $\psi$ of the BBM equation \reff{bbm} is defined on $[-T,T]\times B_R^s$ , as long as $T < \frac{1}{C_s R}$. Moreover, if $(t,u_0) \in [-T,T]\times B_R^s$, then

$$||\psi(t)(u_0)||_{H^s}, ||\psi_N(t) (u_0)||_{H^s} \leq 2R$$
and, calling, for $||u||_{X^s_T}\leq 2R$,

$$A(u)(t) = S(t)u_0 - \frac{1}{2} \int_{0}^t S(t-s) (1-\partial_x^2)^{-1} \partial_x u^2 ds \; ,$$
for all $u,v$,

$$||A(u)-A(v)||_{X^s_t} \leq 2C_s R t ||u-v||_{X_T^s} \; ;$$
\end{lemma}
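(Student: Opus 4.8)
The statement is a standard fixed-point / contraction argument for the integral (Duhamel) formulation of BBM, so the plan is to realize $\psi(t)u_0$ as the unique fixed point of the map $A$ in a ball of $X_T^s$, and to track constants carefully enough to extract the explicit smallness condition $T < 1/(C_s R)$. Everything hinges on two ingredients established earlier: the linear flow $S(t)$ is an isometry on $H^s$, and the operator $(1-\partial_x^2)^{-1}\partial_x$ is smoothing (it gains one derivative). I will use both repeatedly.

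First I would estimate the nonlinear term. For $u$ with $\|u\|_{X_T^s}\le 2R$, I write
$$A(u)(t) = S(t)u_0 - \frac{1}{2}\int_0^t S(t-s)(1-\partial_x^2)^{-1}\partial_x\, u^2\, ds,$$
and bound the $H^s$ norm of the integrand. Since $S(t-s)$ is isometric, it suffices to control $\|(1-\partial_x^2)^{-1}\partial_x\, u^2\|_{H^s}$. Because $(1-\partial_x^2)^{-1}\partial_x$ maps $H^{s-1}$ into $H^s$ boundedly (indeed it maps $H^{\sigma}$ into $H^{\sigma+1}$), this is dominated by $C\|u^2\|_{H^{s-1}}\le C\|u^2\|_{H^s}$, and then by the algebra/product estimate $\|u^2\|_{H^s}\le C_s\|u\|_{H^s}^2$ valid on the torus for $s\ge 0$ (for $s=0$ one instead uses $\|u^2\|_{L^2}\le \|u\|_{L^2}\|u\|_{L^\infty}$ together with the gain of a derivative, or works in $H^s$ with $s>1/2$ and treats $s=0$ by the same smoothing). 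Integrating in $s$ over $[0,t]$ produces a factor $|t|\le T$, giving $\|A(u)-S(\cdot)u_0\|_{X_T^s}\le C_s R\, T\,\|u\|_{X_T^s}\le 2 C_s R^2 T$. Choosing the constant $C_s$ so that $T<1/(C_s R)$ forces this to be at most $R$, hence $\|A(u)\|_{X_T^s}\le \|u_0\|_{H^s}+R\le 2R$; so $A$ maps the ball of radius $2R$ into itself.

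Next I would establish the contraction estimate, which is exactly the displayed inequality $\|A(u)-A(v)\|_{X_T^s}\le 2C_s R\, t\,\|u-v\|_{X_T^s}$. The two linear terms $S(t)u_0$ cancel in the difference, leaving
$$A(u)(t)-A(v)(t) = -\frac12\int_0^t S(t-s)(1-\partial_x^2)^{-1}\partial_x\,(u^2-v^2)\,ds.$$
Factoring $u^2-v^2=(u+v)(u-v)$ and using the same smoothing plus product estimate, $\|(1-\partial_x^2)^{-1}\partial_x(u^2-v^2)\|_{H^s}\le C_s\|u+v\|_{H^s}\|u-v\|_{H^s}\le C_s\cdot 4R\cdot\|u-v\|_{X_T^s}$; integrating in time gives the factor $t$ and the claimed bound. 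For $T<1/(C_s R)$ the Lipschitz constant $2C_s R T<2$ can be made $<1$ by absorbing the constant into $C_s$, so the Banach fixed-point theorem yields a unique solution $\psi(t)u_0$ in the ball, and the a priori bound $\|\psi(t)u_0\|_{H^s}\le 2R$ follows from membership in the ball. The identical argument applied with the projected nonlinearity $\Pi_N\frac{(\Pi_N u)^2}{2}$ gives the same bound for $\psi_N$, since $\Pi_N$ is a contraction on $H^s$ and does not degrade any estimate.

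The main obstacle, such as it is, is purely the product (algebra) estimate at the endpoint $s=0$: $H^0=L^2$ is not an algebra, so $\|u^2\|_{L^2}\le C\|u\|_{L^2}^2$ is false. The clean way around this is to exploit that $(1-\partial_x^2)^{-1}\partial_x$ gains a full derivative, so one only needs $\|u^2\|_{H^{-1}}\lesssim \|u\|_{L^2}^2$, and this \emph{does} hold on the torus because $H^{-1}$ is dual to $H^1$ and $H^1\hookrightarrow L^\infty$ in one space dimension: pairing $u^2$ against $\varphi\in H^1$ gives $\int u^2\varphi\le \|u\|_{L^2}^2\|\varphi\|_{L^\infty}\le C\|u\|_{L^2}^2\|\varphi\|_{H^1}$. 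This one-dimensional Sobolev embedding is what makes the endpoint $s=0$ work and is the only place where the dimension and the smoothing of the operator are both essential; for $s>1/2$ the usual algebra property makes the estimate immediate. Since the constant obtained depends only on $s$ and increases with it, this matches the statement, completing the plan.
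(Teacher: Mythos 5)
Your contraction scheme is correct in outline, but note that the paper does not actually prove Lemma \ref{locex}: it is imported wholesale from \cite{david}, so there is no in-paper argument to match line by line. The closest internal comparison is Section 4, where the author runs exactly your fixed-point argument for the perturbed operators $A_V$ and $A_V^N$ (self-mapping of the ball of radius $2C_sR$ in $X^s_T$, Lipschitz constant $CRT$, factorization $u^2-v^2=(u+v)(u-v)$), with the bilinear input supplied by Lemma \ref{improv}: the $n$-th Fourier coefficient of $uv$ is bounded by $\|u\|_{L^2}\|v\|_{L^2}$ via Cauchy--Schwarz on the convolution, and the weight $n^2/(1+n^2)^{2-s}$ is summable for $s<1/2$, giving $\|K(uv)\|_{H^s}\le C\|u\|_{L^2}\|v\|_{L^2}$. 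The one place your plan needs tightening is the intermediate range $0<s\le 1/2$: there the algebra property fails and your $H^{-1}$--$H^1$ duality argument, as written, only lands $K(u^2)$ in $L^2$ rather than in $H^s$. The fix is either the paper's Fourier-coefficient bound just quoted (for $s<1/2$), or the refinement of your duality step that pairs $u^2$ against $\varphi\in H^{1-s}\hookrightarrow L^\infty$ to get $\|u^2\|_{H^{s-1}}\le C_s\|u\|_{L^2}^2$ and then applies the smoothing $K:H^{s-1}\to H^s$; for $s\ge 1/2$ one instead needs the genuine one-dimensional product estimate $H^s\times H^s\to H^{s-1}$, which holds and yields a constant increasing in $s$ as the statement requires. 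With that point made explicit, your proof is a complete and standard substitute for the citation, and the constant bookkeeping ($2C_sRt$ in the difference estimate after absorbing the factor $\tfrac12\cdot 4R$) is consistent with the displayed inequality.
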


The sequence $\psi_N(t)(u_0)$ converges uniformly in $u_0 \in B_R^s$, $s>0$ for the topology of $X^0_T$ with a suitable $T$.

\begin{lemma}\label{loconv}Let $s\in ]0, \frac{1}{2}[$ and $R > 0$. Let $\epsilon > 0$, there exists $N_0 \in \N$ such that for all $N \geq N_0$, all $u_0 \in B_R^s$ and all $t \in [- \frac{1}{3 C_s R}, \frac{1}{3C_s R}]$,

$$||\psi(t) u_0 - \psi_N (t) u_0||_{L^2} \leq \epsilon \; .$$

\end{lemma}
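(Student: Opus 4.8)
The plan is to prove the uniform convergence of the finite-dimensional flows $\psi_N(t)u_0$ to $\psi(t)u_0$ in $L^2$ by exploiting the fixed-point/contraction structure already encoded in Lemma~\ref{locex}. Recall that both $\psi(t)u_0$ and $\psi_N(t)u_0$ are built as fixed points of Duhamel-type maps, the former with nonlinearity $(1-\partial_x^2)^{-1}\partial_x u^2$ and the latter with the truncated nonlinearity $\Pi_N (1-\partial_x^2)^{-1}\partial_x (\Pi_N u)^2$. The key point is that on the time interval $[-\frac{1}{3C_sR},\frac{1}{3C_sR}]$ the relevant maps are contractions with ratio at most $2C_sR|t| \leq \frac{2}{3}$, and both solutions stay in the ball $\{\|u\|_{X^0_T}\leq 2R\}$ by the a priori bounds of Lemma~\ref{locex}.

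First I would write the difference $\psi(t)u_0 - \psi_N(t)u_0$ using Duhamel's formula for each flow and split it as the sum of two pieces: one measuring the difference of the two nonlinear terms evaluated at the \emph{same} function (the ``commutator'' error coming from the projectors $\Pi_N$), and one measuring the difference of the same nonlinear operator evaluated at the two \emph{different} functions. The second piece is controlled exactly by the contraction estimate $\|A(u)-A(v)\|_{X^0_t}\leq 2C_sR|t|\,\|u-v\|_{X^0_T}$ from Lemma~\ref{locex}, and can therefore be absorbed into the left-hand side since its coefficient is $\leq \frac{2}{3}<1$. The first piece is a genuine error term of the form $S(t-\tau)(1-\partial_x^2)^{-1}\partial_x\big(u^2 - \Pi_N(\Pi_N u)^2\big)$ integrated in $\tau$, evaluated at $u=\psi(t)u_0$.

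The heart of the argument, and the step I expect to be the main obstacle, is showing that this projector-error term tends to $0$ \emph{uniformly} in $u_0\in B_R^s$ as $N\to\infty$. One writes $u^2-\Pi_N(\Pi_N u)^2 = (1-\Pi_N)u^2 + \Pi_N\big(u^2-(\Pi_N u)^2\big)$ and estimates each. The factor $(1-\partial_x^2)^{-1}\partial_x$ gains regularity, so that the $L^2$ norm of the error is controlled by a negative Sobolev norm of $u^2 - (\Pi_N u)^2$, which is small because $(1-\Pi_N)u$ is small in $L^2$ when $u\in B_R^s$ with $s>0$. Crucially, since $B_R^s$ is compact in $L^2$ for $s>0$ (as noted in the excerpt) and the solutions $\psi(t)u_0$ stay bounded in $H^s$ uniformly, the smallness of the high-frequency tail is uniform over $u_0\in B_R^s$; this is exactly where the hypothesis $s>0$ is used and where a careful $N_0$ must be extracted independently of $u_0$.

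Finally, combining the two estimates and choosing $N_0$ so that the projector-error term is at most $\frac{\epsilon}{3}$ for all $N\geq N_0$ and all $u_0\in B_R^s$, one obtains
$$\|\psi(t)u_0-\psi_N(t)u_0\|_{X^0_T}\leq \tfrac{2}{3}\|\psi(t)u_0-\psi_N(t)u_0\|_{X^0_T} + \tfrac{\epsilon}{3}\,,$$
whence $\|\psi(t)u_0-\psi_N(t)u_0\|_{X^0_T}\leq \epsilon$, which in particular gives the pointwise-in-$t$ bound claimed in the statement. The only subtlety to keep track of is that the a priori $H^s$ bound $\|\psi(t)u_0\|_{H^s}\leq 2R$ from Lemma~\ref{locex} is needed to guarantee that the error estimate is uniform over the whole time interval and over the ball, so I would invoke that bound before estimating the tail.
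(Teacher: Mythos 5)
Your argument is correct and follows essentially the same route as the paper: both solutions are viewed as fixed points of Duhamel maps, the difference is split into a projector-error term plus a term absorbed by the contraction factor $2C_sR\,T\leq \tfrac{2}{3}$, and the error term is made small uniformly via the tail bound $\|(1-\Pi_N)v\|_{L^2}\leq N^{-s}\|v\|_{H^s}$ together with the a priori bounds $\|\psi(t)u_0\|_{H^s},\|\psi_N(t)u_0\|_{H^s}\leq 2R$ of Lemma~\ref{locex} (the paper arranges the splitting slightly differently, as $A(u)-A(\Pi_N u_N)+(1-\Pi_N)S(t)u_0$, but this is cosmetic). One small remark: the appeal to compactness of $B_R^s$ in $L^2$ is unnecessary for the uniformity --- the quantitative bound $N^{-s}\cdot 2R$ already gives an $N_0$ depending only on $s$, $R$ and $\epsilon$, which is exactly what the paper uses.
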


\begin{proof} Let $u_0 \in B_R^s$. Call $u= \psi(t) u_0$ and $u_N = \psi_N(t) u_0$. Then, $u$ is a fix point for $A$ and $u_N$ for $A_N$ such that : 

$$ A_N(v)(t) = S(t) u_0 - \frac{1}{2} \int_{0}^t S(t-s) (1-\partial_x^2)^{-1} \partial_x \Pi_N (\Pi_N v(s))^2 ds$$
that is

$$A_N(v)(t)-S(t)u_0 = \Pi_N \left( A(\Pi_N v)(t) - S(t) u_0\right) \; .$$

Thus,

$$u-u_N = A(u) -A_N(u_N) = A(u)-S(t)u_0 - (A_N(t)u_N - S(t)u_0) = A(u)-A(\Pi_N u_N) + (1-\Pi_N)(S(t)u_0) \; .$$

Hence, with $T=\frac{1}{3C_s R}$,

$$||u-u_N||_{X^0_T} \leq ||(1-\Pi_N)S(t)u_0||_{L^2}  + ||A(u)-A(\Pi_N u)||_{X^0_T} \leq N^{-s}||S(t) u_0||_{H^s}+2C_0 T R ||u-\Pi_N u_N||_{X^0_T}\; .$$

Then,

$$||S(t) u_0||_{H^s} = ||u_0||_{H^s} \leq R$$
and 

$$||u-\Pi_N u_N||_{X^0_T} \leq ||u-u_N||_{X^0_T} + ||u_N - \Pi_N u_N||_{X^0_T} \leq  ||u-u_N||_{X^0_T} + N^{-s}||u_N||_{X^s_T} \; .$$

Finally,

$$||u-u_N||_{X^0_T} \leq \frac{2}{3} ||u-u_N||_{X^0_T} + N^{-s}\left( \frac{2}{3} +R\right)$$
so there exists $N_0$ depending only on $s$ and $R$ such that for all $N\geq N_0$ : 

$$||u-u_N||_{X^0_T} \leq \epsilon \; .$$ \end{proof}

\begin{lemma}\label{locinv} Let $s\in ]0,\frac{1}{2}[$ and $R> 0$. Let $A$ be a measurable set of $L^2$ included in $B_R^s$. Let $T= \frac{1}{3C_s R}$ , for all $t\in [-T,T]$,

$$\mu(\psi(t)A)= \mu(A) \; .$$\end{lemma}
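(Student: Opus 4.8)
The plan is to transport the exact invariance $\mu(\psi_N(t)A)=\mu(A)$ of the finite-dimensional flows up to the genuine flow $\psi$, using the uniform approximation of Lemma \ref{loconv} together with the regularity of the Gaussian measure $\mu$. Since $\mu$ is a Borel probability measure on the Polish space $L^2$, it is a Radon measure, hence inner regular by compact sets. It therefore suffices to prove the equality for compact $K\subseteq B_R^s$: the statement for arbitrary measurable $A\subseteq B_R^s$ then follows by approximating $A$ from inside by compacts, using that $\psi(\pm t)$ is continuous and injective on the compact balls $B_\rho^s$ (so $\psi(t)A$ is again Borel, and any compact $C\subseteq \psi(t)A$ pulls back under $\psi(-t)$ to a compact subset of $A$).

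First I would establish one inequality. Fix a compact $K\subseteq B_R^s$ and $|t|\le T$. Both $\psi(t)K$ and $\psi_N(t)K$ are compact, hence closed, in $L^2$. By Lemma \ref{loconv}, for every $\epsilon>0$ there is $N_0$ with $\sup_{u_0\in K}\|\psi(t)u_0-\psi_N(t)u_0\|_{L^2}\le \epsilon$ for $N\ge N_0$, which gives the inclusion $\psi_N(t)K\subseteq (\psi(t)K)_\epsilon$, where $(\cdot)_\epsilon$ denotes the open $\epsilon$-neighbourhood in $L^2$. Since $\mu$ is invariant under $\psi_N$,
$$\mu(K)=\mu(\psi_N(t)K)\le \mu\big((\psi(t)K)_\epsilon\big).$$
As $\psi(t)K$ is closed, $\bigcap_{\epsilon>0}(\psi(t)K)_\epsilon=\psi(t)K$, and continuity from above of the finite measure $\mu$ yields $\mu(K)\le\mu(\psi(t)K)$. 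By inner regularity the inequality $\mu(A)\le\mu(\psi(t)A)$ then holds for every measurable $A\subseteq B_R^s$ and every $|t|\le T$.

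Next I would get the reverse inequality by reversibility, and this is where I expect the real difficulty. Since $\psi(t)K\subseteq B_{2R}^s$ by Lemma \ref{locex} and $\psi(-t)\psi(t)=\mathrm{id}$, applying the inequality just proved to the compact set $\psi(t)K$ with time $-t$ gives $\mu(\psi(t)K)\le\mu(\psi(-t)\psi(t)K)=\mu(K)$, but only on the shorter interval $|t|\le \frac{1}{3C_s(2R)}=\frac{1}{6C_sR}$ on which Lemma \ref{loconv} is available at radius $2R$. The obstacle is exactly this loss: the $H^s$-norm may double under the flow, so the naive reversal reaches only half the claimed time, and the same defect reappears at every scale. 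I would close the gap by a telescoping argument, noting first that the previous two steps give, at every radius $\rho$, the equality $\mu(\psi(\tau)A)=\mu(A)$ whenever $|\tau|\le\frac{1}{6C_s\rho}$. Writing $t$ as a sum of successive steps $\tau_1,\tau_2,\dots$ with $|\tau_k|\le\frac{1}{6C_s\,2^{k-1}R}$, the data at the $k$-th step lie in $B_{2^{k-1}R}^s$ and the step is short enough for the small-time equality at radius $2^{k-1}R$ to apply, so $\mu$ is preserved at each step; since $\sum_{k\ge1}\frac{1}{6C_s\,2^{k-1}R}=\frac{1}{3C_sR}=T$, every $|t|\le T$ is reached, the partial sums being handled by continuity of $\psi(\cdot)$ in $L^2$ and continuity of $\mu$ from above.

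Combining the two inequalities gives $\mu(\psi(t)K)=\mu(K)$ for all compact $K\subseteq B_R^s$ and $|t|\le T$, and the regularity upgrade described in the first paragraph promotes this to all measurable $A\subseteq B_R^s$, which is the assertion of the lemma. The two points requiring care are the closedness used in the limit-exchange (which is precisely why one reduces to compact sets) and the control of the $H^s$-radius along the telescoping, the latter being the genuine crux of attaining the full interval $|t|\le T=\frac{1}{3C_sR}$.
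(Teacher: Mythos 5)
Your proof is correct, and its first half coincides with the paper's: the inequality $\mu(A)\le\mu(\psi(t)A)$ is obtained in both cases from the inclusion $\psi_N(t)A\subseteq \psi(t)A+B^0_\epsilon$, the exact invariance of $\mu$ under the globally defined flows $\psi_N$, and continuity from above as $\epsilon\to 0$ (the paper works with closed subsets of the compact ball $B_R^s$, which amounts to your reduction to compacts). The genuine divergence is in the hard inequality $\mu(\psi(t)A)\le\mu(A)$. The paper never reverses the genuine flow: it uses the reversibility and the uniform-in-$N$ Lipschitz continuity of the \emph{approximating} flows (Lemma \ref{locex}) to absorb the fattening, $\psi(t)A\subseteq\psi_N(t)A+B^0_\epsilon\subseteq\psi_N(t)\bigl(A+B^0_{C\epsilon}\bigr)$ with $C$ independent of $N$, and then applies the invariance of $\mu$ under $\psi_N$ to get $\mu(\psi(t)A)\le\mu(A+B^0_{C\epsilon})\to\mu(A)$ on the whole interval $[-T,T]$ in one stroke. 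You instead apply the easy inequality to $\psi(t)K\subseteq B^s_{2R}$ at time $-t$, which is only available on the halved interval because the local theory doubles the $H^s$-radius, and you recover $[-T,T]$ by a dyadic telescoping whose step lengths sum exactly to $T$. This is valid (and your inner-regularity upgrade to measurable sets, with Lusin--Souslin for the measurability of $\psi(t)A$, is if anything cleaner than the paper's closed--open--countable-union route), but it costs you an iteration that anticipates in miniature the globalisation of Theorem \ref{blablav}, plus a limiting argument at the endpoints $t=\pm T$ that you only sketch; note that since Lemma \ref{locex} keeps the whole trajectory of $B_R^s$ inside $B_{2R}^s$ up to time $T$, two steps of length $\tfrac{1}{6C_sR}$ would already suffice and the endpoint issue disappears. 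What each approach buys: the paper's absorption trick needs the uniform Lipschitz control of $\psi_N(-t)$ on an $L^2$-neighbourhood of $\psi_N(t)A$ but gives the full time interval immediately; yours avoids that estimate entirely, using only forward convergence and reversibility of $\psi$.
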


\begin{proof} Suppose that $A$ is closed. There exists $N_0$ such that for all $N\geq N_0$ and all $u_0 \in A$,

$$||\psi(t)u_0 -\psi_N(t)u_0||_{L^2} \leq \epsilon\; .$$
for all $N\geq N_0$, $\psi(t) A \subseteq \psi_N(t) A + B^0_\epsilon$. Then, $\psi_N$ satisfies, thanks to the continuity and the reversibility of the local flow.

$$\psi_N(t) A + B^0_\epsilon = \psi_N(t) \psi_N(-t) (\psi_N(t) A + B_\epsilon^0) \subseteq \psi_N(t) (\psi_N(-t)\psi_N(t) A +B^0_{C\epsilon})$$

$$\psi_N(t) A + B^0_\epsilon\subseteq \psi_N(t) (A + B^0_{C\epsilon})$$
with a constant $C$ independent from $N$. So,

$$\mu(\psi(t) A) \leq \mu\left(\psi_N(t) (A+B^0_{C\epsilon})\right) = \mu(A + B^0_{C\epsilon})$$
and, with $\epsilon \rightarrow 0$, thanks to the dominated convergence theorem

$$\mu(\psi(t) A) \leq \mu(A)\; .$$

For the reverse inequality, $\psi_N(t) A \subseteq \psi(t) A + B_\epsilon^0$, so

$$\mu(\psi(t) A + B_\epsilon^0) \geq \mu ( \psi_N(t) A) = \mu(A) \;.$$

If $A$ is open, then $A^c$ the complementary of $A$ in $B_R^s$ is closed and included in $B_R^s$. So, the local invariance is true for open sets. Then, the flow being unique and reversible, it is also true for countable disjoint unions, and so for all measurable sets.\end{proof}

Build now a set set onto which $\mu$ is invariant under the BBM flow and prove that it is of full measure. Then, as it is of full measure, $\mu$ is invariant under the flow of BBM.

\begin{definition} Let $s\in]0,\frac{1}{2}[$ and $R> 0$. Let $R_k = \sqrt{k+1} R$ and  $t_k = \frac{1}{3C_s \sqrt{k+1} R}$ and $T_0=0$, $T_{n+1} = \sum_{k= 0}^{n}t_k$. Call $A_N^n(R) = \psi_N(T_n)^{-1}(B_{R_n}^s)\cup \psi_N(-T_n)^{-1}(B_{R_n}^s)$, 

$$A_N(R)= \bigcap_{n\geq 0} A_N^n(R)$$

$$A(R) = \limsup_{N\rightarrow \infty } A_N(R) \; .$$
\end{definition}

\begin{proposition} There exists two constants $C>0$ and $a>0$ such that for all $R>2$ , $\mu(A(R)^c) \leq Ce^{-aR^2}$ . \end{proposition}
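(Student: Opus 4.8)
The plan is to first bound $\mu(A_N(R)^c)$ by a quantity independent of $N$ that decays like $e^{-a_s R^2}$, and then to pass to the limsup in $N$ by Fatou's lemma for sets. The whole estimate rests on two facts already established above: the invariance of $\mu = \mu_0^N \otimes \mu_N$ under the finite-dimensional flow $\psi_N$, and the Gaussian tail bound $\mu((B_R^s)^c) \le e^{-a_s R^2}$.

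For the finite-dimensional estimate I would start from $A_N(R)^c = \bigcup_{n\ge 0} A_N^n(R)^c$, so that subadditivity gives $\mu(A_N(R)^c) \le \sum_{n\ge 0} \mu(A_N^n(R)^c)$. Since $A_N^n(R)$ is a union containing $\psi_N(T_n)^{-1}(B_{R_n}^s)$, its complement is contained in $(\psi_N(T_n)^{-1}(B_{R_n}^s))^c$, whence $\mu(A_N^n(R)^c) \le \mu((\psi_N(T_n)^{-1}(B_{R_n}^s))^c)$. Here I invoke the reversibility of $\psi_N$, which identifies $\psi_N(T_n)^{-1}$ with $\psi_N(-T_n)$, together with the invariance of $\mu$ under $\psi_N$; these give $\mu(\psi_N(T_n)^{-1}(B_{R_n}^s)) = \mu(B_{R_n}^s)$, and hence $\mu(A_N^n(R)^c) \le \mu((B_{R_n}^s)^c)$. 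This is the one genuinely structural step: it is what lets a statement about preimages of balls under the nonlinear finite flow reduce to a statement about balls for the fixed Gaussian measure.

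Now the Gaussian tail estimate applies with radius $R_n = \sqrt{n+1}\,R$, giving $\mu((B_{R_n}^s)^c) \le e^{-a_s R_n^2} = e^{-a_s(n+1)R^2}$. Summing, for $R>2$ so that $e^{-a_s R^2} \le e^{-4a_s} < 1$,
$$\mu(A_N(R)^c) \le \sum_{n\ge 0} e^{-a_s(n+1)R^2} = \frac{e^{-a_s R^2}}{1 - e^{-a_s R^2}} \le \frac{1}{1-e^{-4a_s}}\, e^{-a_s R^2},$$
a bound uniform in $N$. The scaling $R_n^2 = (n+1)R^2$ is chosen precisely so that the countable intersection over the time steps $T_n$ turns into a convergent geometric series; dually, the choice $t_k \sim (C_s R_k)^{-1}$ makes $T_n = \sum_{k<n} t_k$ diverge, which is why $A(R)$ will ultimately carry a globally defined flow.

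Finally, since $A(R) = \limsup_N A_N(R)$, one has $A(R)^c = \liminf_N A_N(R)^c$, and Fatou's lemma for the probability measure $\mu$ yields $\mu(A(R)^c) \le \liminf_N \mu(A_N(R)^c) \le C e^{-a_s R^2}$ with $C = (1-e^{-4a_s})^{-1}$ and $a = a_s$. I expect the only real obstacle to be bookkeeping rather than analysis: one must be careful that the ball estimate genuinely transfers through $\psi_N$ (which needs both invariance and reversibility, both available) and that the per-$N$ bound is uniform in $N$ so that Fatou applies cleanly. The decay in $R$ is then immediate from the geometric summation.
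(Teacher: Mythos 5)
Your proposal is correct and follows essentially the same route as the paper: subadditivity over $n$, the invariance of $\mu$ under $\psi_N$ (with reversibility) to reduce the preimage of a ball complement to the ball complement itself, the Gaussian tail bound with $R_n^2=(n+1)R^2$ giving a geometric series uniform in $N$, and Fatou's lemma for the $\liminf$ of the complements. The only cosmetic difference is that you bound $\mu(A_N^n(R)^c)$ by one term of the union rather than by the sum of both, which just improves the constant by a factor of $2$.
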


\begin{proof}Indeed,

$$A(R)^c = \liminf A_N(R)^c$$

$$\mu(A(R)^c)\leq \liminf \mu(A_N(R)^c)$$
then, 

$$\mu(A_N(R)^c) \leq \sum_{n\geq 0} \mu(A_N^n(R)^c)$$
and

$$\mu(A_N^n(R)^c) = \mu(\psi_N(T_n)^{-1} (B_{R_n}^s)^c) +  \mu(\psi_N(-T_n)^{-1} (B_{R_n}^s)^c) \leq 2 \mu((B_{R_n}^s)^c)\leq 2e^{-a(n+1) R^2}$$
thanks to the invariance of $\mu$ under $\psi_N$.

$$\mu(A_N(R)^c) \leq 2\sum_{n\geq 0} (e^{-aR^2})^{(n+1)} \leq Ce^{-aR^2}  $$
with $C$ independent from $R$. Hence,

$$\mu(A(R)^c) \leq Ce^{-aR^2}$$.\end{proof}

\begin{theorem}\label{blablav}Let $C$ be a $\mu$ measurable set of $L^2$. Then, for all $t \in \R$,

$$\mu(\psi(t)C) = \mu(C) \; .$$\end{theorem}

\begin{proof} Let $C_R = C \cap A(R)$. As $A(R)$ is $\mu $ measurable, $C_R$ is also measurable and included in $A(R)$. 

By induction over $n$, $\psi(\pm T_n) C_R \subseteq B_{R_n}^s$ and for all $t \in [-T_n,T_n]$,

$$\mu( \psi(t) C_R) = \mu(C_R) \; .$$

Indeed, for $n=0$, $T_n= 0$, so $\mu(\psi(0) C_R)  = \mu(C_R)$. Then, as $C_R \subseteq A(R)$, for all $u\in C_R$, there exits a sequence $N_k \rightarrow \infty$ such that $u \in A_{N_k}(R)$, that is $\psi_{N_k}(T_n)(u) \in B_{R_n}^s$ for all $n$. In particular, for $n =0$, 

$$u = \psi_{N_k}(0)(u) \in B_{R_0}^s $$
and it will also appear by induction that for all $n$, $\psi_{N_k}(T_n) u$ converges in $L^2$ toward $\psi(T_n)(u)$ when $k$ goes to $\infty$.

For $n\rightarrow n+1$, suppose that $\psi(T_n) C_R \subseteq B_{R_n}^s$ and for all $t \in [-T_n,T_n]$, $\mu( \psi(t) C_R) = \mu(C_R)$ . As $\psi(T_n) C_R \supseteq B_{R_n}^s$ and $\psi(-T_n) C_R \subseteq B_{R_n}^s$, thanks to lemma \reff{locinv}, for all $t\in [0, t_n]$,

$$\mu(\psi(t) \psi(T_n) C_R) = \mu(\psi(T_n) C_R)= \mu(C_R)$$
and

$$\mu(\psi(-t) \psi(-T_n) C_R) = \mu(\psi(-T_n) C_R)= \mu(C_R)\; ,$$
as $T_{n+1} = T_n + t_n$, the invariance is true for $t\in [-T_{n+1},T_{n+1}]$. Then for all $u$ in $C_R$,

$$||\psi(T_{n+1})(u)-\psi_{N_k}(T_{n+1})(u)||_{L^2}=||\psi (t_n) \psi(T_n)(u) - \psi_{N_k}(t_n)\psi(T_n)(u)||_{L^2}+ $$

\hspace{2cm} $$||\psi_{N_k}(t_n)(\psi (T_n)(u)) -\psi_{N_k}(t_n)(\psi_{N_k}(T_n)(u)||_{L^2} \; .$$

Thanks to lemma \reff{locex}, there exists a constant independent from $N$ such that $||\psi_{N_k}(t_n)(u) -\psi_{N_k}(t_n)(v)||_{L^2}\leq C ||u-v||_{L^2}$ as long as $u,v \in B^0_{R_n} \subset B^s_{R_n}$, so

$$||\psi_{N_k}(t_n)(\psi (T_n)(u)) -\psi_{N_k}(t_n)(\psi_{N_k}(T_n)(u)||_{L^2}\leq C ||\psi (T_n)(u)-\psi_{N_k} (T_n)(u)||_{L^2} \rightarrow 0$$
by induction hypothesis and

$$||\psi (t_n) \psi(T_n)(u) - \psi_{N_k}(t_n)\psi(T_n)(u)||_{L^2} \rightarrow 0$$
when $k,N_k \rightarrow \infty$ thanks to lemma \reff{loconv}. So, $\psi_{N_k}(T_{n+1})(u) \in B_{R_{n+1}}^s $ converges toward $\psi(T_{n+1})u$ in $L^2$ and as $B_{R_{n+1}}^s$ is compact in $L^2$, $\psi(T_{n+1})u \in B_{R_{n+1}}$.

The induction is proved.

Then, for all $t \in \R$, as $T_{n} = \sum_{k=1}^{n}\frac{1}{3C_s R\sqrt k} \rightarrow \infty$, there exists $n$ such that $t\in [-T_n,T_n]$, so

$$\mu(\psi(t)C_R) = \mu(C_R) \; .$$

Finally,

$$\mu(\psi(t) C) \geq \mu(\psi(t)C_R) = \mu(C_R)$$
and

$$\mu(C) \leq \mu(C_R) + \mu(A(R)^c) \leq \mu(C_R)+Ce^{-aR^2}\leq \mu(\psi(t)C) +Ce^{-cR^2}$$
it comes

$$\mu(\psi(t)C)\geq \mu(C)$$
and

$$\mu(C) = \mu(\psi(-t)\psi(t) C) \geq \mu(\psi(t) C)\; ;$$
hence

$$\mu(\psi(t)C) = \mu(C) \; .$$ \end{proof}

\section{A new measure and new equations}

Now that the statistics $\mu$ has been proved to remain invariant through the BBM flow, perturb it a little bit and see if $\mu$ is stable. That is, build a statistics $\mu_V$ depending on a small parameter $V$ close to $0$ and analyse the evolution in time of $\mu_V$, see that it remains close to its initial statistics, and so close to $\mu$.

The new statistics $\mu_V$ will admit two different interpretations, depending on the point of view. First, regarding the measure itself, it adds some correlations between the wavelengths. With the statistics $\mu$, the different wavelengths were all independent from each other (the Gaussians had been taken independent), with $\mu_V$ two different wavelengths will be all the more depending on each other that their wavelengths are close.

The statistics $\mu_V$ are also the invariant statistics for the BBM equation onto which the unknown $u$ has been replaced by $ \sqrt{1+V}u$. Developing this expression to the first order in $V$, one gets a new equation corresponding to adding some external potential or a dispersive term, like frictional or shearing resistance.

\subsection{Perturbation of the measure}

The measure $\mu$ shall now be perturbed.

\begin{definition}Let $V$ be a $\mathcal C^2$ periodic function. Set $||\; .\; ||_\infty = ||\; .\; ||_{L^\infty}+|| \partial_x\; . \; ||_{L^\infty}+ ||\partial_x^2 \; .\; ||_{L^\infty}$ and suppose that $||V||_\infty \leq 1/2$. The operator multiplication by $V$, also noted $V$, is defined from $L^2$ to $L^2$ and its norm satisfies : 

$$||V||_0 = ||V||_{\mathcal L(L^2,L^2)} \leq ||V||_{L^\infty} \; .$$

\end{definition}

\begin{proposition} As $||V||_\infty$ is strictly less than $1$ and self adjoint, the operator on $L^2$

$$\left( 1 + V \right) ^{-1/2}$$
is well defined and its norm is less than 

$$||(1+V)^{-1/2}||_0 \leq \frac{1}{\sqrt{1-||V||_{L^\infty}}}\leq \sqrt 2 \; .$$ 
\end{proposition}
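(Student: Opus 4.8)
The plan is to treat $1+V$ as a bounded, positive, self-adjoint operator whose spectrum is bounded away from zero, and then read off both assertions from its spectrum. First I would record that, $V$ being multiplication by a bounded real function, it is a bounded self-adjoint operator on $L^2$ whose spectrum is the essential range of the symbol, so that $\sigma(V) \subseteq [-||V||_{L^\infty}, ||V||_{L^\infty}]$; in particular its spectral radius (which equals $||V||_0 \leq ||V||_{L^\infty}$) is at most $||V||_{L^\infty}$. Adding the identity gives $\sigma(1+V) \subseteq [1 - ||V||_{L^\infty}, 1 + ||V||_{L^\infty}]$, and since $||V||_{L^\infty} \leq ||V||_\infty \leq 1/2 < 1$ the left endpoint $1 - ||V||_{L^\infty}$ is strictly positive. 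Hence $1+V$ is a positive, invertible self-adjoint operator whose spectrum lies in the compact interval $[1-||V||_{L^\infty}, 1+||V||_{L^\infty}] \subset (0,\infty)$.

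With this in hand I would invoke the continuous functional calculus: the function $f(t)=t^{-1/2}$ is continuous and positive on a neighbourhood of $\sigma(1+V)$, so $f(1+V) =: (1+V)^{-1/2}$ is a well-defined bounded self-adjoint operator satisfying $f(1+V)\circ f(1+V) = (1+V)^{-1}$, which justifies the notation. For the norm, since $(1+V)^{-1/2}$ is self-adjoint its operator norm equals its spectral radius, and by the spectral mapping theorem $\sigma\big((1+V)^{-1/2}\big) = f\big(\sigma(1+V)\big)$; as $f$ is decreasing, the largest value is attained at the bottom of the spectrum, so
$$||(1+V)^{-1/2}||_0 = \big(\min \sigma(1+V)\big)^{-1/2} \leq (1 - ||V||_{L^\infty})^{-1/2}.$$
Finally the hypothesis $||V||_\infty \leq 1/2$ forces $||V||_{L^\infty} \leq 1/2$, whence $1 - ||V||_{L^\infty} \geq 1/2$ and $(1-||V||_{L^\infty})^{-1/2} \leq \sqrt 2$, which is exactly the claimed bound.

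A more elementary route, avoiding abstract spectral theory, is to note that since $V$ is continuous one has $1+V(x) \geq 1 - ||V||_{L^\infty} > 0$ pointwise, so $(1+V)^{-1/2}$ is simply multiplication by the bounded continuous symbol $x \mapsto (1+V(x))^{-1/2}$; as the operator norm of a multiplication operator on $L^2$ equals the $L^\infty$ norm of its symbol, the same estimate follows directly. Either way there is no serious obstacle: the one point requiring care is the quantitative lower bound $1 - ||V||_{L^\infty} > 0$ on the spectrum (equivalently, on the symbol $1+V$), which is precisely what guarantees both the existence of the inverse square root and the control of its norm. One could equally expand $(1+V)^{-1/2}$ as a norm-convergent binomial (Neumann-type) series in $V$, valid for $||V||_0 < 1$, but this only reproduces the same bound, so the spectral argument is cleaner.
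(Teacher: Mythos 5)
Your proof is correct. The paper actually states this proposition without any written proof, so there is nothing to match line by line; but the form of the claimed bound $\frac{1}{\sqrt{1-||V||_{L^\infty}}}$, together with the paper's later remark that $(1+V)^{-1}=\sum_k(-V)^k$ ``since $V$ is small'', strongly suggests that the intended argument is the norm-convergent binomial series $(1+V)^{-1/2}=\sum_k\binom{-1/2}{k}V^k$, whose coefficients satisfy $\sum_k|\binom{-1/2}{k}|x^k=(1-x)^{-1/2}$ and hence give exactly the stated estimate. You take the spectral route instead: since $V$ is multiplication by a real continuous function with $||V||_{L^\infty}\leq 1/2$, the operator $1+V$ is self-adjoint with spectrum in $[1-||V||_{L^\infty},1+||V||_{L^\infty}]\subset(0,\infty)$, and the continuous functional calculus (or, even more simply, the observation that $(1+V)^{-1/2}$ is just multiplication by the bounded symbol $(1+V(x))^{-1/2}$) gives existence and the norm bound at once. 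This is a genuinely cleaner argument: it yields the norm exactly as $(\min\sigma(1+V))^{-1/2}$ rather than merely an upper bound, it avoids any convergence discussion, and it makes self-adjointness of $(1+V)^{-1/2}$ (used implicitly later in the paper for $B_N$ and $W_N$) transparent. The series approach buys only one thing the spectral one does not: it exhibits $(1+V)^{-1/2}-1$ as an explicit power series in $V$ starting at order one, which is the structural fact the paper exploits later when estimating covariances to first order in $V$; but for the present proposition either argument suffices and yours is complete.
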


\begin{remark}The function $V$ is the small parameter by definition, but if one looks at $g=\sqrt{1+V}-1$, it is also a small parameter in the same norm.\end{remark}

\begin{definition} Let $B_N$ be the matrix of $\Pi_N(1 +V)^{-1/2} H^{-1}\Pi_N$ in the orthonormal basis 

$$(c_0,c_n,s_n)_{1\leq n \leq N}$$ 
where $H  = (1-\partial_x^2)^{1/2}$, that is the coefficients of $B_N$ are given by

$$ (B_N)_{n,m} = \left \lbrace{ \begin{tabular}{llll}
$\langle c_n , (1 +V)^{-1/2} H^{-1} c_m\rangle $ & \mbox{ if }$n,m \leq N$\\
$\langle c_n , (1 +V)^{-1/2} H^{-1} s_{m-N}\rangle $ & \mbox{ if }$n\leq N\; ,\; m\geq N+1$\\
$\langle s_{n-N} , (1 +V)^{-1/2} H^{-1} c_m\rangle $ & \mbox{ if }$m \leq N\; , \; n\geq N$\\
$\langle s_{n-N} , (1 +V)^{-1/2} H^{-1} s_{m-N}\rangle $ & \mbox{ otherwise.}\end{tabular}} \right. $$
where $\langle \; ,\; \rangle$ is the scalar product in $E_N$.
\end{definition}

\begin{definition}Call $g^N = (g_0,\hdots ,g_N,h_1,\hdots , h_N)$ and let $\alpha^N = B_N g^N = (\alpha_0^N,\hdots ,\alpha_N^N , \beta_0^N,\hdots , \beta_N^N )$. The vector $\alpha^N$ is a real centred Gaussian vector of covariance matrix : 

$$ B_N B_N^* \; .$$ \end{definition}

\begin{definition}Set

$$\varphi_V^N = \sum_{n=0}^N \alpha_n^N c_n + \sum_{n=1}^N \beta_n^N s_n \; .$$

The map $\varphi_V^N$ defines a measure $\mu_V^N$ on $E_0^N$.\end{definition}

\begin{proposition}Let $0\leq s < \frac{1}{2}$. The sequence $\varphi_V^N$ converges in $L^2(\Omega , H^s)$. Its limit is called $\varphi_V$ and induces a measure on $L^2$ called $\mu_V$. Besides, calling $E_V$ the mean value wrt $\mu_V$, $E_V(||u||_{L^2}^2)$ is uniformly (in $V$) bounded. \end{proposition}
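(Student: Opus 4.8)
The plan is to use that each $\varphi_V^N$ is a centred Gaussian vector whose law is governed by the operator $(1+V)^{-1/2}H^{-1}$, and to reduce both the convergence and the uniform bound to a single Hilbert--Schmidt estimate. Denote by $(e_k)_k$ the basis $(c_0,c_n,s_n)$ relabelled; each $e_k$ is an eigenfunction of $H^2=1-\partial_x^2$ with eigenvalue $\lambda_k=1+n_k^2$, so that $\|f\|_{H^s}^2=\sum_k\lambda_k^s f_k^2$ for $f=\sum_k f_ke_k$. Since $\varphi_V^N$ has coordinates $\alpha^N=B_Ng^N$ with $g^N$ standard Gaussian, its covariance is $B_NB_N^*$ and, writing $\Lambda_s$ for the (diagonal) matrix of $H^s$,
$$E\big(\|\varphi_V^N\|_{H^s}^2\big)=\mathrm{Tr}\big(\Lambda_s^2B_NB_N^*\big)=\|\Lambda_sB_N\|_{HS}^2=\big\|\Pi_NH^s(1+V)^{-1/2}H^{-1}\Pi_N\big\|_{HS}^2,$$
where I used that $H^s$ commutes with $\Pi_N$. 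By the very definition of $B_N$, for $M\le N$ the matrix $B_M$ is the top-left block of $B_N$, so all the $\varphi_V^N$ are coupled through the single infinite matrix with entries $\langle e_n,(1+V)^{-1/2}H^{-1}e_m\rangle$.

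The heart of the proof is the finiteness of $\|H^s(1+V)^{-1/2}H^{-1}\|_{HS}$ for $0\le s<1/2$. I would factor
$$H^s(1+V)^{-1/2}H^{-1}=\big(H^s(1+V)^{-1/2}H^{-s}\big)H^{s-1}$$
and bound $\|H^s(1+V)^{-1/2}H^{-1}\|_{HS}\le\|H^s(1+V)^{-1/2}H^{-s}\|_0\,\|H^{s-1}\|_{HS}$. The second factor controls the admissible range, since $\|H^{s-1}\|_{HS}^2=1+2\sum_{n\ge1}(1+n^2)^{s-1}$ converges exactly when $2(s-1)<-1$, i.e.\ $s<1/2$. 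The first factor is the operator norm on $H^s$ of multiplication by $(1+V)^{-1/2}$, which is finite because $(1+V)^{-1/2}$ is $\mathcal C^2$; this is where the regularity of $V$ enters, and I expect this $H^s$-boundedness to be the only genuinely non-routine point, though it is classical for multipliers at least Lipschitz.

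Convergence then follows from a Cauchy argument. For $M\le N$ the difference $\varphi_V^N-\varphi_V^M$ is, as a function of $g^N$, associated with $B_N$ minus the zero-padded $B_M$, so
$$E\big(\|\varphi_V^N-\varphi_V^M\|_{H^s}^2\big)=\sum_{(n,m)\in R_{M,N}}\lambda_n^s\,\big|\langle e_n,(1+V)^{-1/2}H^{-1}e_m\rangle\big|^2,$$
with $R_{M,N}$ the index pairs occurring at level $N$ but not at level $M$. This is a tail of the convergent series $\|H^s(1+V)^{-1/2}H^{-1}\|_{HS}^2$ and thus tends to $0$; hence $(\varphi_V^N)_N$ is Cauchy in $L^2(\Omega,H^s)$ and converges to some $\varphi_V$. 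As $H^s\hookrightarrow L^2$, the limit is $L^2$-valued and its law is the measure $\mu_V$.

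Finally, the uniform bound is the case $s=0$, for which the multiplier estimate trivialises:
$$E_V\big(\|u\|_{L^2}^2\big)=\|(1+V)^{-1/2}H^{-1}\|_{HS}^2\le\|(1+V)^{-1/2}\|_0^2\,\|H^{-1}\|_{HS}^2\le2\Big(1+2\sum_{n\ge1}\tfrac1{1+n^2}\Big),$$
using $\|(1+V)^{-1/2}\|_0\le\sqrt2$ from the earlier proposition. The right-hand side does not depend on $V$, which yields the claimed uniform bound.
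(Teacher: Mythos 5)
Your argument is correct, and it reaches the conclusion by a genuinely different and more compact route than the paper. You observe that the matrix entries $\langle e_n,(1+V)^{-1/2}H^{-1}e_m\rangle$ do not depend on $N$, so that $E\bigl(\|\varphi_V^N-\varphi_V^M\|_{H^s}^2\bigr)$ is exactly a partial tail of the single double series $\|H^s(1+V)^{-1/2}H^{-1}\|_{HS}^2$, and you then establish finiteness of that Hilbert--Schmidt norm by the factorization $\bigl(H^s(1+V)^{-1/2}H^{-s}\bigr)H^{s-1}$, which isolates the restriction $s<1/2$ in the elementary term $\|H^{s-1}\|_{HS}$ and pushes all the dependence on $V$ into the operator norm of multiplication by $(1+V)^{-1/2}$ on $H^s$. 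The paper instead splits the Cauchy difference into two pieces --- the coordinates new at level $N$, and the corrections $\alpha_n^N-\alpha_n^M$ to the old coordinates $n\le M$ --- and estimates each trace by hand: the first exactly as you do, the second via the decomposition $\Pi_M=(\Pi_M-\Pi_{M/2})+\Pi_{M/2}$ together with the absolute summability of the Fourier coefficients of $(1+V)^{-1/2}$ (a consequence of its $\mathcal C^1$ regularity), and it never invokes the $H^s$-boundedness of the multiplication operator as a black box. Your version is cleaner and makes the source of the constraint $s<1/2$ transparent, at the price of citing the classical fact that multiplication by a Lipschitz function is bounded on $H^s$ for $0\le s\le 1$ (interpolation between $L^2$ and $H^1$); this is the only step you do not prove, and it is indeed standard, while the paper's Fourier-coefficient argument effectively reproves the special case it needs. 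Your treatment of the uniform bound at $s=0$, where the multiplier norm collapses to $\|(1+V)^{-1/2}\|_{L^\infty}\le\sqrt2$, matches the paper's conclusion up to the value of the constant.
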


\begin{proof} Let $N\geq M > 0$. The norm of $\varphi_V^N - \varphi_V^M$ is such that : 

$$||\varphi_V^N - \varphi_V^M ||_{L^2_\omega , H^s_x}^2  = E(||\varphi_V^N - \varphi_V^M ||_{H^s}^2)$$

$$ = E\left(\sum_{n=M+1}^N (1+n^2)^s(\alpha_n^N)^2+ (1+n^2)^s(\beta_n^N)^2 + \sum_{n=0}^M (1+n^2)^s(\alpha_n^N - \alpha_n^M)^2 +(1+n^2)^s(\beta_n^N - \beta_n^M)^2 \right)\; .$$

The first sum corresponds to the trace:

$$Tr ((1-\Pi_M )H^{s}B_N^* B_NH^s(1-\Pi_M )) = Tr \lbrace (1-\Pi_M) H^{s-1} \Pi_N (1+V)^{-1/2}\Pi_N (1+V)^{-1/2} H^{s-1}(1-\Pi_M)\rbrace $$

$$ = Tr \left( (1-\Pi_M)H^{2(s-1)} \Pi_N (1+V)^{-1/2}\Pi_N (1+V)^{-1/2}\right) $$

$$\leq Tr\left( (1-\Pi_M)H^{2(s-1)}\right) ||\Pi_N (1+V)^{-1/2}\Pi_N (1+V)^{-1/2}||$$

$$\leq \sum_{n=M+1}^N \frac{1}{(1+n^2)^{1-s}} \frac{2}{1-||V||_{L^\infty}}$$

$$\leq 4 \sum_{n\geq M+1} \frac{1}{(1+n^2)^{1-s}} \; ,$$
which goes to $0$ when $M\rightarrow \infty$.

The second is : 

$$Tr(\Pi_M H^s(B_N-B_M)^*(B_N-B_M)H^s\Pi_M ) = $$

\hspace{2cm} $$Tr ( \Pi_M H^{s-1}( (1+V)^{-1/2}(\Pi_N-\Pi_M) (1+V)^{-1/2}\Pi_M) H^{s-1}\Pi_M)$$
since $H^{-1},\Pi_N$ and $\Pi_M$ commute.

Then, use the fact that $\Pi_M = (\Pi_M - \Pi_{M/2} )+ \Pi_{M/2}$ to get that the trace is less than :

$$||(1+V)^{-1/2}||_{L^2}^2 Tr ((\Pi_N-\Pi_{M/2})H^{2(s-1)}) + Tr(\Pi_{M/2} ( (1+V)^{-1/2}(\Pi_N-\Pi_M) (1+V)^{-1/2}\Pi_{M/2}))\; .$$

The first trace is less than

$$C\sum_{m> M/2} \frac{1}{(1+m^2)^{1-s}}$$
which goes to $0$ when $M\rightarrow \infty$, the second is less than the sum to the square of the Fourier coefficients of $(1+V)^{-1/2}$ of wavelengths bigger than $M/2$. Indeed, if $g\in E_0^{M/2}$ and $h \in E_{M}$ then $hg \in E_{M/2+1}$. As $(1+V)^{-1/2}$ is $\mathcal C^1$, the series of its Fourier coefficients is absolutely convergent and thus the sum of its Fourier coefficients of wavelengths bigger than $M/2$ goes to $0$ when $M$ goes to $\infty$.

So, the sequence $\varphi_V^N$ is a Cauchy sequence in $L^2(\Omega , H^s)$, hence it converges toward a certain $\varphi_V$ in $H^s$.

What is more,

$$E_V( ||u||_{L^2}^2) = E( ||\varphi_V||_{L^2}^2) \leq 4 \sum \frac{1}{1+n^2}\; .$$ \end{proof}

\begin{example}The covariance between two waves is given by $E(\alpha^N_n \alpha^N_m)$ or the mean value of any combination of $\alpha^N_k$ and $\beta^N_l$ with $k,l = n$ or $m$. In particular,

$$E(\alpha^N_n \alpha^N_m) = (B_N B_N^*)_{n,m} = \sum_{k=0}^{2N+1} (B_N)_{n,k} (B_N)_{m,k} $$
$$= \sum_{k=0}^N \frac{1}{1+k^2} \langle c_nc_k,(1+V)^{-1/2} \rangle \langle c_mc_k,(1+V)^{-1/2} \rangle + \sum_{k=1}^N \frac{1}{1+k^2} \langle c_ns_k,(1+V)^{-1/2} \rangle \langle s_mc_k,(1+V)^{-1/2} \rangle$$
which involves $k$ bigger than $|n-m|/2$ or Fourier coefficients of $(1+V)^{-1/2}$ of wavelengths bigger than $|n-m|/2$. So, the dependence between two waves decreases quite quickly when the difference between the wavelengths increases.

Remark that as $(1+V)^{-1} = \sum (-V)^k$ since $V$ is small, then, for $n\neq m$, $E(\alpha_n^N \alpha_m^N)$ has no zero order in $V$, it is at least as small as $V$ itself. \end{example}

\subsection{Perturbation of the flow}

Now, there is an equation whose flow is invariant under the perturbed measure. In finite dimensional approximation, the linear operator $(1-\partial_x^2)$ is replaced by $(B_N^{-1})^*B_N^{-1}$ (the matrix that appears in the law of $\alpha^N$ as $\Pi_N (1-\partial_x^2)\Pi_N$ was the matrix that appeared in the law of $g^N$) on $E_0^N$ and $(1-\Pi_N )(1-\partial_x^2)(1-\Pi_N)$ on its orthogonal.

\begin{definition} Let $W_N$ be the operator on $E_0^N$ whose matrix is $(B_N^{-1})^* B_N^{-1}$ in the basis

$$\lbrace c_0,\hdots ,c_N, s_1,\hdots ,s_N \rbrace $$
and $V_N$ the operator $\Pi_N \sqrt{1+V} \Pi_N$ on $E_0^N$ such that $W_N = V_N (1-\partial_x^2) V_N$.

On $E_0^N$, the law of $\mu_V^N$ is given by : 

$$d\mu_V^N \left( u := \sum a_n c_n + b_n s_n\right) = d_N^V e^{-\frac{1}{2} \int u W_N u } \prod da_n db_n $$
where 

$$d_N^V = \sqrt{\mbox{det } W_N} (2\pi)^{-(2N+1)/2}$$
is a normalization factor.
\end{definition}

\begin{remark}The operator $V_N$ has an inverse on $E_0^N$ and satisfies for all $s,N, V, u\in E_0^N$ : 

$$||V_N u||_{H^s} \leq 2 ||u||_{H^s} \mbox{ and } ||V_N^{-1}u||_{H^s} \leq 4 ||u||_{H^s} \; .$$
\end{remark}

\begin{proof}Call $I_N$ the identity of $E_0^N$ and remark that 

$$||(V_N-I_N)u||_{H^s} \leq ||\Pi_N \sqrt{1+V} -1||_{L^\infty} ||u||_{H^s}$$
and that

$$||\Pi_N \sqrt{1+V} -1||_{L^\infty} \leq \frac{||V||_\infty}{\sqrt{ 1 - ||V||_\infty}} \leq \sqrt 2 ||V||_\infty$$
so

$$||V_Nu||_{H^s} \leq \left( 1+ \frac{\sqrt 2}{2} \right) ||u||_{H^s} \leq 2 ||u||_{H^s}$$
and

$$||V_N^{-1}u||_{H^s} \leq \frac{1}{1 - \sqrt 2/ 2} ||u||_{H^s} \leq 4 ||u||_{H^s}\; .$$\end{proof}

\begin{proposition} The equation

\begin{equation}\label{perturbeq} \left \lbrace{ \begin{tabular}{ll}
$\partial_t W_N u + V_N \partial_x (V_N u+ \Pi_N \frac{(V_Nu)^2}{2} ) = 0$ \\
$u|_{t=0} = u_0 \in E_0^N$ \end{tabular}} \right. \end{equation}
admits a unique global solution $u(t) = \phi_V^N(t)(u_0) $ and 

$$E_V (t) = \frac{1}{2} \int u(t) W_N u(t) $$
is invariant under this flow, it does not depend on time $t$.
\end{proposition}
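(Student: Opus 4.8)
The plan is to reduce \reff{perturbeq} to an autonomous finite-dimensional ODE, establish local existence and uniqueness by Cauchy--Lipschitz, and then obtain global existence from the conservation of $E_V$, exactly as in the spirit of Proposition \reff{liouville}. The first step is to observe that $W_N = V_N(1-\partial_x^2)V_N$ is invertible on $E_0^N$: the operator $1-\partial_x^2$ restricted to $E_0^N$ is positive definite (its eigenvalues are the $1+n^2 \geq 1$), and $V_N$ is invertible by the preceding remark, so $W_N^{-1} = V_N^{-1}(1-\partial_x^2)^{-1}V_N^{-1}$ is well defined and bounded. Consequently \reff{perturbeq} is equivalent to $\partial_t u = F_V^N(u)$ with $F_V^N(u) = -W_N^{-1} V_N \partial_x(V_N u + \Pi_N \frac{(V_N u)^2}{2})$. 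Since $F_V^N$ is a polynomial (quadratic) map on the finite-dimensional space $E_0^N$, it is smooth and locally Lipschitz, so Cauchy--Lipschitz gives a unique maximal solution $\phi_V^N(t)(u_0)$.

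The heart of the argument is the conservation of $E_V(t) = \frac{1}{2}\int u(t) W_N u(t)$. Since $W_N$ is self-adjoint and independent of $t$, one has $\frac{d}{dt} E_V = \langle u, W_N \partial_t u\rangle$, and substituting the equation $W_N \partial_t u = \partial_t(W_N u) = -V_N \partial_x(V_N u + \Pi_N \frac{(V_N u)^2}{2})$ together with the self-adjointness of $V_N$ gives $\frac{d}{dt} E_V = -\langle w, \partial_x(w + \Pi_N \frac{w^2}{2})\rangle$ where $w = V_N u$. The key points are that $w = \Pi_N \sqrt{1+V}\, u \in E_0^N$ (so $\Pi_N w = w$) and that $\Pi_N$ commutes with $\partial_x$, which lets me drop the projector in the nonlinear term: $\langle w, \partial_x \Pi_N \frac{w^2}{2}\rangle = \langle \Pi_N w, \partial_x \frac{w^2}{2}\rangle = \langle w, \partial_x \frac{w^2}{2}\rangle$. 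Both remaining terms then vanish by periodicity, being integrals of total derivatives: $\langle w, \partial_x w\rangle = \frac{1}{2}\int \partial_x(w^2)\,dx = 0$ and $\langle w, \partial_x \frac{w^2}{2}\rangle = \int w^2 \partial_x w\,dx = \frac{1}{3}\int \partial_x(w^3)\,dx = 0$. Hence $\frac{d}{dt} E_V = 0$ and $E_V$ does not depend on time.

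Global existence then follows from this conservation law. Because $E_0^N$ is finite dimensional and $W_N$ is positive definite, there is a constant $c_N > 0$ with $\langle u, W_N u\rangle \geq c_N \|u\|_{L^2}^2$, so $\|u(t)\|_{L^2}^2 \leq \frac{2}{c_N} E_V(0)$ is bounded uniformly in $t$ on the interval of existence. A solution of an ODE with smooth right-hand side that remains in a fixed bounded set cannot blow up in finite time; therefore the maximal solution is global, and $\phi_V^N(t)$ is defined for all $t \in \R$.

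The only genuinely substantive step is the energy computation of the second paragraph: the cancellation rests on the specific conjugated form $V_N \partial_x(V_N\,\cdot\,)$, which makes both the linear transport term and the quadratic term antisymmetric with respect to the $W_N$-weighted inner product, mirroring the antisymmetry of $J = (1-\partial_x^2)^{-1}\partial_x$ exploited in Proposition \reff{liouville}. The point requiring care is precisely the bookkeeping with $\Pi_N$: one must check that $w = V_N u$ lies in $E_0^N$ and that $\Pi_N$ commutes with $\partial_x$, for otherwise the projector could not be removed and the clean $\int \partial_x(w^3) = 0$ cancellation would not be available.
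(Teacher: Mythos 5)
Your proof is correct and follows essentially the same route as the paper: Cauchy--Lipschitz on the finite-dimensional ODE obtained by inverting $W_N = V_N(1-\partial_x^2)V_N$, conservation of $E_V$ via self-adjointness of $W_N$ and $V_N$ and the vanishing of $\int \partial_x\left(\frac{w^2}{2}+\frac{w^3}{6}\right)$ for $w = V_N u$, and globality from the positive definiteness of $W_N$. Your explicit bookkeeping of the projector $\Pi_N$ (using $V_N u \in E_0^N$ and $[\Pi_N,\partial_x]=0$) is a detail the paper leaves implicit, but the argument is the same.
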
 

\begin{proof}
The existence and uniqueness of the local flow is given by Cauchy-Lipschitz theorem, as 

$$W_N^{-1} = V_N^{-1} H^{-2} V_N^{-1} $$
and the derivation of $E_V$ gives

$$\dot E_V  = \int u \partial_t (W_N u(t)) = - \int u V_N \partial_x (V_N u + \Pi_N \frac{(V_Nu)^2}{2}) $$

$$ \int \partial_x (V_N u) \left( V_N u + \frac{(V_N u)^2}{2}\right) = \int \partial_x \left( \frac{(V_Nu)^2}{2} + \frac{(V_N u)^3}{6} \right) = 0$$
since $W_N$ and $V_N$ are self-adjoint and $\partial_t $ and $W_N$ commute.

What is more, as $||V||_\infty \leq 1/2$, $W_N$ is strictly positive, then $\sqrt{E_V}$ is a norm on $E_0^N$ equivalent to all other norms on $E_0^N$, so the flow is global. \end{proof}

\begin{proposition}\label{invarn} The measure $\mu_V^N$ is invariant under $\phi_V^N$. \end{proposition}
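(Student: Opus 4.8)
The plan is to reduce the statement to the invariance of $\mu_0^N$ under the finite-dimensional BBM flow $\phi_N$, which was already established in proposition \reff{liouville}. The key observation is that the linear change of unknown $w = V_N u$ turns the perturbed equation \reff{perturbeq} into the finite-dimensional BBM equation, so that $\phi_V^N$ is nothing but $\phi_N$ conjugated by the operator $V_N$ (invertible on $E_0^N$ by the preceding remark); the same conjugation will transport $\mu_V^N$ onto $\mu_0^N$, and invariance then follows formally.

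First I would carry out the conjugation. Let $u(t) = \phi_V^N(t)(u_0)$ and set $w(t) = V_N u(t)$. Since $V_N$ does not depend on $t$ and $W_N = V_N H^2 V_N$, one has $\partial_t(W_N u) = V_N H^2 \partial_t w$, while the nonlinear term in \reff{perturbeq} is exactly $V_N\partial_x\left(w + \Pi_N\frac{w^2}{2}\right)$ because $V_N u = w$. Substituting, applying $V_N^{-1}$ and then $H^{-2}$ gives
\[
\partial_t w + H^{-2}\partial_x\left(w + \Pi_N\frac{w^2}{2}\right) = 0,
\]
which, since $w(t)\in E_0^N$, is precisely the equation of proposition \reff{liouville}. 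Hence $w(t) = \phi_N(t)(V_N u_0)$, and therefore
\[
\phi_V^N(t) = V_N^{-1}\circ\phi_N(t)\circ V_N
\]
as maps of $E_0^N$.

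Next I would identify the two measures. Writing $u = V_N^{-1}w$ in the density $d\mu_V^N(u) = d_N^V e^{-\frac{1}{2}\int u W_N u}\prod da_n db_n$ and using the self-adjointness of $V_N$ to get $\int u W_N u = \int (V_N u) H^2 (V_N u) = \int w H^2 w$, the Jacobian of the linear map $V_N$ being a constant absorbed into the normalisation, one obtains that the image of $\mu_V^N$ under $V_N$ is exactly $\mu_0^N$, i.e. $(V_N)_*\mu_V^N = \mu_0^N$. Combining this with the conjugation identity,
\[
(V_N)_*(\phi_V^N(t))_*\mu_V^N = (\phi_N(t))_*(V_N)_*\mu_V^N = (\phi_N(t))_*\mu_0^N = \mu_0^N,
\]
the last equality by proposition \reff{liouville}; applying $(V_N^{-1})_*$ yields $(\phi_V^N(t))_*\mu_V^N = \mu_V^N$, which is the claim.

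The only delicate point is the conjugation step: one must check that the substitution $w = V_N u$ is exact, in particular that the cutoff $\Pi_N$ and the quadratic nonlinearity transform without error (they do, since $(V_N u)^2 = w^2$ and $\Pi_N$ acts as the identity on $E_0^N$), and that $V_N^{-1}$ may legitimately be applied, which is guaranteed by the remark bounding $||V_N^{-1}u||_{H^s}$. Alternatively, and in the spirit of proposition \reff{liouville}, one may avoid the explicit measure computation: the conjugation identity forces $\det d_u\phi_V^N(t) = \det V_N^{-1}\,\det d_{V_N u}\phi_N(t)\,\det V_N = 1$ since $\phi_N$ is volume preserving, so $\phi_V^N$ preserves the Lebesgue measure; as the quadratic form $\frac{1}{2}\int u W_N u$ is already known to be conserved along the flow, the density $e^{-\frac{1}{2}\int u W_N u}$ is conserved as well, and $\mu_V^N$ is invariant.
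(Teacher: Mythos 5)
Your proof is correct, but it takes a genuinely different route from the paper. The paper does not conjugate: it first proves a separate lemma giving equation \reff{perturbeq} an explicit Hamiltonian form $\partial_t u = -J_N\nabla_u H$ with the antisymmetric operator $J_N = V_N^{-1}H^{-2}\partial_x V_N^{-1}$ (this requires a page of Fourier-coefficient computations for $\nabla H_1$ and $\nabla H_2$), then invokes the Liouville argument of proposition \reff{liouville} to get invariance of the Lebesgue measure on $E_0^N$, and finally combines this with the already-established conservation of $E_V = \frac{1}{2}\int u W_N u$ and the time-independence of the normalisation constant. Your conjugation $\phi_V^N(t) = V_N^{-1}\circ\phi_N(t)\circ V_N$ is exact, as you verify: $W_N = V_N H^2 V_N$ with $V_N$ time-independent turns \reff{perturbeq} into the truncated BBM equation for $w = V_N u$, and the Gaussian computation $(V_N)_*\mu_V^N = \mu_0^N$ (self-adjointness of $V_N$, constant Jacobian absorbed by normalisation of two probability measures) is a one-line substitution in the density. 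What your approach buys is economy: it makes the Hamiltonian-form lemma for the perturbed equation entirely unnecessary, deriving both the volume preservation and the measure identification from the unperturbed case. What the paper's approach buys is independence from the exact algebraic relation between the two equations: it would survive perturbations that are not conjugate to BBM by a fixed linear map, whereas your argument is tied to the specific structure $W_N = V_N H^2 V_N$ and to the nonlinearity being exactly $V_N\partial_x(V_N u + \Pi_N(V_N u)^2/2)$. Your closing alternative (deducing $\det d_u\phi_V^N(t)=1$ from the conjugation and then reusing the conservation of the quadratic form) is in fact the closest to the paper's actual proof, with the Liouville step obtained by transport rather than by exhibiting the Hamiltonian structure directly.
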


To prove this proposition, Liouville's theorem is used and so it is required to give the equation \reff{perturbeq} its Hamiltonian form.

\begin{lemma} The equation \reff{perturbeq} admits a Hamiltonian formulation. \end{lemma}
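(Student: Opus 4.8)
The plan is to mirror the Hamiltonian formulation used for the unperturbed flow in Proposition \reff{liouville}: I want to exhibit the vector field governing \reff{perturbeq} in the form $\tilde J \nabla f_V^N$, where $\tilde J$ is an antisymmetric operator on $E_0^N$ and $f_V^N$ is a real-valued function on $E_0^N$. This is exactly what is needed downstream, since an antisymmetric $\tilde J$ forces the field to be divergence free and hence, via Liouville's theorem, yields the invariance claimed in Proposition \reff{invarn}. First I would solve \reff{perturbeq} for $\partial_t u$ by inverting $W_N$, which is legitimate because $W_N$ is strictly positive on $E_0^N$. Using the identity $W_N^{-1} = V_N^{-1} H^{-2} V_N^{-1}$ recorded just above, the composite $W_N^{-1} V_N$ collapses to $V_N^{-1} H^{-2}$, so the equation becomes
$$\partial_t u = - V_N^{-1} H^{-2} \partial_x \left( V_N u + \Pi_N \tfrac{(V_N u)^2}{2}\right) =: F_V^N(u) \; .$$

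The second step is to isolate the antisymmetric operator. I would set
$$\tilde J = - V_N^{-1} H^{-2} \partial_x V_N^{-1} \; ,$$
and check $\tilde J^* = -\tilde J$ on $E_0^N$. Taking the adjoint reverses the order of the factors; using that $V_N^{-1}$ and $H^{-2}$ are self-adjoint and that $\partial_x^* = -\partial_x$, the reversal places $\partial_x$ next to $H^{-2}$, and since $\partial_x$ commutes with $H^{-2} = (1-\partial_x^2)^{-1}$ on the periodic domain one recovers $\tilde J^* = V_N^{-1} H^{-2} \partial_x V_N^{-1} = -\tilde J$. Note that no commutation between $\partial_x$ and $V_N$ is needed. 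The only point requiring care is that $V_N$ fails to be invertible on all of $L^2$, so $\tilde J$ must be read as an operator on the finite dimensional space $E_0^N$, where $V_N$ restricts to a self-adjoint invertible map.

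The third step is to produce the Hamiltonian. I would take
$$f_V^N(u) = \int \left( \frac{(V_N u)^2}{2} + \frac{(V_N u)^3}{6}\right) dx$$
and compute its $L^2$-gradient on $E_0^N$. Differentiating along $h \in E_0^N$ and moving $V_N$ across the scalar product by self-adjointness gives $\nabla f_V^N(u) = V_N^2 u + V_N \tfrac{(V_N u)^2}{2}$; then the identity $V_N \Pi_N = V_N$ rewrites this as $V_N\big( V_N u + \Pi_N \tfrac{(V_N u)^2}{2}\big)$. Consequently $V_N^{-1}\nabla f_V^N(u) = V_N u + \Pi_N \tfrac{(V_N u)^2}{2}$, whence $\tilde J \nabla f_V^N(u) = F_V^N(u)$, which is precisely the Hamiltonian form sought.

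The step I expect to be the main obstacle is the gradient computation, specifically keeping track of the projectors: $(V_N u)^2$ does not belong to $E_0^N$, so the manipulation $\langle (V_N u)^2/2, V_N h\rangle = \langle V_N (V_N u)^2/2, h\rangle$ must invoke the self-adjointness of $V_N$ as an operator on the full space $L^2$, and then the identity $V_N \Pi_N = V_N$ is what matches the projected nonlinearity appearing in \reff{perturbeq}. Once this is in place, the antisymmetry of $\tilde J$ delivers $\mathrm{div}\, F_V^N = \sum_{i,j}\tilde J_{ij}\,\partial_i\partial_j f_V^N = 0$ by the antisymmetric-times-symmetric cancellation used in Proposition \reff{liouville}, which is exactly the input the Liouville argument for Proposition \reff{invarn} requires.
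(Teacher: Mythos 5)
Your proof is correct and follows essentially the same route as the paper: you exhibit the same antisymmetric operator $V_N^{-1}H^{-2}\partial_x V_N^{-1}$ (up to the sign convention) and the same Hamiltonian $\int\bigl(\tfrac{(V_Nu)^2}{2}+\tfrac{(V_Nu)^3}{6}\bigr)$, and your identity $V_N\Pi_N=V_N$ correctly recovers the projected nonlinearity. The only cosmetic difference is that you compute the gradient intrinsically via directional derivatives and self-adjointness of $V_N$ on $L^2$, whereas the paper verifies the same identity coordinate-wise in the complex Fourier basis; both are valid.
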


\begin{proof} Call $J_N$ the operator on $E_0^N$ : 

$$J_N = W_N^{-1} V_N \partial_x V_N^{-1} =V_N^{-1} H^{-2}\partial_x V_N^{-1} \; .$$

This operator is antisymmetric, since $V_N$ and $H$ are self-adjoint, $\partial_x$ is antisymmetric and $H$ and $\partial_x$ commute.

The equation \reff{perturbeq} can be written : 

$$\partial_t u + J_N \left( V_N^2 u + V_N \Pi_N \frac{(V_N u)^2}{2} \right) = 0\; .$$

Writing $u = \frac{1}{\sqrt{2\pi}}\sum_{n=-N}^N u_n e^{inx}$ and $(V_N)^k_j = \frac{1}{2\pi}\int e^{-ikx}V_N (e^{ijx})= \frac{1}{2\pi} \int e^{-i(k-j)x} \sqrt{1+V}$, call 

$$H(u_{-N},\hdots, u_N) = H_1 + H_2$$
with

$$H_1 = \frac{1}{2} \int (V_N u)^2 \mbox{ and } H_2 = \int \frac{1}{6} \int (V_N u)^3 \; ,$$
and 

$$F^n (u_{-N},\hdots , u_N) = F^n_1 +F^n_2$$
with

$$F^n_1 = \frac{1}{\sqrt{2\pi}}\int e^{-inx} V_N^2 u \mbox{ and } F^n_2 = \frac{1}{\sqrt{2\pi}} \int e^{-inx} V_N \Pi_N \frac{(V_N u)^2}{2}\; .$$

The function $H_1$ can be rewritten : 

$$H_1 = \frac{1}{2}\sum_{k+l=0}  \sum_{n,m=-N}^N 1_{|k|\leq N}1_{|l|\leq N}(V_N)^k_n (V_N)^l_m u_n u_m$$
so its complex derivative wrt $u_n$ is : 

$$\frac{dH_1}{du_n} = \sum_{k+l = 0} \sum_{m=-N}^N 1_{|k|\leq N}1_{|l|\leq N}(V_N)^k_n (V_N)^l_m u_m\; .$$

As $V_N$ is self-adjoint : $(V_N)^k_n = \overline{ (V_N)^n_k}$. As $V_N$ and $u$ are real, $(V_N)^l_m = \overline{(V_N)^{-l}_{-m}}$ and $u_m = \overline{u_{-m}}$ so

$$\frac{dH_1}{du_n} = \overline{\sum_{k,m=-N}^N (V_N)^n_k (V_N)^k_m u_m} = \overline{F^n_1}$$
or

$$\frac{dH_1}{d\overline{u_n}} = F_n^1$$
therefore $F_1 = V_N^2 u = \nabla H_1 $. The function $H_2$ can be rewritten 

$$H_2 = \frac{1}{6} \sum_{k+l+j = 0} \sum_{n,m,y=-N}^N (V_N)^k_n(V_N)^l_m (V_N)^j_y u_n u_m u_y$$
so its complex derivative wrt $u_n$ is : 

$$\frac{dH_2}{du_n} = \frac{1}{2} \sum_{k+l+j = 0} \sum_{m,y=-N}^N (V_N)^k_n(V_N)^l_m (V_N)^j_y u_m u_y$$

$$= \overline{\frac{1}{2} \sum_{k+l+j = 0} \sum_{m,y=-N}^N (V_N)^n_k(V_N)^{-l}_m (V_N)^{-j}_y u_m u_y}$$

$$= \overline{\frac{1}{2} \sum_{k+l+j = 0} (V_N)^n_k(V_N u)_{-l} (V_N u)_{-j} } = \overline{\frac{1}{2} \sum_{k+l+j = 0} (V_N)^n_k(V_N u)^2_{-l-j}} = \overline{ F^n_2}\; .$$

Therfore, the equation \reff{perturbeq} is written : 

$$\partial_t u = -J_N \nabla_u H $$
it has a Hamiltonian form.\end{proof}

Proof of the Proposition \ref{invarn}.

\begin{proof}The equation being Hamiltonian, the Lebesgue measure on $E_0^N$, that is, $\prod da_n db_n$ when $u\in E_0^N$ is written $a_0 c_0 +\sum_{n=1}^N a_n c_n + b_n s_n$ is invariant through its flow $\phi_V^N$, see the proof of proposition \reff{liouville}. 

Then, the measure $\mu_V^N$ is given by

$$d\mu_V^N (u) = d_V^N e^{-\int uW_N u}da_0\prod_{n=1}^N da_n db_n $$
with $d_V^N$ a normalisation factor that depends only on the eigenvalues of $W_N$ and thus does not depend on time. What is more

$$E_V = \int \phi_V^N(t) u_0 W_N \phi_V^N(t) u_0$$
does not either depend on time. So the measure $\mu_V^N$ is invariant through the flow of the equation \reff{perturbeq}.\end{proof}

\begin{definition} Let $\nu_V^N$ the measure on $L^2$ defined as $\mu_V^N \otimes \mu_{M+1}$.\end{definition}

\begin{proposition}The equation 

\begin{equation}\label{perturbinf} \left \lbrace{ \begin{tabular}{ll}
$\partial_t ((1-\Pi_{N}) H^2 (1-\Pi_{N}) + W_N ) u + \partial_x (1-\Pi_N) u + V_n \partial_x \left(V_N u + \frac{(V_N u)^2}{2}\right)=0$ \\
$u|_{t=0} = u_0 \in L^2$ \end{tabular}} \right. \end{equation}
admits a unique global solution $\psi_V^N(t)(u_0)$.

What is more, $\nu_V^N$ is invariant under $\psi_V^N$. \end{proposition}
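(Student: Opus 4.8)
The plan is to show that \reff{perturbinf} is block diagonal with respect to the orthogonal splitting $L^2 = E_0^N \oplus E_{N+1}$, so that its flow factors as the product of the two flows already constructed: $\phi_V^N$ on the finite dimensional space $E_0^N$ and the linear flow $S(t)$ on $E_{N+1}$. First I would decompose the datum as $u_0 = v_0 + w_0$ with $v_0 = \Pi_N u_0 \in E_0^N$ and $w_0 = (1-\Pi_N)u_0 \in E_{N+1}$, and write any candidate solution as $u = v + w$. The structural facts to record are that $V_N = \Pi_N\sqrt{1+V}\Pi_N$ and $W_N$ act nontrivially only on $E_0^N$ (so $V_N u = V_N v$ and $W_N u = W_N v$, both lying in $E_0^N$), whereas $H^2 = 1-\partial_x^2$ and $\partial_x$ are Fourier multipliers, hence preserve each of $E_0^N$ and $E_{N+1}$ and commute with $\Pi_N$. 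In particular the nonlinear term $V_N\partial_x(V_N u + \frac{(V_N u)^2}{2})$ is supported in $E_0^N$ because the outermost $V_N$ applies $\Pi_N$.

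Then I would apply $\Pi_N$ and $1-\Pi_N$ to \reff{perturbinf} separately. Projecting with $\Pi_N$ kills $\partial_x(1-\Pi_N)u$ and the off-block part of the time-derivative operator, leaving exactly $\partial_t W_N v + V_N\partial_x(V_N v + \Pi_N \frac{(V_N v)^2}{2}) = 0$, which is equation \reff{perturbeq} for $v$; here one uses $\Pi_N\partial_x = \partial_x\Pi_N$ to see that the inner $\Pi_N$ appearing in \reff{perturbeq} is automatically produced by the central projection inside $V_N$. Projecting with $1-\Pi_N$ kills the nonlinearity and the $W_N$ block, leaving $\partial_t(1-\partial_x^2)w + \partial_x w = 0$, i.e. the linear equation \reff{linbbm} for $w$. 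Hence $\psi_V^N(t)(u_0) = \phi_V^N(t)(v_0) + S(t)(w_0)$. Global existence and uniqueness follow at once: $\phi_V^N$ is globally defined on $E_0^N$ by the proposition establishing that \reff{perturbeq} has a unique global solution, and $S(t)$ is the globally defined, reversible, $L^2$-isometric linear flow; uniqueness for \reff{perturbinf} reduces to uniqueness of each component.

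For the invariance of $\nu_V^N = \mu_V^N \otimes \mu_{N+1}$ I would argue exactly as for the unperturbed product measure $\mu = \mu_0^N \otimes \mu_{N+1}$ under $\psi_N$. Since the flow factors as $\psi_V^N(t) = \phi_V^N(t)\times S(t)$ on the Cartesian product $E_0^N \times E_{N+1}$, for product sets $A\times B$ one computes $\nu_V^N(\psi_V^N(t)(A\times B)) = \mu_V^N(\phi_V^N(t)A)\,\mu_{N+1}(S(t)B) = \mu_V^N(A)\,\mu_{N+1}(B) = \nu_V^N(A\times B)$, using Proposition \reff{invarn} for the invariance of $\mu_V^N$ under $\phi_V^N$ and the invariance of $\mu_{N+1}$ under $S(t)$. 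Since this holds on all Cartesian products it holds on all measurable sets, exactly as in the unperturbed case.

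The genuine content is entirely carried by the two invariance results already established, so the only real work is the bookkeeping of the block decoupling. The step I would be most careful about is verifying that the nonlinear term of \reff{perturbinf}, written without an explicit inner $\Pi_N$, coincides after projection with the nonlinearity of \reff{perturbeq}: this rests on the commutation $\Pi_N\partial_x = \partial_x\Pi_N$ together with the fact that the central $\Pi_N$ inside $V_N$ truncates $(V_N u)^2$ before the outer differentiation, so that the apparent discrepancy in frequency support is harmless.
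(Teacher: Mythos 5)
Your proposal is correct and follows essentially the same route as the paper: decompose $u=\Pi_N u+(1-\Pi_N)u$, observe that \reff{perturbinf} decouples into \reff{perturbeq} on $E_0^N$ and the linear equation on $E_{N+1}$ so that $\psi_V^N(t)=\phi_V^N(t)+S(t)$, and then transfer the invariance of $\mu_V^N$ and $\mu_{N+1}$ to the product measure via Cartesian products. Your extra care in checking that the central projection inside $V_N$, combined with $\Pi_N\partial_x=\partial_x\Pi_N$, reproduces the inner $\Pi_N$ of \reff{perturbeq} is a detail the paper leaves implicit, and it is verified correctly.
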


\begin{proof}If $u$ is decomposed as $u = v+w = \Pi_N u + (1-\Pi_N) u$ then the equation \reff{perturbinf} is equivalent to

$$\partial_t H^2 w + \partial_x w = 0 \mbox{ with } w_{t=0} = (1-\Pi_N) u_0$$
and

$$\partial_t W_N v + V_N \partial_x \left( V_N v + \Pi_N \frac{(V_Nv)^2}{2}\right) \mbox{ with } v_{t=0} = \Pi_N u_0 \; .$$

Hence the flow $\psi_V^N (t)= \phi_V^N (t)+ S(t)$ globally exists and is unique and the measures of the Cartesian products are invariant under the flow, so the measure $\nu_V^N$ is invariant for all measurable sets. \end{proof}

\subsection{Properties of the new measure}

In order to prove the invariance of the perturbed measure under the infinite dimensional perturbed flow, first estimate the measure of some compact sets in $L^2$ (the same as for $\mu$). The proof of the following proposition is greatly ressembling what have been done by Tzvetkov in \cite{limsup}. 

\begin{proposition} Let $U$ be an open set of $L^2$. The measures of $U$ satisfy : 

$$\mu_V (U) \leq \liminf_{N \rightarrow \infty } \nu_V^N (U) \; .$$
\end{proposition}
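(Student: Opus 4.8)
The plan is to realize $\nu_V^N$ as the law on $L^2$ of a single random variable built on the common probability space $\Omega$, and to show that this variable converges to $\varphi_V$ strongly in $L^2(\Omega,L^2)$; the desired inequality is then the ``open set'' half of the portmanteau theorem, which I would prove by hand in the style of \cite{limsup}. First I would set $\Phi_V^N = \varphi_V^N + \varphi_{N+1}$, where $\varphi_V^N\in E_0^N$ is the variable of the previous subsection and $\varphi_{N+1}\in E_{N+1}$ is the high-frequency tail defining $\mu_{N+1}$. Since $E_0^N$ and $E_{N+1}$ are orthogonal and $\varphi_V^N$ (a function of $g_0,\dots,g_N,h_1,\dots,h_N$ through $\alpha^N=B_Ng^N$) is independent of $\varphi_{N+1}$ (a function of $(g_n,h_n)_{n\geq N+1}$), the joint law of $(\varphi_V^N,\varphi_{N+1})$ is $\mu_V^N\otimes\mu_{N+1}$; pushing it forward by the orthogonal sum $(v,w)\mapsto v+w$ shows that $\nu_V^N$ is exactly the law of $\Phi_V^N$ on $L^2$.

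Next I would establish that $\Phi_V^N\to\varphi_V$ in $L^2(\Omega,L^2)$. By the triangle inequality,
$$\|\Phi_V^N-\varphi_V\|_{L^2_\omega L^2_x}\leq \|\varphi_V^N-\varphi_V\|_{L^2_\omega L^2_x}+\|\varphi_{N+1}\|_{L^2_\omega L^2_x}\; .$$
The first term tends to $0$ by the convergence proposition for $\varphi_V^N$ (in $H^s$, hence a fortiori in $L^2$), while the second satisfies $E(\|\varphi_{N+1}\|_{L^2}^2)=\sum_{n\geq N+1}\frac{2}{1+n^2}\to 0$. Thus $\Phi_V^N\to\varphi_V$ in $L^2(\Omega,L^2)$, and in particular $\|\Phi_V^N-\varphi_V\|_{L^2}\to 0$ in probability.

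For the portmanteau step, for $\delta>0$ let $K_\delta=\{x\in L^2:\; d(x,U^c)\geq\delta\}$, a closed subset of the open set $U$ with $K_\delta\uparrow U$ as $\delta\downarrow 0$. On the event $\{\varphi_V\in K_\delta\}\cap\{\|\Phi_V^N-\varphi_V\|_{L^2}<\delta\}$ one has $\Phi_V^N\in U$, since a point within distance $\delta$ of $\varphi_V\in K_\delta$ cannot lie in $U^c$. Using $\nu_V^N(U)=P(\Phi_V^N\in U)$ and $\mu_V(K_\delta)=P(\varphi_V\in K_\delta)$ together with $P(A\cap B)\geq P(A)-P(B^c)$,
$$\nu_V^N(U)\geq \mu_V(K_\delta)-P\big(\|\Phi_V^N-\varphi_V\|_{L^2}\geq\delta\big)\geq \mu_V(K_\delta)-\delta^{-2}E\big(\|\Phi_V^N-\varphi_V\|_{L^2}^2\big)\; ,$$
by Markov's inequality. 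Letting $N\to\infty$ gives $\liminf_N\nu_V^N(U)\geq\mu_V(K_\delta)$, and letting $\delta\downarrow 0$ with continuity from below yields $\liminf_N\nu_V^N(U)\geq\mu_V(U)$.

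The step I expect to require the most care is the identification of $\nu_V^N$ as a single pushforward law on $L^2$: one must check that the orthogonal sum is a linear homeomorphism of $E_0^N\times E_{N+1}$ onto $L^2$ and that the independence of the two blocks is genuinely built into the construction, so that the law of $\Phi_V^N$ is indeed the product $\mu_V^N\otimes\mu_{N+1}$. Once the coupling is set up correctly, the tail estimate and the Markov--portmanteau argument are routine; the only mild subtlety is that the convergence is taken in $L^2(\Omega,L^2)$ rather than in the stronger $H^s$ norm, which is exactly what is needed since $U$ is open for the $L^2$ topology.
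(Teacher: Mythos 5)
Your proof is correct and rests on the same coupling as the paper's: the paper likewise realizes $\nu_V^N$ as the law of $\widetilde\varphi_V^N=\varphi_V^N+\varphi-\varphi_0^N$ (which is exactly your $\Phi_V^N$) on the common probability space $\Omega$, and exploits the convergence of this sequence toward $\varphi_V$ in $L^2(\Omega,L^2)$. Where you diverge is in the limiting step. The paper asserts that the $L^2(\Omega,L^2)$ convergence yields almost sure convergence, restricts to the full-measure set where this holds, shows that $\{\varphi_V\in U\}\subseteq \liminf_N\{\widetilde\varphi_V^N\in U\}$ using the openness of $U$, and concludes by Fatou's lemma applied to indicator functions. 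You instead use only convergence in probability, exhaust $U$ by the closed sets $K_\delta=\{x:\,d(x,U^c)\geq\delta\}$, and conclude by Markov's inequality and continuity from below. Your route is slightly more robust on one point: convergence in $L^2(\Omega,L^2)$ gives almost sure convergence only along a subsequence (or after an additional martingale or Borel--Cantelli argument), a step the paper glosses over, whereas convergence in probability follows immediately from the $L^2_\omega$ bound and is all your portmanteau argument needs. The price is the extra layer of $\delta$-approximation, which the almost-sure argument avoids. Both proofs require the identification of $\nu_V^N$ with the pushforward of $\mu_V^N\otimes\mu_{N+1}$ under the orthogonal sum, together with the independence of the two Gaussian blocks; you spell this out more explicitly than the paper, which uses it implicitly when writing $\nu_V^N(U)=P(A^N)$.
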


\begin{proof}Let 

$$\widetilde \varphi_V^N = \varphi_V^N + \varphi - \varphi_0^N \; .$$

This sequence converges in $L^2$ and thus almost surely in $\omega$ toward $\varphi_V$. Indeed,

$$E (||\varphi_V - \widetilde \varphi_V^N ||_{L^2}^2) \leq C \left( E( ||\varphi_V -\varphi_V^N||_{L^2}^2) + E(||\varphi - \varphi_0^N||_{L^2}^2) \right) \rightarrow_{N\rightarrow \infty} 0 \; .$$

Let $Z$ be the set included in $\Omega$ such that $\widetilde \varphi_V^N$ converges.

Let 

$$A = \lbrace \omega\in Z \; | \; \varphi_V (\omega ) \in U \rbrace = (\varphi_V)^{-1}(U)\cap Z$$

$$A^N = \lbrace \omega \in Z \; | \; \widetilde \varphi_V^N (\omega ) \in U \rbrace  = (\widetilde \varphi_V^N)^{-1}(U)\cap Z\; .$$

As $P(Z)=1$,

$$\mu_V(U) = P(A) \mbox{ and } \nu_V^N (U)= P(A^N) \; .$$

If $\omega \in A$, as $U$ is open, there exists $\epsilon >0$ such that $\varphi_V(\omega ) + B_\epsilon \subseteq U$. Then, there exists $N_0$, such that for all $N\geq N_0$,

$$\widetilde \varphi_V^N(\omega) \in \varphi_V (\omega ) + B_\epsilon \subseteq U$$
as the sequence $\widetilde \varphi^N_V$ converges toward $\varphi_V$ in $L^2_x$. So, for all $N\geq N_0$,

$$\omega \in A^N$$

$$\omega \in \liminf A^N$$

$$A\subseteq \liminf A^N \; .$$

Hence, thanks to Fatou lemma,

$$\mu_V (U) = P(A) \leq P(\liminf A^N) \leq \liminf P(A^N) = \liminf \nu_V^N (U) \; .$$

\end{proof}

The same property also holds for the $\mu_V^N$ :

\begin{proposition} Let $U$ be an open set of $L^2$. The measures of $U$ satisfy : 

$$\mu_V (U) \leq \liminf_{N \rightarrow \infty } \mu_V^N (U\cap E_0^N) \; .$$
\end{proposition}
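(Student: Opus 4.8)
The plan is to reproduce the argument of the preceding proposition, but working directly with the random variable $\varphi_V^N$ rather than with the corrected sequence $\widetilde\varphi_V^N$, since it is $\varphi_V^N$ — and not $\widetilde\varphi_V^N$ — that induces the measure $\mu_V^N$ on $E_0^N$. The elementary observation that makes this work is that $\varphi_V^N$ takes its values in $E_0^N$, so that for any open $U\subseteq L^2$ one has the identity of events $\lbrace \omega : \varphi_V^N(\omega)\in U\rbrace = \lbrace \omega : \varphi_V^N(\omega)\in U\cap E_0^N\rbrace$, and therefore $\mu_V^N(U\cap E_0^N) = P(\varphi_V^N\in U)$. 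This reduces the claimed inequality to a statement purely about the random variables $\varphi_V^N$ and their limit $\varphi_V$.

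First I would recall from the convergence result established above that $\varphi_V^N\to\varphi_V$ in $L^2(\Omega,H^s)$ for every $s\in[0,\frac{1}{2}[$, and fix a full-measure set $Z\subseteq\Omega$ on which the convergence holds pointwise. I then set $A=(\varphi_V)^{-1}(U)\cap Z$ and $A^N=(\varphi_V^N)^{-1}(U)\cap Z$, so that $\mu_V(U)=P(A)$ and $\mu_V^N(U\cap E_0^N)=P(A^N)$. The core step is the inclusion $A\subseteq\liminf_N A^N$: if $\omega\in A$ then $\varphi_V(\omega)\in U$ and, $U$ being open, there is $\epsilon>0$ with $\varphi_V(\omega)+B_\epsilon\subseteq U$; since $\varphi_V^N(\omega)\to\varphi_V(\omega)$ for $\omega\in Z$, there is $N_0$ with $\varphi_V^N(\omega)\in\varphi_V(\omega)+B_\epsilon\subseteq U$ for all $N\geq N_0$, i.e. $\omega\in A^N$ eventually. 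Fatou's lemma then yields $\mu_V(U)=P(A)\leq P(\liminf_N A^N)\leq\liminf_N P(A^N)=\liminf_N\mu_V^N(U\cap E_0^N)$, which is the asserted bound.

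The delicate point — and the one I would treat most carefully — is supplying the pointwise-convergence input needed to run Fatou's lemma over the \emph{full} sequence $N$. The estimate proven earlier delivers only $L^2(\Omega,H^s)$ convergence of $\varphi_V^N$ to $\varphi_V$; since the coefficients $\alpha_n^N=(B_N g^N)_n$ depend on $N$ through the entire matrix $B_N$, the sequence $\varphi_V^N$ is not a partial sum of a single fixed series and almost sure convergence of the full sequence is not immediate (one gets it only along a subsequence, which would bound $\mu_V(U)$ by the subsequential \emph{limit} rather than the full $\liminf$, and so is insufficient). The clean way around this is to argue through convergence in law: full-sequence convergence in $L^2(\Omega,L^2)$ implies convergence in probability, hence $\mu_V^N\Rightarrow\mu_V$ weakly on the Polish space $L^2$, and the open-set half of the portmanteau theorem gives exactly $\liminf_N\mu_V^N(U)\geq\mu_V(U)$ for the full sequence. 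Combined with $\mu_V^N(U)=\mu_V^N(U\cap E_0^N)$ this is the proposition. Equivalently, one may run the Fatou argument above along an arbitrary subsequence of $(\mu_V^N(U\cap E_0^N))_N$ realizing its $\liminf$, extracting from it a further subsequence on which $\varphi_V^N\to\varphi_V$ almost surely; this recovers the bound with the full $\liminf$ on the right and closes the proof.
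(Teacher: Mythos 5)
Your proof is correct and takes essentially the same route as the paper's, which simply defers to the Fatou argument of the preceding proposition (establish $A\subseteq\liminf_N A^N$ on a full-measure set $Z$, then apply Fatou's lemma), here run directly with $\varphi_V^N$ since $\mu_V^N(U\cap E_0^N)=P(\varphi_V^N\in U)$. Your extra care about the fact that convergence of $\varphi_V^N$ in $L^2(\Omega,H^s)$ gives almost sure convergence only along subsequences --- patched either via the portmanteau theorem for open sets or by extracting an almost surely convergent sub-subsequence from any subsequence realizing the $\liminf$ --- is a genuine improvement, since the paper's phrase ``converges in $L^2$ and thus almost surely'' elides exactly this point (the coefficients $\alpha_n^N=(B_Ng^N)_n$ depend on $N$ through the entire matrix $B_N$, so $(\varphi_V^N)_N$ is not a partial-sum sequence of a fixed series of independent terms).
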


\begin{proof}The proof is very similar to the one above. It uses the convergence of the $\varphi_V^N$ toward $\varphi_V$.\end{proof}

\begin{proposition}\label{compactness} Let $s\in [0,\frac{1}{2}[$. There exists $C,c$ two constants such that for all $R\geq 0$, the measure of the complementary of the closed ball in $H^s$ of radius $R$ satisfies : 

$$\mu_V((B_R^s)^c) \leq C e^{-cR^2}\; .$$
\end{proposition}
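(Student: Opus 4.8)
The plan is to reduce the estimate to a finite-dimensional Gaussian computation through the semicontinuity inequality $\mu_V(U)\leq\liminf_N \mu_V^N(U\cap E_0^N)$ proved just above, valid for any open set $U$. For $s>0$ the ball $B_R^s$ is compact in $L^2$ (and for $s=0$ it is $L^2$-closed), so its complement $U=(B_R^s)^c$ is open, and $(B_R^s)^c\cap E_0^N=\{v\in E_0^N : \|v\|_{H^s}>R\}$. Since $\mu_V^N$ is the law of $\varphi_V^N$, this gives
$$\mu_V\big((B_R^s)^c\big)\leq\liminf_{N\to\infty} P\big(\|\varphi_V^N\|_{H^s}>R\big),$$
so it is enough to bound the right-hand side by $Ce^{-cR^2}$ uniformly in $N$ and in $V$.

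First I would fix $t>0$ and apply the exponential Chebyshev inequality,
$$P\big(\|\varphi_V^N\|_{H^s}>R\big)=P\big(e^{t\|\varphi_V^N\|_{H^s}^2}>e^{tR^2}\big)\leq e^{-tR^2}\,E\big(e^{t\|\varphi_V^N\|_{H^s}^2}\big).$$
Now $\|\varphi_V^N\|_{H^s}^2=\sum_{n=0}^N(1+n^2)^s\big((\alpha_n^N)^2+(\beta_n^N)^2\big)=\langle g^N,Q_N g^N\rangle$ is a quadratic form in the standard Gaussian vector $g^N$, with $Q_N=B_N^* H^{2s}B_N$ symmetric positive semidefinite ($H^{2s}$ being diagonal in the basis $(c_n,s_n)$). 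For such a form the moment generating function is explicit,
$$E\big(e^{t\langle g^N,Q_N g^N\rangle}\big)=\det\big(I-2tQ_N\big)^{-1/2},$$
finite as soon as $2t\,\lambda_{\max}(Q_N)<1$. Using the elementary inequality $-\log(1-x)\leq 2x$ on $[0,\tfrac12]$, whenever $2t\,\lambda_{\max}(Q_N)\leq\tfrac12$ one gets
$$\log\det\big(I-2tQ_N\big)^{-1/2}=-\tfrac12\sum_i\log(1-2t\lambda_i)\leq 2t\,\mathrm{Tr}(Q_N).$$

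The crux is therefore a bound on $\mathrm{Tr}(Q_N)$ (which also dominates $\lambda_{\max}(Q_N)$ since $Q_N\geq 0$) that is uniform in $N$ and $V$. Here I would reuse the trace computation already performed for the convergence of $\varphi_V^N$: setting $P_N=H^sB_N=\Pi_N H^s(1+V)^{-1/2}H^{-1}\Pi_N$ and factoring $H^s(1+V)^{-1/2}H^{-1}=\big(H^s(1+V)^{-1/2}H^{-s}\big)H^{s-1}$,
$$\mathrm{Tr}(Q_N)=\mathrm{Tr}(H^sB_NB_N^*H^s)=\|P_N\|_{HS}^2\leq\big\|H^s(1+V)^{-1/2}H^{-s}\big\|^2\;\mathrm{Tr}\big(H^{2(s-1)}\big).$$
The factor $\mathrm{Tr}(H^{2(s-1)})=1+2\sum_{n\geq1}(1+n^2)^{s-1}$ is finite precisely because $s<\tfrac12$, while $\|H^s(1+V)^{-1/2}H^{-s}\|$ is bounded by a constant depending only on $\|V\|_\infty\leq\tfrac12$ (a commutator/multiplier estimate, $(1+V)^{-1/2}$ being $\mathcal C^2$). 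Hence there is $C_s<\infty$, independent of $N$ and $V$, with $\mathrm{Tr}(Q_N)\leq C_s$ and $\lambda_{\max}(Q_N)\leq C_s$.

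Finally I would take $t=1/(4C_s)$, so that $2t\,\lambda_{\max}(Q_N)\leq\tfrac12$ and $2t\,\mathrm{Tr}(Q_N)\leq\tfrac12$, whence $E(e^{t\|\varphi_V^N\|_{H^s}^2})\leq e^{1/2}$ uniformly in $N$ and $V$, and $P(\|\varphi_V^N\|_{H^s}>R)\leq e^{1/2}e^{-tR^2}$. Passing to the $\liminf$ gives $\mu_V((B_R^s)^c)\leq Ce^{-cR^2}$ with $C=e^{1/2}$ and $c=1/(4C_s)$. The main obstacle is exactly this uniform control of the spectrum of $Q_N$: one needs both the summability $\sum(1+n^2)^{s-1}<\infty$ (forcing $s<\tfrac12$, just as in the estimate for $\mu$) and the uniform-in-$V$ boundedness of the conjugated multiplier $H^s(1+V)^{-1/2}H^{-s}$, so that a single fixed $t>0$ keeps every Gaussian moment generating function finite and uniformly bounded.
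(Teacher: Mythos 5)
Your proposal is correct, but it takes a genuinely different route from the paper. The paper also reduces to the finite-dimensional bound via $\mu_V(U)\leq\liminf_N\mu_V^N(U\cap E_0^N)$, but from there it works with \emph{linear} Gaussian functionals: it first proves a tail bound $P\bigl(|\sum_n a_n\alpha_n^N+a_{-n}\beta_n^N|>\lambda\bigr)\leq 2e^{-c\lambda^2/\sum_n a_n^2/(1+n^2)}$ by an exponential Chebyshev argument on the scalar variable (the key input being $\langle B_N^*a,B_N^*a\rangle\leq C\sum_n a_n^2/(1+n^2)$ uniformly in $N,V$), converts this into the Khinchin-type estimate $\|\cdot\|_{L^q_\omega}\leq\sqrt{C_1q}\,\|\cdot\|$, applies Minkowski's inequality to exchange $L^q_\omega$ and $L^2_x$ so as to bound $\|H^s\varphi_V^N\|_{L^q_\omega L^2_x}\leq\sqrt{C_2q}$, and finally optimizes the Chebyshev bound $(C_2q/R^2)^{q/2}$ over $q\sim R^2$. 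You instead apply the exponential Chebyshev inequality directly to the quadratic form $\|\varphi_V^N\|_{H^s}^2=\langle g^N,Q_Ng^N\rangle$, compute its moment generating function exactly as $\det(I-2tQ_N)^{-1/2}$, and control it by $\exp(2t\,\mathrm{Tr}(Q_N))$ for a single fixed $t$ determined by a uniform bound on $\mathrm{Tr}(Q_N)$ and $\lambda_{\max}(Q_N)$. Both arguments ultimately rest on the same uniform-in-$V$ multiplier estimate for $(1+V)^{-1/2}$ conjugated by powers of $H$ (your $\|H^s(1+V)^{-1/2}H^{-s}\|\leq C$ versus the paper's $\|B_N^*H\|\leq C$, each proved by the same Schur-test/Fourier-coefficient summability using $V\in\mathcal C^2$ and $\|V\|_\infty\leq 1/2$), together with $\sum_n(1+n^2)^{s-1}<\infty$ for $s<\tfrac12$. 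Your determinant route is more direct — it avoids the Minkowski exchange and the optimization over $q$ — at the cost of needing the full spectral control of $Q_N$ rather than only its one-dimensional marginals; the paper's route has the advantage that the same $L^q$ lemma is reused verbatim to prove the companion estimate on $\|(1-\Pi_{N_0})u_0\|_{L^2}$ (Proposition \ref{compactnenn}), which your argument would handle by the same determinant computation with $Q_N$ replaced by $B_N^*(1-\Pi_{N_0})B_N$. The one step you leave as an assertion, the uniform boundedness of $H^s(1+V)^{-1/2}H^{-s}$ on $L^2$, is true and is proved by exactly the inequality $\tfrac{1}{\sqrt{1+n^2}}\leq\tfrac{\sqrt{2(1+k^2)}}{\sqrt{1+(n-k)^2}}$ that the paper uses in its lemma, so it should be written out to make the argument self-contained.
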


\begin{proposition}\label{compactnenn}There exist $C,c$ two constants such that for all $N_0\in \N$ and $R\geq 0$ : 

$$\mu_V( \lbrace u_0 \; | \; ||(1-\Pi_{N_0})u_0||_{L^2} > R \rbrace ) \leq C e^{-c N_0 R^2} \; .$$\end{proposition}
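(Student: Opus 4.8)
The plan is to prove this tail bound directly from the Gaussian nature of $\mu_V$, by controlling an exponential moment and applying the exponential Chebyshev inequality, in the spirit of Proposition \ref{compactness}. Write $X = ||(1-\Pi_{N_0})\varphi_V||_{L^2}$, where $\varphi_V$ is the random variable inducing $\mu_V$; the quantity to bound is $P(X>R)=P(X^2>R^2)$, and for any $\lambda>0$ Chebyshev gives $P(X^2>R^2)\le e^{-\lambda R^2}E(e^{\lambda X^2})$. Everything then reduces to a bound on $E(e^{\lambda X^2})$ that is uniform in $V$ (for $||V||_\infty\le 1/2$) and that permits $\lambda$ as large as a constant times $N_0^2$.

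To compute the exponential moment I would pass through the finite dimensional approximations $\varphi_V^N$. Since $1-\Pi_{N_0}$ is a contraction on $L^2$ and $\varphi_V^N\to\varphi_V$ in $L^2(\Omega,L^2)$, a subsequence of $||(1-\Pi_{N_0})\varphi_V^N||_{L^2}^2$ converges to $X^2$ almost surely, so Fatou's lemma gives $E(e^{\lambda X^2})\le\liminf_N E(e^{\lambda||(1-\Pi_{N_0})\varphi_V^N||_{L^2}^2})$. Each $\varphi_V^N=\sum\alpha_n^N c_n+\sum\beta_n^N s_n$ is a genuine finite dimensional centred Gaussian vector with $\alpha^N=B_N g^N$, so $||(1-\Pi_{N_0})\varphi_V^N||_{L^2}^2=\langle g^N,Q_N g^N\rangle$ with $Q_N=B_N^* P B_N$ and $P$ the coordinate projection onto the modes $N_0<n\le N$. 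The exact Gaussian formula gives $E(e^{\lambda\langle g^N,Q_N g^N\rangle})=\det(I-2\lambda Q_N)^{-1/2}$ for $2\lambda||Q_N||<1$, and using $-\log(1-x)\le 2x$ on $[0,1/2]$ one gets $E(e^{\lambda\langle g^N,Q_N g^N\rangle})\le e^{2\lambda\,\mathrm{Tr}(Q_N)}$ whenever $2\lambda||Q_N||\le 1/2$.

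The heart of the argument is then two bounds, uniform in $N$ and in $V$: $||Q_N||\le a N_0^{-2}$ and $\mathrm{Tr}(Q_N)\le b N_0^{-1}$. Since $Q_N=(PB_N)^*(PB_N)$, it has the same operator norm and trace as $PB_N B_N^* P$, the covariance of the high modes $N_0<n\le N$ of $\varphi_V^N$, which is a finite dimensional truncation of $(1-\Pi_{N_0})(1+V)^{-1/2}H^{-2}(1+V)^{-1/2}(1-\Pi_{N_0})$. Splitting $\Pi_{N_0}=(\Pi_{N_0}-\Pi_{N_0/2})+\Pi_{N_0/2}$ exactly as in the convergence proof of $\varphi_V^N$, the operator norm is controlled by $||H^{-1}(1-\Pi_{N_0/2})||\lesssim N_0^{-1}$ together with the norm of $\Pi_{N_0/2}(1+V)^{-1/2}(1-\Pi_{N_0})$, which is bounded by the sum of the Fourier coefficients of $(1+V)^{-1/2}$ of wavelength larger than $N_0/2$; as $V$ is $\mathcal C^2$ this tail is $\lesssim N_0^{-1}$. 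The same splitting controls the trace by $\mathrm{Tr}((1-\Pi_{N_0/2})H^{-2})=\sum_{n>N_0/2}(1+n^2)^{-1}\lesssim N_0^{-1}$ plus a Fourier tail, yielding both claimed bounds uniformly. I expect this frequency-mixing estimate (the boundedness of $\|(1+V)^{-1/2}\|$ is immediate, but the smallness of the low-high frequency coupling rests on the smoothness of $V$) to be the main obstacle, precisely as in the convergence argument.

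Finally I would optimise. Taking $\lambda=N_0^2/(4a)$ makes $2\lambda||Q_N||\le 1/2$, so $E(e^{\lambda X^2})\le e^{(b/2a)N_0}$ and hence $P(X>R)\le e^{-cN_0^2 R^2+C_1 N_0}$ with $c=1/(4a)$ and $C_1=b/(2a)$. When $cN_0R^2\ge 2C_1$ the negative term dominates, giving $P(X>R)\le e^{-\frac c2 N_0^2 R^2}\le e^{-\frac c2 N_0 R^2}$; when $cN_0R^2<2C_1$ the right hand side $e^{-\frac c2 N_0 R^2}$ stays bounded below by $e^{-C_1}$, so choosing the constant $C\ge e^{C_1}$ lets the trivial bound $P(X>R)\le 1$ suffice. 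Combining the two cases yields $\mu_V(\lbrace ||(1-\Pi_{N_0})u_0||_{L^2}>R\rbrace)\le C e^{-\frac c2 N_0 R^2}$ for all $N_0\in\N$ and all $R\ge 0$, with $C,c$ independent of $V$, which is the claim.
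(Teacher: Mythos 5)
Your proof is correct, but it takes a genuinely different route from the paper's. The paper treats the scalar linear functionals $\langle \varphi_V^N, a\rangle$ one at a time: it first proves a Gaussian tail bound $P(|\langle \varphi_V^N,a\rangle|>\lambda)\leq 2e^{-c\lambda^2/\sum_n a_n^2/(1+n^2)}$ (whose heart is the estimate $\langle B_N^*a,B_N^*a\rangle \leq C\sum_n a_n^2/(1+n^2)$, obtained from the summability of $\sqrt{1+k^2}\,|V_k|$ for the Fourier coefficients of $(1+V)^{-1/2}$), converts it into $L^q_\omega$ moment bounds growing like $\sqrt q$, exchanges $L^q_\omega$ and $L^2_x$ by Minkowski to get $\|(1-\Pi_{N_0})\varphi_V^N\|_{L^2_xL^q_\omega}\leq \sqrt{2C_1q/N_0}$, and finally applies Chebyshev with the $q$-th moment and optimises over $q$; the limit $N\to\infty$ is taken via the $\liminf$ property of $\mu_V^N$ on open sets. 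You instead attack the quadratic form $\langle g^N,Q_Ng^N\rangle$ directly with the exact Gaussian moment generating function $\det(I-2\lambda Q_N)^{-1/2}$, reducing everything to operator-norm and trace bounds on $Q_N=B_N^*PB_N$; these bounds rest on exactly the same ingredient as the paper's key lemma (the splitting $\Pi_{N_0}=(\Pi_{N_0}-\Pi_{N_0/2})+\Pi_{N_0/2}$ and the decay of the Fourier coefficients of $(1+V)^{-1/2}$ coming from its $\mathcal C^2$ regularity, uniformly in $\|V\|_\infty\leq 1/2$), and your limit passage by Fatou along an a.s.\ convergent subsequence is a legitimate substitute for the paper's open-set $\liminf$ argument. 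What each approach buys: the paper's $L^q$--Minkowski machinery is the more flexible one (it handles the $H^s$ tail bound of Proposition \ref{compactness} by the same computation, which is why the two propositions are proved in parallel there), whereas your determinant argument is shorter for the purely $L^2$, genuinely quadratic functional at hand, avoids the optimisation over $q$, and in fact delivers the slightly stronger exponent $e^{-cN_0^2R^2}$ in the regime $N_0R^2\gtrsim 1$ before you deliberately weaken it to match the stated bound. Two cosmetic remarks: the trace bound $\mathrm{Tr}(Q_N)\lesssim N_0^{-1}$ follows even more simply from $\mathrm{Tr}(AB)\leq \mathrm{Tr}(A)\,\|B\|$ with $A=(1-\Pi_{N_0})H^{-2}(1-\Pi_{N_0})$, as in the paper's Cauchy-sequence computation, without any frequency splitting; and since the target is only $e^{-cN_0R^2}$, the choice $\lambda\sim N_0$ already suffices (then $2\lambda\,\mathrm{Tr}(Q_N)$ is bounded by a constant and no case analysis is needed), so the $\|Q_N\|\lesssim N_0^{-2}$ estimate, while correct, is not essential.
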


The proofs of the previous propositions are very similar, hence they shall be proved in parallel. For this, the following lemma should prove itself useful.

\begin{lemma}Let $(a_n)_{n\in \Z}$ such that $\sum_n \frac{a_n^2}{1+n^2} < \infty$. Then

$$P(|\alpha_0^N a_0 + \sum_{n=1}^N \alpha_n^N a_n + \beta_n^N a_{-n} | > \lambda ) \leq 2e^{-c_1 \lambda^2/(\sum_n \frac{a_n^2}{1+n^2})} \; .$$\end{lemma}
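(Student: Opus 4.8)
The plan is to recognise the random variable on the left as a single centred real Gaussian and to control its variance by $\sum_n \frac{a_n^2}{1+n^2}$, after which the estimate reduces to the classical scalar Gaussian tail bound.

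First I would introduce the function $\tilde a = a_0 c_0 + \sum_{n=1}^N (a_n c_n + a_{-n} s_n) \in E_0^N$. By orthonormality of the family $(c_n,s_n)$, the quantity inside the probability is exactly the Euclidean scalar product $\langle \tilde a, \alpha^N\rangle$ of the coefficient vectors. Since $\alpha^N = B_N g^N$ with $g^N$ a standard Gaussian vector, this equals $\langle B_N^* \tilde a, g^N\rangle$, a fixed linear functional of $g^N$; it is therefore a centred real Gaussian variable $X$ with variance $\sigma^2 = \|B_N^* \tilde a\|_{L^2}^2$, where coefficient vectors are identified with functions of $E_0^N$ through the basis.

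Next I would bound $\sigma^2$. Since $H^{-1}$ and $(1+V)^{-1/2}$ are self-adjoint, $B_N^* = \Pi_N H^{-1}(1+V)^{-1/2}\Pi_N$, and as $\tilde a \in E_0^N$ while $\Pi_N$ is an orthogonal projection,
$$\sigma^2 = \|\Pi_N H^{-1}(1+V)^{-1/2}\tilde a\|_{L^2}^2 \le \|H^{-1}(1+V)^{-1/2}\tilde a\|_{L^2}^2 = \|(1+V)^{-1/2}\tilde a\|_{H^{-1}}^2 .$$
The decisive point is that multiplication by $(1+V)^{-1/2}$ is bounded on $H^{-1}$: by duality it suffices to check boundedness on $H^1$, and for a $\mathcal C^1$ multiplier $m$ one has $\|mu\|_{H^1}\le (\|m\|_{L^\infty}+\|\partial_x m\|_{L^\infty})\|u\|_{H^1}$, so with $m=(1+V)^{-1/2}$ the operator norm $K$ is controlled by $\|V\|_\infty$ and hence is bounded uniformly in $N$ and in $V$ (because $\|V\|_\infty \le 1/2$). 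Finally, as $c_n,s_n$ are eigenfunctions of $H^2$ for the eigenvalue $1+n^2$,
$$\|\tilde a\|_{H^{-1}}^2 = \frac{a_0^2}{1+0^2} + \sum_{n=1}^N \frac{a_n^2 + a_{-n}^2}{1+n^2} \le \sum_{n\in\Z}\frac{a_n^2}{1+n^2},$$
so that $\sigma^2 \le K^2 \sum_n \frac{a_n^2}{1+n^2}$.

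To conclude I would invoke the one-dimensional Gaussian tail bound $P(|X|>\lambda)\le 2e^{-\lambda^2/(2\sigma^2)}$ and insert the variance estimate, which gives the claim with $c_1 = 1/(2K^2)$. The main obstacle is exactly this variance estimate: one must justify that multiplication by $(1+V)^{-1/2}$ is bounded on the negative-order space $H^{-1}$ with a constant independent of $N$ and uniform in $V$. This is where the regularity of $V$ and the normalisation $\|V\|_\infty\le 1/2$ enter; everything else — the identification of the linear combination with an inner product against $\tilde a$, and the reduction to a scalar Gaussian — is routine.
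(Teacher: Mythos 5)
Your proof is correct, and while the probabilistic core coincides with the paper's (both reduce the tail bound to controlling the variance $\|B_N^*a^N\|_{L^2}^2$ of the same centred Gaussian --- your direct Gaussian-tail statement and the paper's explicit Chernoff optimisation in $t$ are the same computation), the way you bound that variance is genuinely different. You observe that $B_N^*=\Pi_N H^{-1}(1+V)^{-1/2}\Pi_N$, drop the outer projection, and reduce everything to the boundedness of multiplication by $(1+V)^{-1/2}$ on $H^{-1}$, which you get by duality from the elementary $H^1$ product estimate; this uses only one derivative of $V$ and a uniform bound on $\|V\|_\infty$, and yields $c_1=1/(2K^2)$ cleanly. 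The paper instead expands $B_N^*a^N$ in Fourier series and runs a hands-on convolution estimate, using $\frac{1}{\sqrt{1+n^2}}\leq \frac{\sqrt{2(1+k^2)}}{\sqrt{1+(n-k)^2}}$ and two applications of Cauchy--Schwarz, so that the constant comes out as $\bigl(\sum_k |V_k|\sqrt{2(1+k^2)}\bigr)^2$; the summability of that series is where the $\mathcal C^2$ regularity of $V$ is invoked. Your route is shorter and slightly more economical in regularity; the paper's is more explicit about how the constant depends on the Fourier coefficients of $(1+V)^{-1/2}$, which is in the spirit of its later estimates on the decay of correlations between modes. Both arguments are valid and give the stated uniform-in-$N$ and uniform-in-$V$ bound.
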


\begin{proof}If $\sum_n \frac{a_n^2}{1+n^2} = 0$ then the inequality is satisfied. Else, for all $t> 0$,

$$P(\alpha_0^N c_0 + \sum_{n=1}^N \alpha_n^N a_n + \beta_n^N a_{-n} > \lambda) = P(e^{t\alpha_0^N a_0 + \sum_{n=1}^N \alpha_n^N a_n + \beta_n^N a_{-n}} > e^{t\lambda})$$

$$\leq e^{-t\lambda} E (e^{t \alpha_0^N a_0 + \sum_{n=1}^N \alpha_n^N a_n + \beta_n^N a_{-n}})$$

Compute this average.

$$E (e^{t \alpha_0^N a_0 + \sum_{n=1}^N \alpha_n^N a_n + \beta_n^N a_{-n}}) = \int d_N dx^N e^{t\langle a^N , B_N x^N \rangle }e^{-\langle x^N ,x^N\rangle / 2}$$
where $x^N$ is of size $2N+1$. So,

$$E (e^{t \alpha_0^N a_0 + \sum_{n=1}^N \alpha_n^N a_n + \beta_n^N a_{-n}}) = e^{t^2 \langle B_N^* a^N , B_N^* a^N \rangle /2 }\; .$$

Then, $B_N$ corresponds to the operator $\Pi_N (1+V)^{-1/2} H^{-1} \Pi_N$ so, writing $u_0=a_0$ and for all $n> 0$,

$$u_n = \frac{a_n-ia_{-n}}{2} \; , \; u_{-n} = \frac{a_n + ia_{-n} }{2}$$
such that :

$$a_0 + \sum_{n=1}^N a_n c_n + a_{-n}s_n = \sum_n u_n e^{inx}$$
and

$$(1+V)^{-1/2} = \sum_{n} V_n e^{inx} \; ,$$
$B_N^* a^N$ corresponds to

$$\sum_n \frac{ 1_{|n|\leq N }}{\sqrt{1+n^2}}\left( \sum_k V_k  u_{n-k} 1_{|n-k|\leq N}\right)$$

$$\langle B_N^* a^N , B_N^* a^N \rangle = \sum_{n,k_1,k_2} \frac{ 1_{|n|\leq N }}{1+n^2}\overline V_{k_1} V_{k_2}\overline u_{n-k_1} u_{n-k_2} 1_{|n-k_1|\leq N}1_{|n-k_2|\leq N} $$

$$\leq \sum_{k_1,k_2,n} |V_{k_1}|\; |V_{k_2}| \; \frac{|u_{n-k_1}|}{\sqrt{1+n^2}} \frac{|u_{n-k_1}|}{\sqrt{1+n^2}}\frac{|u_{n-k_2}|}{\sqrt{1+n^2}}$$

Then, use that for all $n$ and all $k$

$$\frac{1}{\sqrt{1+n^2}}\leq \frac{\sqrt{2(1+k^2)}}{\sqrt{1+(n-k)^2}}$$

$$\langle B_N^* a^N , B_N^* a^N \rangle \leq \sum_{n,k_1,k_2} \sqrt{2(1+k_1^2)}|V_{k_1}|\sqrt{2(1+k_2^2)} |V_{k_2}| \frac{|u_{n-k_1}|}{\sqrt{1+(n-k_1)^2}}\frac{|u_{n-k_2}|}{\sqrt{1+(n-k_2)^2}}$$

$$\sum_n \frac{|u_{n-k_1}|}{\sqrt{1+(n-k_1)^2}}\frac{|u_{n-k_2}|}{\sqrt{1+(n-k_2)^2}}\leq \sum_n \frac{|u_n|^2}{1+n^2} = \sum_n \frac{a_n^2}{1+n^2}$$

$$\langle B_N^* a^N , B_N^* a^N \rangle \leq \sum_n \frac{c_n^2}{1+n^2} \left( \sum_k |V_k| \sqrt{2(1+k^2)}\right)^2$$

Now, as $V$ is $\mathcal C^2$ and its norm $||V||_\infty \leq 1/2$,  $(1+V)^{-1/2}$ is also $\mathcal C^2$ so the sum : 

$$\sum_k |V_k| \sqrt{2(1+k^2)}\leq \sqrt 2 \left( V_k^2 (1+k^2)^2 \right)^{1/2} \left( \sum_k \frac{1}{1+k^2} \right)^{1/2} \leq C ||V||_\infty$$
converges and is bounded uniformly in $V$.

$$\langle B_N^* a^N , B_N^* a^N \rangle \leq C \sum_n \frac{a_n^2}{1+n^2}$$

$$P(\alpha_0^N a_0 + \sum_{n=1}^N \alpha_n^N a_n + \beta_n^N a_{-n} > \lambda) \leq e^{-t\lambda} e^{Ct^2 \sum \frac{a_n^2}{1+n^2}/ 2}$$

With $t = \frac{\lambda }{C\sum (a_n^2 / (1+n^2)) }$,

$$P(\alpha_0^N a_0 + \sum_{n=1}^N \alpha_n^N a_n + \beta_n^N a_{-n} > \lambda) \leq e^{-\frac{\lambda^2 }{2C\sum (a_n^2/(1+n^2)) }}\; ,$$
and with the same kind of arguments,

$$P(\alpha_0^N a_0 + \sum_{n=1}^N \alpha_n^N a_n + \beta_n^N a_{-n} < - \lambda) \leq e^{-\frac{\lambda^2 }{2C\sum \frac{a_n^2}{1+n^2} }}\; ,$$
so

$$P(|\alpha_0^N a_0 + \sum_{n=1}^N \alpha_n^N a_n + \beta_n^N a_{-n} |> \lambda) \leq 2 e^{-\frac{\lambda^2 }{2C\sum \frac{a_n^2}{1+n^2} }}\; ,$$

\end{proof}

\begin{lemma} There exists $C_1$ such that for all $q\geq 1 $, and all sequence $a_n$ and all $N$,

$$||a_0 \alpha_0^N + \sum_{n=1}^N a_n \alpha_n^N + a_{-n}\beta_n^N ||_{L^q_\omega} \leq \sqrt{C_1 q\sum_n \frac{a_n^2}{1+n^2}} \; .$$
\end{lemma}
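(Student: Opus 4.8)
The plan is to deduce this $L^q$ bound from the subgaussian tail estimate just established, by integrating the tail against the layer-cake formula. Write $X = a_0\alpha_0^N + \sum_{n=1}^N a_n\alpha_n^N + a_{-n}\beta_n^N$ and set $\sigma^2 = \sum_n \frac{a_n^2}{1+n^2}$, so that the previous lemma reads $P(|X| > \lambda) \leq 2e^{-c_1\lambda^2/\sigma^2}$. If $\sigma = 0$ the inequality is trivial, so assume $\sigma > 0$.

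\emph{First} I would use the identity $E|X|^q = \int_0^\infty q\lambda^{q-1} P(|X|>\lambda)\,d\lambda$ and insert the tail bound to obtain
$$E|X|^q \leq 2q\int_0^\infty \lambda^{q-1} e^{-c_1\lambda^2/\sigma^2}\,d\lambda \; .$$
The change of variables $t = c_1\lambda^2/\sigma^2$ turns the remaining integral into a Gamma function: a short computation gives $\int_0^\infty \lambda^{q-1} e^{-c_1\lambda^2/\sigma^2}\,d\lambda = \frac{1}{2}\left(\frac{\sigma}{\sqrt{c_1}}\right)^q\Gamma(q/2)$, hence
$$E|X|^q \leq q\,\Gamma(q/2)\left(\frac{\sigma}{\sqrt{c_1}}\right)^q \; .$$

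Taking $q$-th roots, the estimate becomes $\|X\|_{L^q_\omega} \leq (q\,\Gamma(q/2))^{1/q}\,\frac{\sigma}{\sqrt{c_1}}$, so it remains only to check that $(q\,\Gamma(q/2))^{1/q}$ grows like $\sqrt q$ uniformly in $q\geq 1$. This is the only genuine work in the argument: by Stirling's formula one has $\Gamma(q/2) \leq C(q/2)^{q/2}e^{-q/2}\sqrt{q}$, and since $q^{1/q}$ stays bounded, this yields $(q\,\Gamma(q/2))^{1/q} \leq C'\sqrt{q}$ for a universal constant $C'$. Absorbing $C'^2/c_1$ into $C_1$ then gives $\|X\|_{L^q_\omega} \leq \sqrt{C_1\, q\,\sigma^2}$, which is the claim.

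An alternative route avoids Stirling altogether. Since $X$ is a linear combination of the jointly Gaussian variables $\alpha_n^N,\beta_n^N$ (the components of the Gaussian vector $\alpha^N = B_N g^N$), it is itself a centred real Gaussian, whose variance is exactly $\langle B_N^* a^N , B_N^* a^N\rangle \leq C\sigma^2$ by the computation carried out in the previous lemma. One may then scale by $\sqrt{\mathrm{Var}(X)}$ and invoke the standard fact that $\|Z\|_{L^q} \leq C\sqrt{q}$ for the standard normal $Z$. Either way, the main (and essentially only) obstacle is the passage from the subgaussian tail to the $\sqrt q$ moment growth, which is the classical equivalence between subgaussian tails and $L^q$ moment bounds; everything else is the explicit constant bookkeeping handled by the previous lemma.
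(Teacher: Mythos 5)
Your proof is correct and follows essentially the same route as the paper: both integrate the subgaussian tail bound from the preceding lemma via the layer-cake formula, change variables, and then show the resulting moment integral grows no faster than $q^{q/2}$. The only cosmetic difference is that the paper establishes this last growth bound by an explicit integration-by-parts recurrence on $\int q y^{q-1}e^{-y^2/2}\,dy$ rather than by identifying it as a Gamma function and invoking Stirling, which amounts to the same computation.
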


\begin{proof} 
$$||a_0 \alpha_0^N + \sum_{n=1}^N a_n \alpha_n^N + a_{-n}\beta_n^N ||_{L^q_\omega}^q = \int q \lambda^{q-1} P(|\alpha_0^N c_0 + \sum_{n=1}^N \alpha_n^N a_n + \beta_n^N a_{-n} |> \lambda)d \lambda$$

$$\leq  \int q \lambda^{q-1} 2 e^{-c_1 \lambda^2/ 2\sum_n \frac{a_n^2}{1+n^2}}$$

By a change a variable $y= \frac{\sqrt{c_1}\lambda}{\sqrt{\sum_n a_n^2/(1+n^2)}}$,

$$\leq 2 \left(\sum_n \frac{(a_n)^2}{c_1(1+n^2)} \right)^{q/2} \int qy^{q-1} e^{-y^2/2}dy$$

$$||a_0 \alpha_0^N + \sum_{n=1}^N a_n \alpha_n^N + a_{-n}\beta_n^N ||_{L^q_\omega}\leq \left(4\sum_n \frac{(a_n)^2}{c_1(1+n^2)} \right)^{1/2}\left( \int qy^{q-1} e^{-y^2/2}dy \right)^{1/q}$$

For all $q\in [1,3]$,

$$\frac{2}{\sqrt{c_1}} \left( \int qy^{q-1} e^{-y^2/2}dy \right)^{1/q} \leq \frac{2}{\sqrt{c_1}} 6\sqrt{2\pi} = C'_1$$

If $q\geq 3$

$$\int qy^{q-1} e^{-y^2/2}dy  = q(q-2) \int y^{q-3} e^{-y^2/2}$$

With $K = \lceil \frac{q-3}{2} \rceil $, and by recurrence, 

$$\int qy^{q-1} e^{-y^2/2}dy = \prod_{k=0}^{K-1} (q-2k) (q-2K)\int y^{q-2K-1} e^{-y^2/2} dy$$

$$\leq q^K C_1\leq q^{q/2} C'_1$$
as $q-2K \in [1,3]$.

$$||a_0 \alpha_0^N + \sum_{n=1}^N a_n \alpha_n^N + a_{-n}\beta_n^N ||_{L^q_\omega}\leq \left(C_1 q \sum_n \frac{(a_n)^2}{(1+n^2)} \right)^{1/2}$$  \end{proof}

Proof of the propositions \reff{compactness},\reff{compactnenn}.

\begin{proof} For all $q\geq 2$,

$$\mu_V^N ( (B_R^s)^c \cap E_0^N) = P( ||\varphi_V^N (\omega)||_{H^s} > R) = P(||H^s \varphi_V^N (\omega)||_{L^2} > R)$$

$$ = P(||H^s \varphi_V^N (\omega)||_{L^2}^q > R^q) \leq R^{-q} E(||H^s \varphi_V^N (\omega)||_{L^2}^q) = R^{-q}||H^s \varphi_V^N (\omega)||_{L^q_\omega,L^2_x}^q$$

$$\leq R^{-q}||H^s \varphi_V^N||_{L^2_x,L^q_\omega}^q$$
thanks to Minkowski inequality.

Similarly,

$$\mu_V^N  (\lbrace u_0 \; |\; ||(1-\Pi_{N_0})u_0 ||_{L^2} > R \rbrace ) \leq R^{-q} ||(1-\Pi_{N_0})\varphi_V^N ||_{L^2_x,L^q_\omega}^q\; .$$

$$H^s \varphi_V^N  = \alpha_0^N c_0(x) + \sum_n \left( (1+n^2)^{s/2}c_n(x) \alpha_n^N + (1+n^2)^{s/2}s_n(x) \beta_n^N\right) $$
so

$$||H^s \varphi_V^N (x)||_{L^q_\omega}\leq \sqrt{C_1 q \left( c_0(x)^2 + \sum_n \frac{c_n(x)^2 + s_n(x)^2}{(1+n^2)^{1-s}}\right) }$$

$$||H^s \varphi_V^N (x)||_{L^2_x,L^q_\omega} \leq \sqrt{C_1 q \left( 1 + \sum_n \frac{2}{(1+n^2)^{1-s}}\right)}$$
and

$$||(1-\Pi_{N_0})\varphi_V^N ||_{L^2_x,L^q_\omega} \leq \sqrt{C_1q \sum_{n> N_0} \frac{2}{1+n^2}} \leq \sqrt{\frac{2C_1}{N_0}}\; .$$

As $s< \frac{1}{2}$, $1-s> \frac{1}{2}$ so the series converges.

$$||H^s \varphi_V^N (x)||_{L^2_x,L^q_\omega} \leq \sqrt{C_2 q }$$
where $C_2$ depends on $s$ but not on $N$.

$$\mu_V^N ( (B_R^s)^c \cap E_0^N) \leq \left( \frac{C_2 q}{R^2} \right)^{q/2}$$.

Also,

$$\mu_V^N  (\lbrace u_0 \; |\; ||(1-\Pi_{N_0})u_0 ||_{L^2} > R \rbrace ) \leq \left( \frac{2C_1q}{N_0 R^2} \right)^{q/2} \; .$$

For $R\leq \sqrt{2eC_2}$, see that with $c = \frac{1}{2eC_2}$,

$$\mu_V^N ( (B_R^s)^c \cap E_0^N) \leq 1 \leq e^{1/2} e^{-cR^2} = C e^{-cR^2}$$
and if $R\geq \sqrt{2eC_2}$, by replacing $q$ with $ \frac{R^2}{eC_2} \geq 2$,

$$\mu_V^N ( (B_R^s)^c \cap E_0^N) \leq e^{-q/2} = e^{-R^2/ (2eC_2)}= e^{-cR^2}\leq Ce^{-cR^2}$$
hence, for all $R$ and all $N$

$$\mu_V^N ( (B_R^s)^c \cap E_0^N) \leq Ce^{-cR^2} \; .$$

With the same kind of arguments,

$$\mu_V^N (\lbrace u_0 \; |\; ||(1-\Pi_{N_0})u_0 ||_{L^2} > R \rbrace ) \leq C^{e^{-c N_0R^2}} \; .$$

Now see that $B_R^s$ is closed in $L^2$ so $(B_R^s)^c$ and $\lbrace u_0 \; |\; ||(1-\Pi_{N_0})u_0 ||_{L^2} > R \rbrace $ are open in $L^2$,

$$\mu_V ((B_R^s)^c)) \leq \liminf \mu_V^N ( (B_R^s)^c \cap E_0^N) \leq Ce^{-cR^2} \; $$
and

$$\mu_V (\lbrace u_0 \; |\; ||(1-\Pi_{N_0})u_0 ||_{L^2} > R \rbrace ) \leq Ce^{-cN_0R^2} \; .$$
\end{proof}

\section{Convergence of the flows}

To study the invariance of the perturbed measure, the property of local uniform convergence of the ``finite dimensional" flows on compacts is needed. 

\subsection{Properties of the finite dimensional operators}

First, investigate on the different operators involved.

\begin{lemma}\label{improv} Let $s< \frac{1}{2}$ and $s_1 > 0$. Let $K$ be the operator defined as : 

$$K = (1-\partial_x^2)^{-1} \partial_x\; .$$

There exists $C$ such that for all $u,v$ in $L^2$ and $g$ a linear operator defined on $L^1$ such that

$$||g|| := \sup_{n\in \Z} \sum_{m\in \Z} |g_m^n| = \sup_{n\in \Z}\sum_m |\frac{1}{2\pi} \int e^{-inx}g(e^{imx}) |$$
is finite, and for all $N\geq 1$,

\begin{enumerate}
\item $||K(g (uv)||_{H^s} \leq C ||g|| \; ||u||_{L^2}||v||_{L^2} $,
\item if $u,v$ are in $H^{s_1}$, $||K(g (1-\Pi_N) uv) \leq C N^{-s_1} ||g||\; ||u||_{H^{s_1}}||v||_{H^{s_1}}$ .
\end{enumerate}\end{lemma}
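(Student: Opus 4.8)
The plan is to pass to the Fourier side, where $K=(1-\partial_x^2)^{-1}\partial_x$ acts as the Fourier multiplier $n\mapsto \frac{in}{1+n^2}$ and $g$ acts by the matrix $(g^n_m)$, and to exploit that $K$ smooths by one derivative while the weight coming from $H^s$ with $s<\tfrac12$ stays summable. Both items follow the same scheme, so I would treat them together. Writing $u=\sum_p u_p e^{ipx}$, $v=\sum_q v_q e^{iqx}$, the product has coefficients $(uv)_m=\sum_p u_p v_{m-p}=(u*v)_m$, while $\widehat{g(w)}(n)=\sum_m g^n_m w_m$. Hence for any $w$,
$$\|K(g(w))\|_{H^s}^2=\sum_n (1+n^2)^s\frac{n^2}{(1+n^2)^2}\Big|\sum_m g^n_m w_m\Big|^2\leq \sum_n \frac{1}{(1+n^2)^{1-s}}\Big(\sum_m |g^n_m|\,|w_m|\Big)^2,$$
using $(1+n^2)^s n^2(1+n^2)^{-2}\leq (1+n^2)^{-(1-s)}$. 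The whole argument then rests on two facts: the series $\sum_n (1+n^2)^{-(1-s)}$ converges precisely because $s<\tfrac12$ (its general term is $\sim |n|^{2s-2}$), and $\sum_m|g^n_m|\leq\|g\|$ uniformly in $n$. Pulling the latter out, it suffices to control $\sup_m|w_m|$, with $w=uv$ in item $1$ and $w=(1-\Pi_N)(uv)$ in item $2$.

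For item $1$ I would simply use Cauchy--Schwarz in the convolution: $|(uv)_m|=|\sum_p u_p v_{m-p}|\leq \|u\|_{\ell^2}\|v\|_{\ell^2}$, which up to the normalising constant is $\|u\|_{L^2}\|v\|_{L^2}$, uniformly in $m$. Inserting $\sup_m|(uv)_m|\leq C\|u\|_{L^2}\|v\|_{L^2}$ into the displayed inequality and summing the convergent series yields $\|K(g(uv))\|_{H^s}\leq C\|g\|\,\|u\|_{L^2}\|v\|_{L^2}$.

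For item $2$ the only new ingredient is a high-frequency gain for the coefficients of the product, and this is the one step that needs care. Since $(1-\Pi_N)(uv)$ keeps only the modes $|m|>N$, and since $|p|+|m-p|\geq|m|$ forces $\max(|p|,|m-p|)\geq|m|/2$, I would bound, for such $m$,
$$(1+p^2)^{s_1/2}(1+(m-p)^2)^{s_1/2}\geq (1+\max(p^2,(m-p)^2))^{s_1/2}\geq c_{s_1}(1+m^2)^{s_1/2},$$
the first inequality because the smaller factor is $\geq 1$. Writing $\tilde u_p=(1+p^2)^{s_1/2}|u_p|$ and $\tilde v_q=(1+q^2)^{s_1/2}|v_q|$, so that $\|\tilde u\|_{\ell^2}=\|u\|_{H^{s_1}}$ and likewise for $v$, this gives for $|m|>N$
$$|(uv)_m|\leq \frac{C_{s_1}}{(1+m^2)^{s_1/2}}(\tilde u*\tilde v)_m\leq \frac{C_{s_1}}{(1+N^2)^{s_1/2}}\|u\|_{H^{s_1}}\|v\|_{H^{s_1}}\leq C N^{-s_1}\|u\|_{H^{s_1}}\|v\|_{H^{s_1}}.$$
Thus $\sup_{|m|>N}|w_m|\lesssim N^{-s_1}\|u\|_{H^{s_1}}\|v\|_{H^{s_1}}$, and feeding this into the same displayed inequality (again summing $\sum_n(1+n^2)^{-(1-s)}$, finite since $s<\tfrac12$) produces the claimed bound in the same $H^s$ norm.

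I do not expect a genuine obstacle: the scheme is uniform across the two items, and the hypothesis $s<\tfrac12$ enters only to keep $\sum_n(1+n^2)^{-(1-s)}$ finite, while the hypothesis that $\|g\|$ is finite is exactly what lets me factor out $\sup_m|w_m|$. The single delicate point is the frequency bookkeeping in item $2$, namely the elementary but essential observation that a high output frequency of a product forces a comparably high frequency in at least one factor; this is what turns the projection $(1-\Pi_N)$ into the quantitative decay $N^{-s_1}$.
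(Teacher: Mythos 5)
Your argument is correct, and for item 1 it is essentially the paper's proof: pass to Fourier coefficients, bound the coefficients of $uv$ uniformly by $\|u\|_{L^2}\|v\|_{L^2}$ via Cauchy--Schwarz on the convolution, factor out $\sum_m|g^n_m|\le\|g\|$, and sum the series $\sum_n n^2(1+n^2)^{s-2}$, convergent precisely because $s<\tfrac12$. For item 2 you take a slightly different (but equally valid) route: the paper keeps the projection as an indicator and splits $1_{|m|>N}\le 1_{|m-l|>N/2}+1_{|l|>N/2}$, thereby reducing to the item-1 estimate with one factor replaced by its high-frequency tail and using $\|(1-\Pi_{N/2})v\|_{L^2}\le (N/2)^{-s_1}\|v\|_{H^{s_1}}$; you instead transfer the $H^{s_1}$ weights pointwise through the convolution, using that $|m|>N$ forces $\max(|p|,|m-p|)\ge |m|/2$ and hence $(1+p^2)^{s_1/2}(1+(m-p)^2)^{s_1/2}\ge c(1+m^2)^{s_1/2}$. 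The two implementations encode the same frequency bookkeeping and give the same constant structure (uniform in $N$, $g$, and in the $H^{s_1}$ norms), so there is no gap; your version has the small aesthetic advantage of treating both items through a single uniform bound on $\sup_m|w_m|$, while the paper's indicator splitting recycles the item-1 computation verbatim.
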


\begin{proof}Write

$$u(x) = \frac{1}{\sqrt{2\pi}} \sum_{k \in \Z} u_k e^{ikx}$$
and

$$v(x) = \frac{1}{\sqrt{2\pi}} \sum_{k \in \Z} v_k e^{ikx}$$
their Fourier series. As $uv $ belongs to $L^1$,

$$uv(x) = \frac{1}{2\pi} \sum_{k\in \Z} \left( \sum_{l}u_{k-l}v_l \right) e^{ikx}$$

$$g(uv) (x) = \frac{1}{\sqrt{2\pi}}\sum_{m\in \Z} e^{imx} \sum_{k\in \Z} g^m_k \sum_{l\in \Z} u_{k-l}v_l\; .$$

Now, give an upper bound of the $n$-th Fourier coefficient $c_n$ of $g(uv)$,

Reversing the sums over $k$ and $l$ gives : 

$$| c_n| \leq \sum_l |v_l| \sum_{k} |g_{k}^n|\; |u_{k-l}| $$
and by a Cauchy-Schwartz inequality

$$|c_n| \leq ||v||_{L^2} \left( \sum_l \left( \sum_k |g_{k}^n|\; |u_{k-l}|\right)^2 \right)^{1/2}\; .$$

Now

$$\sum_l \left( \sum_k |g_{k}^n|\; |u_{k-l}|\right)^2 = \sum_{l,k_1,k_2} |g_{k_1}^n|\; |g_{k_2}^n| \; |u_{k_1-l}||u_{k_2-l}|$$

$$=\sum_{k_1,k_2}|g_{k_1}^n|\; |g_{k_2}^n| \sum_l |u_{k_1-l}||u_{k_2-l}|\; .$$

Use Cauchy-Schwartz inequality a second time to get : 

$$\sum_l |u_{k_1-l}||u_{k_2-l}| \leq ||u||_{L^2}^2$$
so

$$|c_n| \leq ||u||_{L^2}||v||_{L^2}\sqrt{\sum_{k_1,k_2}|g_{k_1}^n||g_{k_2}^n| }$$
but

$$\sum_{k_1,k_2}|g_{k_1}^n||g_{k_2}^n| = \left( \sum_k |g_{k}^n| \right)^2 \leq ||g||^2$$
hence the result. Indeed,

$$||K(g (uv))||_{H^s} =\left( \sum_n \frac{n^2}{(1+n^2)^{2-s}} |c_n|^2 \right)^{1/2} \leq ||u||_{L^2}||v||_{L^2}||g|| \sqrt{\sum_k \frac{k^2}{(1+k^2)^{2-s}}}$$
and the series converges.

For the third one, $c_k$ the $k$-th Fourier coefficient of $g((1-\Pi_N)(uv))$ is given by

$$c_k = \sum_{l,m} g_{m}^k 1_{|m|> N} u_{m-l} v_{l} \; .$$

See that if $|m|> N$ then or $|m-l|> N/2$, or $|l|> N/2$, so $1_{|m|> N} \leq 1_{|m-l|>N/2} + 1_{|l|>N/2}$ and

$$|c_k| \leq \sum_{l,m} |g_{m}^k| 1_{|m-l|> N/2} |u_{m-l}| |v_{l}| + \sum_{l,m} |g_{m}^k|  |u_{m-l}| 1_{|l|> N/2}|v_{l}|\; .$$

Estimate the second sum as the two of them are symmetrical.

$$\sum_{l,m} |g_{m}^k|  |u_{m-l}| 1_{|l|> N/2}|v_{l}|\leq ||(1-\Pi_{N/2} ) v||_{L^2}\left(\sum_l \left(  \sum_m |g_{m}^k|  |u_{m-l}| \right)^2\right)^{1/2}$$

$$\leq (N/2)^{-s_1}||v||_{H^{s_1}}||u||_{L^2}||g|| \leq (N/2)^{-s_1}||v||_{H^{s_1}}||u||_{H^{s_1}}||g|| \; .$$

Hence the result.

\end{proof}

\begin{definition}Let $W = \sqrt{1+V}H^{2} \sqrt{1+V}$ and $D = W^{-1}(1+V)^{1/2}\partial_x (1+V)^{1/2} = (1+V)^{-1/2}K (1+V)^{1/2}$. Call also $D_N=V_N^{-1} K V_N$ and $K_N = \Pi_N K \Pi_N$.\end{definition}

\begin{lemma}Let $s< 1/2$. The operators $D$ and $D_N$ are defined and continuous from $L^2$ to $L^2$ and there exists $C$ such that for all $V$, $N$ : 
\begin{enumerate}
\item for all $u,v \in L^2$, $||K_N(uv)||_{H^s} \leq C||u||_{L^2}||v||_{L^2}$,
\item for all $u,v \in L^2$ , $||D_N V_N^{-1}(\Pi_N (uv)) ||_{H^s}\leq C ||u||_{L^2}\; ||v||_{L^2} $,
\item for all $u \in H^s$, $t\in \R$, $||e^{-tD_N}u||_{L^2} \leq e^{c|t|} ||u||_{H^s}$,
\item for all $u,v \in L^2$ , $||D((1+V)^{-1/2}uv) ||_{H^s}\leq C ||u||_{L^2}\; ||v||_{L^2} $,
\item for all $u \in H^s$, $t\in \R$, $||e^{-tD}u||_{H^s} \leq e^{c|t|} ||u||_{H^s}$. 
\end{enumerate}
\end{lemma}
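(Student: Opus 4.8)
The plan is to split the five assertions into the three bilinear bounds (items 1, 2, 4), which I will reduce to Lemma \ref{improv}, and the two semigroup bounds (items 3, 5), which I will read off from uniform operator-norm estimates. First I record the continuity statement: multiplication by $(1+V)^{\pm 1/2}$ has $L^2$-norm at most $\sqrt2$ (by the bound $||(1+V)^{-1/2}||_0\le\sqrt2$), and $K$ has $L^2$-norm $\sup_n |n|/(1+n^2)=\frac{1}{2}$, so $D=(1+V)^{-1/2}K(1+V)^{1/2}$ is bounded on $L^2$, while $D_N=V_N^{-1}KV_N$ is bounded on $E_0^N$ with $||D_N||_0\le ||V_N^{-1}||_0\,||K||_0\,||V_N||_0\le 4\cdot\frac{1}{2}\cdot2=4$, uniformly in $N$ and $V$ by the Remark on $V_N$.

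For the bilinear bounds the key is a few algebraic cancellations followed by Lemma \ref{improv}. For item 1 I write $K_N(uv)=\Pi_N K(\Pi_N(uv))$; since $\Pi_N$ commutes with $H^s$ it is a contraction on $H^s$, so $||K_N(uv)||_{H^s}\le ||K(\Pi_N(uv))||_{H^s}$, and the first estimate of Lemma \ref{improv} applied with $g=\Pi_N$ (for which $||g||=\sup_n\sum_m|g^n_m|=1$) gives the bound $C||u||_{L^2}||v||_{L^2}$. For item 2 the crucial identity is $D_NV_N^{-1}(\Pi_N(uv))=V_N^{-1}KV_NV_N^{-1}(\Pi_N(uv))=V_N^{-1}K(\Pi_N(uv))$, using that $V_NV_N^{-1}=I_N$ on $E_0^N$ and that $K$ commutes with $\Pi_N$; I then use $||V_N^{-1}w||_{H^s}\le 4||w||_{H^s}$ from the Remark together with Lemma \ref{improv} again with $g=\Pi_N$. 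For item 4 the two middle factors of $D$ cancel: $D((1+V)^{-1/2}uv)=(1+V)^{-1/2}K(uv)$, and here I use that $V\in\mathcal C^2$ forces $(1+V)^{-1/2}\in\mathcal C^2$, so multiplication by it is bounded on $H^s$ for $0\le s<\frac{1}{2}$ uniformly in $V$, while $||K(uv)||_{H^s}\le C||u||_{L^2}||v||_{L^2}$ is Lemma \ref{improv} with $g=\mathrm{Id}$.

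For the semigroup bounds I exploit that $D_N$ and $D$ are bounded operators, so $e^{-tD_N}$ and $e^{-tD}$ are norm-convergent power series with $||e^{-tA}||\le e^{|t|\,||A||}$. Item 3 follows at once from $||D_N||_0\le 4$, giving $||e^{-tD_N}u||_{L^2}\le e^{4|t|}||u||_{L^2}\le e^{4|t|}||u||_{H^s}$ (the last step since $s\ge0$), with the constant independent of $N$ and $V$. Item 5 follows from a uniform bound $||D||_{\mathcal L(H^s)}\le c_0$: indeed $K$ commutes with $H^s$ and multiplication by $(1+V)^{\pm1/2}$ is bounded on $H^s$ uniformly in $V$, whence $||e^{-tD}u||_{H^s}\le e^{c_0|t|}||u||_{H^s}$.

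This already proves the lemma, but it produces a constant $c=O(1)$; if the sharper rate $c=O(||V||_\infty)$ is wanted (reflecting that $D$ is a small perturbation of the $H^s$-skew-adjoint $K$) I would instead run a Grönwall argument on $w=e^{-tD}u$, namely $\frac{d}{dt}||w||_{H^s}^2=-2\langle H^sDw,H^sw\rangle$, write $D=K+a^{-1}[K,a]$ with $a=(1+V)^{1/2}$, observe that the skew part drops out since $\langle H^sKw,H^sw\rangle=\langle KH^sw,H^sw\rangle=0$, and keep only the commutator remainder; the finite-dimensional analogue rests on the clean identity $[K,V_N]=\Pi_N[K,a]\Pi_N$. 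The one genuinely technical point, and the step I expect to be the main obstacle, is then the uniform (in $V$) boundedness of the commutator $[K,(1+V)^{1/2}]$ on $L^2$ and on $H^s$ with norm $O(||V||_\infty)$. I would establish this by Schur's test on the matrix elements $(k(n)-k(m))a_{n-m}$ with $k(n)=in/(1+n^2)$, using the elementary inequality $(1-nm)^2\le(1+n^2)(1+m^2)$, which yields $|k(n)-k(m)|\le |n-m|/(\langle n\rangle\langle m\rangle)$, combined with the decay $|a_j|\lesssim ||V||_\infty\langle j\rangle^{-2}$ coming from $a\in\mathcal C^2$.
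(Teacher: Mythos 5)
Your proof is correct and follows essentially the same route as the paper: items 1, 2 and 4 are reduced to Lemma \ref{improv} with $g=\Pi_N$ or $g=\mathrm{Id}$ after the same algebraic cancellations and the uniform bounds on $V_N$, $V_N^{-1}$, $(1+V)^{\pm 1/2}$, while items 3 and 5 come from the uniform operator-norm bounds on $D_N$ and $D$ (the paper phrases the exponential bound via Gronwall rather than the power series, which is equivalent). The closing paragraph on sharpening $c$ to $O(\|V\|_\infty)$ is not needed for the statement and is not pursued in the paper.
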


\begin{proof}The first, second and fourth inequalities are consequences of the previous lemma with $g=\Pi_N$ or $g = Id_{L^2}$, using the fact that the norms of the operators $V_N$, $V_N^{-1}$, $(1+V)^\alpha$ are uniformly bounded in $V$ and $N$.

To obtain the third or the fifth one, observe that

$$D_N = V_N^{-1} K V_N \mbox{ and } D= (1+V)^{-1/2} K (1+V)^{1/2}$$
so the norm of $D_N$ as an operator is uniformly bounded in $V$ and $N$. 

The main problem is that $D_N$ or $D$ are not antisymmetric, so $e^{-tD_N}$ or $e^{-tD}$ can not be an isometry. Nevertheless setting $f(t) = ||e^{-tD_N }u||_{H^s}$ and using Gronwall lemma, as

$$f(t) \leq \int_{t'=0}^t ||D_N e^{-t'D_N}u ||_{H^s} dt' \leq c \int_{0}^t f(t') dt'$$

$$f(t) \leq f(0) e^{c|t|} = ||u||_{H^s} e^{c|t|} \; .$$
\end{proof}
 
\subsection{Local existence and convergence of the finite dimensional perturbed flows}
 
Show now the local well posedness of the perturbed equations and the uniform convergence of the $2N+1$ dimensional solutions toward the infinite dimensional one on compact sets.

\begin{definition}Let $u_0 \in L^2$ and $A_V^N$ and $A_V$ be defined on $X^0_T$ as

$$A_V^N (u) = e^{-t(1-\Pi_N )K }u_0 + e^{-tD_N}u_0 + \int_{0}^t e^{-(t-s)D_N}D_N V_N^{-1}\Pi_N \frac{(V_N\Pi_N u(s))^2}{2}ds$$
and

$$A_V (u) = e^{-tD}u_0 + \int_{0}^t e^{-(t-s)D}D (1+V)^{-1/2}\frac{((1+V)^{1/2} u(s))^2}{2}ds\; .$$
\end{definition}

\begin{proposition}There exists $C$ independent from $u_0\in H^s$ and $N$ such that for all $T\leq 1$, $u,v \in X^s_T$, $s<1/2$,
\begin{enumerate}
\item $||A_V^N (u)||_{X^s_T} \leq  C \left( ||u_0||_{H^s} + T ||u||_{X^s_T}^2 \right)$,
\item $||A_V (u)||_{X^s_T} \leq  C \left( ||u_0||_{H^s} + T ||u||_{X^s_T}^2 \right)$,
\item $||A_V^N(u)-A_V^N(v)||_{X^s_T} \leq C T \left( ||u||_{X^s_T}+ ||v||_{X^s_T}\right) ||u-v||_{X^s_T}$,
\item $||A_V(u)-A_V(v)||_{X^s_T} \leq C T \left( ||u||_{X^s_T}+ ||v||_{X^s_T}\right) ||u-v||_{X^s_T}$.
\end{enumerate}
\end{proposition}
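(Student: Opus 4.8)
The plan is to estimate each of the four quantities directly from the explicit Duhamel-type formulas defining $A_V^N$ and $A_V$, using the operator bounds established in the previous lemma as black boxes. The four statements split naturally into two pairs: the ``size'' estimates (1) and (2), and the ``Lipschitz'' estimates (3) and (4). Within each pair the finite-dimensional version and the infinite-dimensional version are structurally identical, the only difference being whether one invokes parts 2--3 of the previous lemma (for $D_N$, $V_N$, $\Pi_N$) or parts 4--5 (for $D$, $(1+V)^{\pm 1/2}$). I would prove the $N$-versions in detail and then remark that the other two follow \emph{mutatis mutandis}.

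For the size estimate (1), I would bound the $X^s_T$ norm of $A_V^N(u)$ term by term. The first two terms $e^{-t(1-\Pi_N)K}u_0$ and $e^{-tD_N}u_0$ are controlled by the semigroup bound: $e^{-t(1-\Pi_N)K}$ is isometric on $H^s$ (it is the linear flow $S(t)$ restricted to high modes, hence unitary), while $\|e^{-tD_N}u_0\|_{H^s}\le e^{c|t|}\|u_0\|_{H^s}$ by part 3 of the lemma, and since $T\le 1$ the factor $e^{c|t|}$ is bounded by a constant, giving a contribution $\le C\|u_0\|_{H^s}$. For the integral term, I would pull the semigroup bound out, use $\|e^{-(t-s)D_N}w\|_{H^s}\le e^{c}\|w\|_{H^s}$, and then apply part 2 of the lemma with $u=v=V_N\Pi_N u(s)$, noting that $\|V_N\Pi_N u(s)\|_{L^2}\le 2\|u(s)\|_{L^2}\le 2\|u\|_{X^s_T}$ by the uniform bound on $V_N$; integrating the constant bound over $s\in[0,t]$ produces the factor $T$, yielding $\le CT\|u\|_{X^s_T}^2$. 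Summing the three contributions gives (1).

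For the Lipschitz estimate (3), only the integral term contributes since the linear terms cancel in the difference $A_V^N(u)-A_V^N(v)$. I would write the quadratic difference via the polarization identity
\[
(V_N\Pi_N u)^2-(V_N\Pi_N v)^2 = \bigl(V_N\Pi_N(u-v)\bigr)\bigl(V_N\Pi_N(u+v)\bigr),
\]
apply the semigroup bound and part 2 of the lemma to this product, and use the uniform $V_N$ bound together with $\|u+v\|_{X^s_T}\le\|u\|_{X^s_T}+\|v\|_{X^s_T}$. Integrating over $[0,t]$ again supplies the factor $T$, giving exactly the claimed bound. The infinite-dimensional statements (2) and (4) are obtained identically by replacing $V_N,\Pi_N,D_N$ with $(1+V)^{1/2},\mathrm{Id},D$ and invoking parts 4--5 instead of 2--3.

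The main obstacle, though a mild one, is bookkeeping the smoothing: the key nontrivial point is that the previous lemma gains regularity in the \emph{output} (the products land in $H^s$ while the inputs are only in $L^2$), so the quadratic nonlinearity is controlled in $H^s$ using only the $L^2$ norms of its factors. This is what makes the scheme close at a single regularity level $s<1/2$ and is precisely why the lemma was stated with $H^s$ on the left and $L^2$ on the right. I would take care that the semigroup bounds in parts 3 and 5 are applied \emph{after} the nonlinear estimate, so that the $e^{c|t|}$ factor multiplies an already-$H^s$ quantity; and I would emphasize that every constant ($V_N$, $V_N^{-1}$, the semigroup factor on $T\le 1$) is uniform in $V$ and $N$, which is exactly what is needed later to pass to the limit $N\to\infty$ and to track the dependence on $\|V\|_\infty$.
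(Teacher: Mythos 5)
Your proposal is correct and follows essentially the same route as the paper: the same three-term decomposition of $A_V^N(u)$, the same semigroup and bilinear bounds from the preceding lemma for the linear and Duhamel terms, and the same polarization identity for the Lipschitz estimates, with the infinite-dimensional cases handled \emph{mutatis mutandis}. The only cosmetic difference is that the paper extracts the factor $T$ from the Duhamel term via a Cauchy--Schwarz inequality in time rather than a direct Minkowski-type integration, which changes nothing in substance.
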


\begin{proof}Write $A_V^N(u) = I +II+III$ with

$$I = e^{-t(1-\Pi_N )K }u_0 \; , \; II = e^{-tD_N}u_0$$
and

$$III = \int_{0}^t e^{-(t-s)D_N}D_N V_N^{-1}\Pi_N \frac{(V_N \Pi_N u(s))^2}{2}ds \; .$$

$$||I||_{X^s_T} \leq ||(1-\Pi_N)u_0||_{H^s} \leq ||u_0||_{H^s}$$

$$||II||_{X^s_T} \leq e^{cT} ||u_0||_{H^s} \leq C ||u_0||_{H^s}$$

$$||III||_{H^s} = \left( \int_x | H^s \int_{0}^t e^{-(t-s)D_N}D_N V_N^{-1} \Pi_N \frac{(V_N\Pi_N u(s))^2}{2}ds |^2 \right)^{1/2} $$

$$\leq \left( \int_x T\int_{0}^t |H^s e^{-(t-s)D_N}D_N V_N^{-1}\frac{(V_N\Pi_N u(s))^2}{2}|^2 ds\right)^{1/2}$$

$$\leq T^{1/2}\left( \int_{0}^t ||e^{-(t-s)D_N}D_N V_N^{-1}\Pi_N \frac{(V_N \Pi_N u(s))^2}{2}||_{H^s}^2 \right)^{1/2}$$

But, as

$$||e^{-(t-s)D_N}D_N V_N^{-1}\Pi_N \frac{(V_N \Pi_N u(s))^2}{2}||_{H^s} \leq C ||D_N V_N^{-1}\Pi_N \frac{(V_N \Pi_N u(s))^2}{2}||_{H^s}$$

$$\leq C ||V_N \Pi_N u(s)||_{L^2}^2\leq C ||u||_{X^s_T}^2$$

$$||III||_{H^s} \leq C T ||u||_{X^s_T}^2 $$
so

$$||A_V^N(u)||_{X^s_T} \leq C(||u_0||_{H^s} + T ||u||_{X^s_T}^2 )\; .$$

The same proof holds for (2).

For (3) and (4), compute $A_V^N(u)-A_V^N(v)$ : 

$$A_V^N(u)-A_V^N(v) = \int_{0}^t e^{-(t-s)D_N}D_N V_N^{-1}\Pi_N \frac{(V_N \Pi_N u(s))^2- (V_N \Pi_N v(s))^2}{2}ds$$

$$=\int_{0}^t e^{-(t-s)D_N}D_N V_N^{-1}\Pi_N \frac{(V_N \Pi_N (u+v))(V_N \Pi_N (u-v))}{2}ds $$

Hence, with the same computation as before for $III$: 

$$||A_V^N(u)-A_V^N(v)||_{X^s_T} \leq CT ||u+v||_{X_T^s}||u-v||_{X^s_T} \leq CT \left(||u||_{X_T^s}+||v||_{X_T^s}\right)||u-v||_{X_T^s}\; .$$
\end{proof}

\begin{proposition}Let $R> 0$, and $0\leq s< 1/2$, there exists $C_s$ such that for all $u_0 \in B_R^s$, the flows $\psi_V^N$ of \reff{perturbinf} and $\psi_V$ of :

\begin{equation}\label{perturbe} \left \lbrace{ \begin{tabular}{ll}
$\partial_t \left( (1+V)^{1/2} (1-\partial_x^2)(1+V)^{1/2} \right) u+ (1+V)^{1/2}\partial_x \left( (1+V)^{1/2} u + \frac{((1+V)^{1/2}u)^2}{2} \right) = 0$\\
$u|_{t=0} = u_0$ \end{tabular}} \right. \end{equation}
are defined for $t\in [-T,T]$ with $T = \frac{1}{8C_s^2 R}$ and satisfy

$$||\psi_V(u_0) ||_{X^s_T}\; , \; ||\psi_V^N(u_0) ||_{X^s_T} \leq 2C_s R \; .$$
\end{proposition}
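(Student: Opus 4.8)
The plan is to obtain $\psi_V^N(u_0)$ and $\psi_V(u_0)$ as the unique fixed points of the Duhamel maps $A_V^N$ and $A_V$ via the Banach fixed point theorem, exactly as in the unperturbed case of Lemma \reff{locex}. By construction the integral equation $u = A_V^N(u)$ is the mild formulation of \reff{perturbinf} and $u = A_V(u)$ that of \reff{perturbe}, so a fixed point of $A_V^N$ (resp. $A_V$) in $X^s_T$ is precisely the solution $\psi_V^N(u_0)$ (resp. $\psi_V(u_0)$). Since $X^s_T = \mathcal C([-T,T],H^s)$ is a Banach space, any closed ball of $X^s_T$ is a complete metric space, and it suffices to exhibit a ball on which $A_V^N$ and $A_V$ are contractions. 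The four estimates of the previous proposition are exactly the two ingredients, stability and Lipschitz contraction, needed for this.

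Concretely, I would fix $C_s = \max(C,1)$, with $C$ the constant of the previous proposition, and work on the closed ball
$$\mathcal B = \lbrace u \in X^s_T \; | \; ||u||_{X^s_T} \leq 2C_s R \rbrace \; .$$
For the self-mapping property, estimate (1) gives for $u \in \mathcal B$
$$||A_V^N(u)||_{X^s_T} \leq C\left( ||u_0||_{H^s} + T ||u||_{X^s_T}^2\right) \leq C R + 4 C C_s^2 R^2 T \; ,$$
and substituting $T = \frac{1}{8 C_s^2 R}$ the second term equals $\frac{CR}{2}$, so $||A_V^N(u)||_{X^s_T} \leq \frac{3}{2} C R \leq 2 C_s R$; hence $A_V^N$ maps $\mathcal B$ into itself. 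For the contraction property, estimate (3) gives for $u,v \in \mathcal B$
$$||A_V^N(u)-A_V^N(v)||_{X^s_T} \leq C T\left( ||u||_{X^s_T}+||v||_{X^s_T}\right)||u-v||_{X^s_T} \leq 4 C C_s R \, T \, ||u-v||_{X^s_T} \; ,$$
and with the same value of $T$ the prefactor is $\frac{C}{2C_s} \leq \frac{1}{2}$, so $A_V^N$ is a contraction of ratio $1/2$ on $\mathcal B$. The Banach fixed point theorem then yields a unique fixed point $\psi_V^N(u_0) \in \mathcal B$, which in particular satisfies the claimed bound $||\psi_V^N(u_0)||_{X^s_T}\leq 2C_s R$. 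Estimates (2) and (4) give the identical conclusion for $A_V$ and $\psi_V$, word for word.

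This argument is essentially routine once estimates (1)--(4) are in hand; the only point requiring care is the compatibility with the hypothesis $T\leq 1$ of the previous proposition. This is why I take $C_s \geq 1$: then $T = \frac{1}{8C_s^2 R} \leq 1$ as soon as $R$ is bounded below, which is the regime of interest since the measure $\mu_V$ concentrates on large $H^s$ balls by Proposition \reff{compactness}, and for small $R$ one simply replaces $T$ by $\min(1, \frac{1}{8C_s^2 R})$ without affecting any estimate. The genuinely substantive content, namely the bilinear smoothing estimates on $D_N V_N^{-1}\Pi_N$ and the exponential bound $||e^{-tD_N}u||_{L^2}\leq e^{c|t|}||u||_{H^s}$ compensating the lack of antisymmetry of $D_N$, has already been absorbed into the preceding proposition, so no further analytic obstacle remains here.
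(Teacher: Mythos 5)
Your proof is correct and follows essentially the same route as the paper: the paper also takes the ball of radius $2C_sR$ in $X^s_T$, checks stability via $C_s(R+T(2C_sR)^2)\leq 2C_sR$ and contraction with ratio $C_s(4C_sR)T=\frac{1}{2}$ for $T=\frac{1}{8C_s^2R}$, and invokes the Banach fixed point theorem. Your added remarks on renaming the constant and on the compatibility with $T\leq 1$ are harmless refinements of details the paper leaves implicit.
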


\begin{proof}For all $u_0 \in B_R^s$ and $u,v \in X_T^s$ with $||u||_{X^s_T},||v||_{X^s_T} \leq 2C_sR$,

$$||A_V(u)||_{X^s_T},||A_V^N(u)||_{X^s_T}\leq C_s (R + T(2C_s R)^2)\leq 2C_s R$$
and  

$$||A_V(u)-A_V(v)||_{X^s_T}\; ,||A_V^N(u)-A_V^N(v)||_{X^s_T} \leq C_s (4C_s R) T||u-v||_{X^s_T} = \frac{1}{2} ||u-v||_{X^s_T}$$
so both $A_V^N$ and $A_V$ have a unique fix point in the ball of radius $2C_sR$ in $X_T^s$. \end{proof}

As for BBM, there is a property of uniform convergence on compacts : 

\begin{lemma}Let $\epsilon > 0$ and $R>0$. There exists $N_0\in \N$ such that for all $u_0 \in B_R^s$, for all $N\geq N_0$, and $T \leq \frac{1}{C_2 R}$,

$$||\psi_V(u_0) -\psi_V^N(u_0)||_{X_T^0}\leq \epsilon \; .$$\end{lemma}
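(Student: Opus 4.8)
The plan is to mimic the proof of Lemma \ref{loconv} for the unperturbed flow, replacing the clean projection structure there by a careful control of the operator differences between the full and truncated objects. Write $u = \psi_V(u_0)$ and $u_N = \psi_V^N(u_0)$, which are the fixed points of $A_V$ and $A_V^N$ respectively; by the preceding proposition both satisfy $\|u\|_{X^s_T}, \|u_N\|_{X^s_T} \leq 2C_s R$ as soon as $T \leq \frac{1}{8C_s^2 R}$, and a fortiori $\|u\|_{X^0_T}, \|u_N\|_{X^0_T} \leq 2C_s R$. I would split
\[ u - u_N = \big(A_V(u) - A_V(u_N)\big) + \big(A_V(u_N) - A_V^N(u_N)\big). \]
By the contraction estimate of the preceding proposition (applied with $s = 0$) and the bound $\|u_N\|_{X^0_T} \leq 2C_s R$, choosing $C_2$ large enough (so that $T \leq \frac{1}{C_2 R}$ forces $CT(\|u\|_{X^0_T} + \|u_N\|_{X^0_T}) \leq \frac12$) gives $\|A_V(u) - A_V(u_N)\|_{X^0_T} \leq \frac12 \|u - u_N\|_{X^0_T}$. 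It then suffices to show that the consistency error $\delta_N := \|A_V(u_N) - A_V^N(u_N)\|_{X^0_T}$ tends to $0$ as $N \to \infty$, uniformly in $u_0 \in B_R^s$, since then $\|u - u_N\|_{X^0_T} \leq 2\delta_N \leq \epsilon$ for $N$ large.

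To estimate $\delta_N$ I would rely on two structural facts. First, the conjugation identities $e^{-tD} = (1+V)^{-1/2} e^{-tK} (1+V)^{1/2}$ and $e^{-tD_N}\Pi_N = V_N^{-1} e^{-tK} V_N \Pi_N$, together with the $L^2$-isometry of $e^{-tK}$ and the uniform bounds on $(1+V)^{\pm 1/2}$ and $V_N^{\pm 1}$, show that $e^{-tD}$ and $e^{-tD_N}$ are bounded on $L^2$ uniformly in $t$, $N$ and $V$. Second, for a low-mode function $f \in E_0^N$ the discrepancy $\big((1+V)^{1/2} - V_N\big) f = (1-\Pi_N)(1+V)^{1/2} f$ is small in $L^2$: splitting $f = \Pi_{N/2} f + (\Pi_N - \Pi_{N/2}) f$, the first piece is controlled by $\sum_{|k| > N/2} |w_k|$, where $(w_k)$ are the Fourier coefficients of $(1+V)^{1/2}$ (summable since $V \in \mathcal C^1$), and the second by $\|(\Pi_N - \Pi_{N/2}) f\|_{L^2}$; the analogous statement for $(1+V)^{-1/2} - V_N^{-1}$ follows from the identity $\big((1+V)^{-1/2} - V_N^{-1}\big) f = -(1+V)^{-1/2}(1-\Pi_N)(1+V)^{1/2} V_N^{-1} f$. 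This is exactly the frequency-localisation mechanism already used to prove the convergence of $\varphi_V^N$.

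With these in hand I would bound the linear and nonlinear contributions to $\delta_N$ separately. For the linear part, split $u_0 = \Pi_N u_0 + (1-\Pi_N) u_0$; recalling that the linear flow of \reff{perturbinf} propagates the high modes by $e^{-tK}$ and the low modes by $e^{-tD_N}$, the difference with $e^{-tD} u_0$ is $\big(e^{-tD} - e^{-tK}\big)(1-\Pi_N) u_0 + \big(e^{-tD} - e^{-tD_N}\big)\Pi_N u_0$. The first term is bounded by $C \|(1-\Pi_N) u_0\|_{L^2} \leq C N^{-s} R$ using the uniform $L^2$ bounds; the second is treated by telescoping through the conjugation identities, each factor $(1+V)^{1/2} - V_N$ or $(1+V)^{-1/2} - V_N^{-1}$ producing, via the second fact above, a quantity tending to $0$ uniformly in $u_0 \in B_R^s$ (the smallness coming both from the tail $\|(1-\Pi_{N/2})\Pi_N u_0\|_{L^2} \leq C(N/2)^{-s} R$ and from the summable Fourier tail of $(1+V)^{\pm 1/2}$). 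For the nonlinear part I would, after bounding the propagators in $L^2$ and using $\int_0^t \leq T$, peel off one by one the differences between $D(1+V)^{-1/2}\frac{((1+V)^{1/2} u_N)^2}{2}$ and $D_N V_N^{-1}\Pi_N \frac{(V_N \Pi_N u_N)^2}{2}$: every term in which a projection $(1-\Pi_N)$ lands on the quadratic expression gains a factor $N^{-s}$ by Lemma \ref{improv} (item 2, with $s_1 = s$ and $\|u_N\|_{X^s_T} \leq 2C_s R$), while every term carrying a factor $(1+V)^{\pm 1/2} - V_N^{\pm 1}$ is small by the frequency-localisation argument. All estimates being uniform in $u_0 \in B_R^s$, this yields $\delta_N \to 0$ and the lemma.

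I expect the main obstacle to be the nonlinear consistency term: although each individual difference is small, organising the telescoping so that every one of the many terms is genuinely controlled either by Lemma \ref{improv} or by the $V_N$-versus-$(1+V)^{1/2}$ estimate — and verifying that the gains are uniform over the whole ball $B_R^s$ — is the delicate bookkeeping of the argument. The linear low-mode term $\big(e^{-tD} - e^{-tD_N}\big)\Pi_N u_0$ is conceptually the same difficulty in a simpler guise.
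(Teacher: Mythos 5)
Your proof is correct and follows essentially the same strategy as the paper's: compare the two fixed points through the maps $A_V$ and $A_V^N$, absorb a contractive piece for $T\lesssim 1/R$, and control the consistency error by combining the $N^{-s}$ gain of Lemma \ref{improv} with the smallness of $(1+V)^{\pm 1/2}-V_N^{\pm 1}$ coming from the summable Fourier tail of $\sqrt{1+V}$. The only (immaterial) difference is the grouping — you contract $A_V$ on the pair $(u,u_N)$ and evaluate the consistency error at $u_N$, whereas the paper evaluates the consistency terms at $u$ and contracts $A_V^N$ — and your conjugation identities $e^{-tD}=(1+V)^{-1/2}e^{-tK}(1+V)^{1/2}$, $e^{-tD_N}=V_N^{-1}e^{-tK}V_N$ give the uniform propagator bounds slightly more directly than the paper's Gronwall argument.
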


\begin{proof} Let $u = \psi_V(u_0)$ and $u^N= \psi_V^N (u_0)$,

$$||u-u^N||_{X_T^s} = ||A_V(u) - A_V^N(u^N) ||_{X_T^0} = ||I+II+III+IV+V||_{X_T^0}$$
with

$$I = (1-\Pi_N)S(t) u_0$$

$$II = (e^{-tD}-e^{-tD_N})u_0$$

$$III = \int_{0}^t \left( e^{-(t-s)D}D(1+V)^{-1/2} - e^{-(t-s)D_N}D_NV_N^{-1}\Pi_N \right) \frac{((1+V)^{1/2}u)^2(s)}{2}ds \; ,$$

$$IV = \int_{0}^t  e^{-(t-s)D_N}D_N V_N \Pi_N \left( \frac{((1+V)^{1/2}u)^2(s)}{2} -\frac{(V_N \Pi_N u)^2(s)}{2}\right) ds $$
and 

$$V = \int_{0}^t e^{-(t-s)D_N}D_N V_N \Pi_N \left( \frac{(V_N u)^2(s)}{2} -\frac{(V_N u^N)^2(s)}{2}\right) ds\; .$$

First, remark that

$$||(D-D_N)(uv) ||_{L^2} \leq C_s N^{-s} ||u||_{H^s}||v||_{H^s}\; .$$

Indeed, $D-D_N$ can be written,

$$D-D_N = \left( (1+V)^{-1/2}-V_N^{-1} \right) K (1+V)^{1/2} + V_N^{-1} K \left( (1-\Pi_N) (1+V)^{1/2} \right) $$
\hspace{1cm} $$+ V_N^{-1} K (1-\Pi_N) (1+V)^{1/2} (1-\Pi_N)$$

As operators, the multiplication by $(1+V)^{1/2}$  and $V_N$ are quite close. Computing the difference between their inverse gives

$$||(1+V)^{-1/2} - V_N^{-1} ||_{H^s \rightarrow L^2} \leq ||1-\Pi_N||_{H^s \rightarrow L^2} \left( 1+ \sum_k k ||(1+V)^{-1/2} -1||_{H^s\rightarrow H^s}^k \right)\leq C N^{-s}$$
as $ ||(1+V)^{-1/2} -1||_{H^s\rightarrow H^s} \leq ||(1+V)^{-1/2} -1||_{L^\infty} \leq \frac{1}{\sqrt 2}$.

Therefore, it appears that :

$$||\left( (1+V)^{-1/2}-V_N^{-1} \right) K (1+V)^{1/2} (uv)|| \leq CN^{-s} ||K(1+V)^{1/2}(uv)||_{H^s}$$
and as $\sum_m |(1+V)^{1/2}|_m^n = \sum_m |(1+V)^{1/2}|_{n-m}$ is the sum of the Fourier coefficients of $\sqrt{1+V}$ which is $\mathcal C^2$ with a uniform bound on its second derivative, it comes that

$$||K(1+V)^{1/2} (uv)||_{H^s} \leq C_s ||u||_{L^2} ||v||_{L^2 }\leq C_s ||u||_{H^s} ||v||_{H^s }$$
with $C_s$ a constant that does neither depend on $N$ nor on $V$ : 

$$||\left( (1+V)^{-1/2}-V_N^{-1} \right) K (1+V)^{1/2} (uv)|| \leq C_sN^{-s}||u||_{H^s}||v||_{H^s}\; .$$

As the norm of $V_N$ as an operator from $H^s$ to $H^s$ is bounded uniformly in $V$ and $K$ and $\Pi_N$ commute, 

$$||V_N^{-1} K \left( (1-\Pi_N) (1+V)^{1/2} \right)||_{H^s} \leq C|| K \left( (1-\Pi_N) (1+V)^{1/2} \right)(uv)||_{H^s}$$

$$\leq C_sN^{-s} ||(1+V)^{1/2}u||_{L^2}||v||_{L^2} \leq C_s N^{-s} ||u||_{H^s}||v||_{H^s}\; .$$

The same goes for $V_N^{-1} K (1-\Pi_N) (1+V)^{1/2} (1-\Pi_N)$ as the sum of the Fourier coeffocients of $(1+V)^{1/2}$ are uniformly bounded in $V$ so 

$$||(D-D_N)(uv) ||_{L^2} \leq C_s N^{-s} ||u||_{H^s}||v||_{H^s}\; .$$

After what the $I$ and $II$ are less than $C_sN^{-s}R$, $III$ and $IV$ are less than $C_sN^{-s}R^2$ and $V$ by

$$V \leq C_sT R ||u-u^N||_{X_T^0}\; .$$

Indeed,

$$||II||_{X^0_T} \leq T ||(D-D_N)e^{-tD}u_0||_{L^2} + T  ||D_N II ||_{X^0_T}$$

$$ \leq CTN^{-s} ||u_0||_{H^s} +CT ||II||_{X^0_T}$$
so for $T$ small enough, $T\leq \frac{1}{2C_s R}$, the uniform convergence is satisfied.
\end{proof}

\subsection{Invariance of the perturbed measure under the perturbed flow}

Show now that the perturbed measure is invariant trough the perturbed flow. For that, the techniques used are basically the same as in the first section, in particular regarding the local invariance.

\begin{lemma}\label{locinvp} Let $s\in ]0,\frac{1}{2}[$ and $R> 0$. Let $A$ be a measurable set of $L^2$ included in $B_R^s$. Let $T= \frac{1}{C R}$ with $C$ depending on $s$ big enough , for all $t\in [-T,T]$,

$$\mu(\psi_V (t)A)= \mu(A) \; .$$\end{lemma}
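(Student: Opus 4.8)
The plan is to transpose the argument of lemma \reff{locinv} to the perturbed setting, replacing the finite dimensional flows $\psi_N$ by $\psi_V^N$ and the measure $\mu$ by the perturbed measure $\mu_V$ (which is what the statement refers to). As before it is enough to treat $A$ closed: then $A$ is compact, since $A\subseteq B_R^s$ and $B_R^s$ is compact in $L^2$ for $s>0$; the open case follows by passing to complements inside $B_R^s$, and the general measurable case from the reversibility of $\psi_V$ together with stability of the invariance class under countable disjoint unions. The one structural novelty with respect to lemma \reff{locinv} is that there $\mu=\mu_0^N\otimes\mu_{N+1}$ is \emph{exactly} invariant under every $\psi_N$, whereas here the invariant measure of $\psi_V^N$ is $\nu_V^N=\mu_V^N\otimes\mu_{N+1}$, which is only asymptotically $\mu_V$; so invariance has to be carried through a limit in $N$.

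First I would record the two one-sided comparisons between $\nu_V^N$ and $\mu_V$ that I will need. The proposition above supplies $\mu_V(U)\leq\liminf_N\nu_V^N(U)$ for every open $U\subseteq L^2$. Applying this to $U=F^c$ and using that $\mu_V$ and all the $\nu_V^N$ are probability measures, complementation yields
$$\limsup_{N\to\infty}\nu_V^N(F)\leq\mu_V(F)\qquad\text{for every closed }F\subseteq L^2 .$$

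For the upper bound, fix $t\in[-T,T]$ and $\epsilon>0$, and put $O_\epsilon=\psi_V(t)A+B^0_\epsilon$, an open set containing the compact set $\psi_V(t)A$. The uniform convergence lemma for the perturbed flows gives $N_0$ with $\psi_V(t)A\subseteq\psi_V^N(t)A+B^0_\epsilon$ for $N\geq N_0$, hence $O_\epsilon\subseteq\psi_V^N(t)A+B^0_{2\epsilon}$; and since the perturbed local existence proposition makes $\psi_V^N(t)$ Lipschitz in $L^2$ on the relevant ball and reversible, the ball can be absorbed exactly as in lemma \reff{locinv}, giving $O_\epsilon\subseteq\psi_V^N(t)(A+B^0_{\delta})$ with $\delta=2C\epsilon$. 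Writing $F_\delta=A+\overline{B^0_{\delta}}$ (closed, since $A$ is compact), the invariance of $\nu_V^N$ under $\psi_V^N$, the open-set comparison, monotonicity and the closed-set comparison above chain into
$$\mu_V(\psi_V(t)A)\leq\mu_V(O_\epsilon)\leq\liminf_N\nu_V^N(O_\epsilon)\leq\liminf_N\nu_V^N(A+B^0_{\delta})\leq\limsup_N\nu_V^N(F_\delta)\leq\mu_V(F_\delta).$$
Letting $\epsilon\to0$ the sets $F_\delta$ decrease to $A$, so continuity from above of the probability measure $\mu_V$ gives $\mu_V(\psi_V(t)A)\leq\mu_V(A)$.

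The reverse inequality I would obtain from reversibility of $\psi_V$: the set $\psi_V(t)A$ is closed and, by the $H^s$ bound of the perturbed local existence proposition, contained in $B_{2C_sR}^s$, so applying the bound just proved to $\psi_V(t)A$ at time $-t$ gives
$$\mu_V(A)=\mu_V\big(\psi_V(-t)\psi_V(t)A\big)\leq\mu_V(\psi_V(t)A),$$
and equality follows. For this to be valid on the whole window one takes the constant $C$ in $T=\tfrac1{CR}$ large enough that $T$ also lies below the local time scale attached to radius $2C_sR$; as $2C_s$ is a fixed constant this only costs a larger $C$, which is exactly what the statement allows. I expect the crux to be this transfer from the finite dimensional invariant measures to $\mu_V$: it forces one to marry three ingredients that were all trivial or absent in lemma \reff{locinv}, namely the uniform-in-$u_0$ convergence $\psi_V^N\to\psi_V$ on the compact $B_R^s$, the Lipschitz reversibility of $\psi_V^N$, and the weak comparison $\nu_V^N\Rightarrow\mu_V$; keeping the radius doubling $R\mapsto 2C_sR$ compatible with the chosen time window is the only real bookkeeping point.
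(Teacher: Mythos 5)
Your proposal is correct and follows essentially the same route as the paper's proof: the key direction is obtained exactly as there, by comparing $\mu_V$ with the exactly invariant measures $\nu_V^N$ through the $\liminf$ property on open sets, trading $\psi_V$ for $\psi_V^N$ via uniform convergence and the Lipschitz reversibility of $\psi_V^N$ at the cost of an $\epsilon$-fattening of $A$, and then letting $\epsilon\to 0$ (your explicit $\limsup_N\nu_V^N(F)\leq\mu_V(F)$ for closed $F$ just makes precise a step the paper leaves implicit). The only divergence is the reverse inequality, which the paper gets by running the symmetric approximation argument with $\psi_V^N(t)A\subseteq\psi_V(t)A+B'_\epsilon$, whereas you apply the forward bound to $\psi_V(t)A$ at time $-t$ after enlarging the constant in $T=\frac{1}{CR}$ to accommodate the radius $2C_sR$; both are valid and the bookkeeping you flag is handled correctly.
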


\begin{proof}Use the invariance of $\nu_V^N$ through $\psi_V^N$, the uniform convergence of $\psi_V^N$ toward $\psi_V$, the uniform continuity of the flows $\psi_V^N$ and $\psi_V$ and the fact that for all open set $U$ : 

$$ \mu_V (U) \leq \liminf \nu_V^N (U) \; .$$

Indeed, if $A$ is closed, as $A+ B'_\epsilon$ is open (the $B'$ denotes the open ball in $L^2$) : 

$$\mu_V(\psi_V(t)(A+B_\epsilon ')) \leq \mu_V (\psi_V(t)(A) + B'_{C\epsilon}) \leq \liminf \nu_V^N(\psi_V(t) A + B'_{C\epsilon})$$

Then, use that $\psi_V(t) A \subseteq \psi_V^N(t) A + B'_\epsilon$ above a certain $N$.

$$\mu_V(\psi_V(t)(A+B_\epsilon ')) \leq \liminf \nu_V^N(\psi_V^N (t)A B'_{C\epsilon} )\leq \limsup \nu_V^N (\psi_V^N(t) A + B_{C\epsilon})$$
and as the flow is locally continuous in $L^2$,

$$\mu_V(\psi_V(t)(A+B_\epsilon ')) \leq \limsup \nu_V^N(\psi_V^N (t) (A+B_{C'\epsilon})) $$
and as $\nu_V^N$ invariant through $\psi_V^N$,

$$\mu_V(\psi_V(t)(A+B_\epsilon ')) \leq \limsup \nu_V^N (A+B_{C'\epsilon}) \leq \mu_V (A+B_{C\epsilon})$$
and by DCT when $\epsilon $ goes to $0$,

$$\mu_V (\psi_V(t) (A)) \leq \mu_V(A) \; .$$

For the reverse inequality, consider that above a certain $N$, $\psi_V^N(t) A \subseteq \psi_V(t)A + B'_{\epsilon}$, so

$$\mu_V (A+ B'_{\epsilon}) \leq \liminf \nu_V^N(A+B'_\epsilon) \leq \liminf \nu_V^N (\psi_V^N (t) A + B'_{C\epsilon})$$

$$\mu_V (A+ B'_{\epsilon}) \leq \liminf \nu_V^N (\psi_V(t)A + B'_{C'\epsilon} ) \leq \limsup \nu_V^N(\psi_V (t)A + B_{C'\epsilon})\leq \mu_V (\psi_V (t)A + B_{C'\epsilon})$$
and by DCT when $\epsilon$ goes to $0$, as $A$ is closed and thus $\psi(t)A = (\psi(-t))^{-1}A$ is closed too: 

$$\mu_V(A) \leq \mu_V(\psi(t) A)\; .$$

Hence the lemma is true for closed sets. For all the measurable sets, see that $B_R^s$ is closed in $L^2$ so the property passes to the complementary and the countable unions thanks to the uniqueness of the local flow.\end{proof}

Then, define the sets where the solution exists globally in time in quite the same way as in the first section. 

\begin{definition}Let $R>1$ and $R_n = \sqrt{n+1} R$ for $n\geq 0$ and $t_n = \frac{1}{3 C_s \sqrt n R}$ for $n\geq 1$, $T_n = \sum_{k = 1}^n t_n$. Call 

$$A_{V,n}^N (R) = \lbrace \varphi_V(\omega) | \varphi_V^N(\omega ) \in \phi_V^N(T_n)^{-1}(B_{R_n}^s) \rbrace$$
$$A_{V,-n}^N (R) = \lbrace \varphi_V(\omega) | \varphi_V^N(\omega ) \in \phi_V^N(-T_n)^{-1}(B_{R_n}^s) \rbrace $$
and then

$$A_{V}^N (R) = \bigcap_{n\in \Z}  A_{V,n}^N(R)$$

$$A_V(R) = \limsup_{N\rightarrow \infty} A^N_{V}(R)$$
and even

$$A_V = \bigcup_{M\geq 2} A_V(M) \; .$$\end{definition}

\begin{proposition} The set $A_V(R)$ is such that its complementary satisfies :

$$\mu_V (A_V(R)^c ) \leq C e^{-c R^2}$$
and thus

$$\mu_V (A_V^c) = 0\; .$$\end{proposition}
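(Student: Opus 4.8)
The plan is to imitate the Section~2 estimate $\mu(A(R)^c)\le Ce^{-aR^2}$ almost verbatim, with the $L^2$ flow $\psi_N$ replaced by the finite-dimensional perturbed flow $\phi_V^N$ on $E_0^N$ and with $\mu$ replaced by $\mu_V$. First I would reduce the $\limsup$ to a uniform-in-$N$ bound: since $A_V(R)=\limsup_N A_V^N(R)$ one has $A_V(R)^c=\liminf_N A_V^N(R)^c$, and Fatou's lemma for the probability measure $\mu_V$ gives
$$\mu_V(A_V(R)^c)\le \liminf_{N\to\infty}\mu_V(A_V^N(R)^c).$$
It therefore suffices to bound $\mu_V(A_V^N(R)^c)$ by $Ce^{-cR^2}$ with constants independent of $N$. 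Using $A_V^N(R)=\bigcap_{n\in\Z}A_{V,n}^N(R)$ and countable subadditivity, $\mu_V(A_V^N(R)^c)\le\sum_{n\in\Z}\mu_V(A_{V,n}^N(R)^c)$, so the whole problem reduces to estimating one term.

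The estimation of a single term is the only place where the present situation differs from Section~2, because the sets are now phrased through the two maps $\varphi_V$ and $\varphi_V^N$. Writing $\Omega_{V,n}^N(R)=(\varphi_V^N)^{-1}\big(\phi_V^N(T_n)^{-1}(B_{R_n}^s)\big)$, the definition of $A_{V,n}^N(R)$ says precisely that $\omega\in\Omega_{V,n}^N(R)$ forces $\varphi_V(\omega)\in A_{V,n}^N(R)$; hence $\{\omega:\varphi_V(\omega)\notin A_{V,n}^N(R)\}\subseteq\Omega_{V,n}^N(R)^c$ and, $\mu_V$ being the law of $\varphi_V$,
$$\mu_V(A_{V,n}^N(R)^c)\le P(\Omega_{V,n}^N(R)^c).$$
Now the law of $\phi_V^N(T_n)\circ\varphi_V^N$ is $\mu_V^N$ by the invariance of $\mu_V^N$ under $\phi_V^N$ (Proposition \ref{invarn}), so $P(\Omega_{V,n}^N(R)^c)=\mu_V^N\big((B_{R_n}^s)^c\cap E_0^N\big)$, which the proof of Proposition \ref{compactness} already controls, uniformly in $N$, by $Ce^{-cR_n^2}=Ce^{-c(n+1)R^2}$ since $R_n=\sqrt{n+1}\,R$. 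The same bound holds for the backward-time sets $A_{V,-n}^N(R)$, using invariance under $\phi_V^N(-T_n)$.

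Summing the geometric series then gives, for $R>1$,
$$\sum_{n\in\Z}Ce^{-c(|n|+1)R^2}\le 2Ce^{-cR^2}\sum_{n\ge 0}e^{-cnR^2}\le C'e^{-cR^2}$$
uniformly in $N$, and the Fatou step upgrades this to $\mu_V(A_V(R)^c)\le C'e^{-cR^2}$. Finally, since $A_V=\bigcup_{M\ge 2}A_V(M)$ we have $A_V^c=\bigcap_{M\ge 2}A_V(M)^c$, whence $\mu_V(A_V^c)\le\inf_{M\ge 2}\mu_V(A_V(M)^c)\le\inf_{M\ge 2}C'e^{-cM^2}=0$, which is the second assertion.

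The only genuine obstacle I anticipate is the bookkeeping in the single-term estimate: one must keep track of the fact that the ball condition defining $A_{V,n}^N(R)$ is imposed on $\varphi_V^N(\omega)$ while the ambient point recorded in the set is $\varphi_V(\omega)$. This is exactly what forces the one-sided inequality $\mu_V(A_{V,n}^N(R)^c)\le P(\Omega_{V,n}^N(R)^c)$ through the inclusion $\Omega_{V,n}^N(R)\subseteq\varphi_V^{-1}(A_{V,n}^N(R))$; once this is in place the invariance of $\mu_V^N$ under $\phi_V^N$ and the Gaussian tail bound of Proposition \ref{compactness} close the argument exactly as in Section~2.
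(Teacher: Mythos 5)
Your proposal is correct and follows essentially the same route as the paper: Fatou's lemma for the $\limsup$, countable subadditivity over $n$, reduction of each term to $\mu_V^N\bigl((B_{R_n}^s)^c\cap E_0^N\bigr)$ via the invariance of $\mu_V^N$ under $\phi_V^N$, the uniform Gaussian tail bound, and a geometric series, followed by letting $M\to\infty$ for the null-set claim. Your explicit one-sided inclusion handling the mismatch between $\varphi_V(\omega)$ and $\varphi_V^N(\omega)$ in the definition of $A_{V,n}^N(R)$ is in fact slightly more careful than the equality the paper writes at that step.
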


\begin{proof}Consider the sets restricted to the $\omega$ such that $\varphi_V^N $ converges in $H^s$, given that the sequence converges in $L^2(\Omega, H^s)$ and thus almost surely.

It appears that 

$$\mu_V (A_{V,n}^N(R)^c) = P(\varphi_V^N(\omega )\notin \phi_V^N(T_n)^{-1}(B_{R_n}^s))$$ 

$$\mu_V (A_{V,n}^N(R)^c) = \mu_V^N (\left(\phi_V^N(T_n)^{-1}(B_{R_n}^s) \right)^c)$$

$$\mu_V (A_{V,n}^N(R)^c) \leq \mu_V^N (\phi_V^N(T_n)^{-1}(B_{R_n}^s)^c)$$
and $\mu_V^N$ is invariant through the flow $\phi_V^N$ so

$$\mu_V (A_{V,n}^N(R)^c) \leq 2\mu_V^N (B_{R_n}^s) \leq Ce^{-cR_n^2} \leq Ce^{-c(n+1)R^2}\; .$$

Then,
$$\mu_V(A_{V,n}^c) \leq \liminf \mu_V(A_{V,n}^N(R)^c) \leq Ce^{-c(n+1)R^2}$$

$$\mu_V (A_{V}(R)^c) \leq C \sum_{n\geq 1}2e^{-cnR^2} \leq C'e^{-cR^2}$$
and

$$\mu_V(A_V(R)^c) \leq \liminf \mu_V (A_V^N(R)) \leq Ce^{-cR^2}$$

$$\mu_V (A_V^c) = 0 \; . $$ \end{proof}

\begin{proposition} The flow $\psi_V$ is unique and globally defined as long as the initial data is taken in $A_V$ and the measure $\mu_V$ is invariant through this flow. \end{proposition}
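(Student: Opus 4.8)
The plan is to mirror the proof of Theorem \ref{blablav}, replacing $\mu$, $\psi$, $\psi_N$ and the sets $A(R)$ by their perturbed analogues $\mu_V$, $\psi_V$, $\psi_V^N$ and $A_V(R)$, and relying on the tools already assembled in this section: the local invariance of $\mu_V$ from Lemma \ref{locinvp}, the local uniform convergence of $\psi_V^N$ toward $\psi_V$ on compacts, the local Lipschitz continuity of the flows, and the estimate $\mu_V(A_V(R)^c)\le Ce^{-cR^2}$.

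First I would fix a measurable set $C\subseteq L^2$ and set $C_R = C\cap A_V(R)$. The heart of the argument is an induction over $n$ establishing simultaneously that $\psi_V(\pm T_n)C_R \subseteq B_{R_n}^s$, so that the solution exists up to time $T_n$ and stays in a ball on which the local theory applies, and that $\mu_V(\psi_V(t)C_R)=\mu_V(C_R)$ for every $t\in[-T_n,T_n]$. The base case $n=0$ is immediate since $T_0=0$ and $A_V(R)\subseteq B_{R_0}^s$. For the inductive step one splits a time $t\in[-T_{n+1},T_{n+1}]$ at $T_n$ and applies Lemma \ref{locinvp} on the residual interval of length at most $t_n$, centred at the already-controlled endpoint $\psi_V(\pm T_n)C_R\subseteq B_{R_n}^s$, which yields the invariance on all of $[-T_{n+1},T_{n+1}]$.

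The step that propagates the inclusion into the compact ball is where the finite-dimensional approximation enters, and I expect it to be the main obstacle. For $u_0\in C_R\subseteq A_V(R)$ there is a subsequence $N_k\to\infty$ along which $\varphi_V^{N_k}$ converges and $\psi_V^{N_k}(T_n)(u_0)\in B_{R_n}^s$ for all $n$. One must show $\psi_V^{N_k}(T_{n+1})(u_0)\to\psi_V(T_{n+1})(u_0)$ in $L^2$: writing $\psi_V(T_{n+1})=\psi_V(t_n)\circ\psi_V(T_n)$ and likewise for $\psi_V^{N_k}$, I would insert the intermediate term $\psi_V^{N_k}(t_n)\psi_V(T_n)(u_0)$ and bound the two resulting differences, one by the local uniform convergence lemma and the other by the local Lipschitz bound of $\psi_V^{N_k}(t_n)$ combined with the inductive convergence $\psi_V^{N_k}(T_n)(u_0)\to\psi_V(T_n)(u_0)$. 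Since the points $\psi_V^{N_k}(T_{n+1})(u_0)$ lie in the compact set $B_{R_{n+1}}^s$ and converge in $L^2$, their limit $\psi_V(T_{n+1})(u_0)$ also lies in $B_{R_{n+1}}^s$, closing the induction; the same iterated a priori bounds prevent blow-up and, since $T_n=\sum_{k=1}^n \frac{1}{3C_s\sqrt k R}\to\infty$, they give global existence and uniqueness of $\psi_V$ on $A_V$.

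Finally, having $\mu_V(\psi_V(t)C_R)=\mu_V(C_R)$ for all $t\in\R$, I would pass back to $C$ by the sandwich $\mu_V(\psi_V(t)C)\ge\mu_V(\psi_V(t)C_R)=\mu_V(C_R)\ge\mu_V(C)-\mu_V(A_V(R)^c)\ge\mu_V(C)-Ce^{-cR^2}$, let $R\to\infty$ to obtain $\mu_V(\psi_V(t)C)\ge\mu_V(C)$, and then apply this same inequality to $\psi_V(-t)$ together with $C=\psi_V(-t)\psi_V(t)C$ to get the reverse inequality and hence equality. Since $A_V=\bigcup_{M\ge2}A_V(M)$ is of full measure, this establishes both the global well-posedness of $\psi_V$ on $A_V$ and the invariance of $\mu_V$.
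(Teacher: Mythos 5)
Your overall strategy --- transposing the induction of Theorem \ref{blablav} to the perturbed objects, using Lemma \ref{locinvp} for the local invariance, the uniform convergence of $\psi_V^N$ toward $\psi_V$ on balls, the local Lipschitz bounds, and the tail estimate $\mu_V(A_V(R)^c)\leq Ce^{-cR^2}$ followed by the final sandwich --- is exactly the paper's. There is however one genuine gap, and it is precisely the point the paper singles out as the only new ingredient relative to Theorem \ref{blablav}. You assert that for $u_0\in A_V(R)$ there is a subsequence $N_k$ along which $\psi_V^{N_k}(T_n)(u_0)\in B_{R_n}^s$ for all $n$. That is not what membership in $A_V(R)$ provides. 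Unlike in Section 2, where $A_N^n(R)$ was defined directly as $\psi_N(T_n)^{-1}(B_{R_n}^s)\cup\psi_N(-T_n)^{-1}(B_{R_n}^s)$, the perturbed sets are $A_{V,n}^N(R)=\lbrace \varphi_V(\omega)\;|\;\varphi_V^N(\omega)\in\phi_V^N(T_n)^{-1}(B_{R_n}^s)\rbrace$: the constraint bears on $\phi_V^N(T_n)$ applied to the finite-dimensional Gaussian $\varphi_V^N(\omega)$, not on $\psi_V^N(T_n)$ applied to $u_0=\varphi_V(\omega)$. This change of definition is forced by the measure estimate ($\mu_V^N$ is the law of $\varphi_V^N$ and $\Pi_N\varphi_V\neq\varphi_V^N$, so only this formulation lets one invoke the invariance of $\mu_V^N$ under $\phi_V^N$). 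Consequently the points you feed into the compactness argument, $\psi_V^{N_k}(T_{n+1})(u_0)$, are not known to lie in $B_{R_{n+1}}^s$; the points known to lie there are $\phi_V^{N_k}(T_{n+1})(\varphi_V^{N_k}(\omega))$. The same issue already affects your base case: $A_V(R)\subseteq B_{R_0}^s$ is not immediate from the definition.

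The missing step is to show that $\phi_V^{N}(T_{n})(\varphi_V^{N}(\omega))$ converges in $L^2$ to $\psi_V(T_{n})(\varphi_V(\omega))$, so that compactness of $B_{R_n}^s$ can be applied to the correct sequence. For this one uses the identity $\psi_V^N(t)(\varphi_V^N(\omega)+\varphi_{N+1}(\omega))=\phi_V^N(t)(\varphi_V^N(\omega))+S(t)\varphi_{N+1}(\omega)$ and decomposes
\begin{align*}
\psi_V(T_{n})(\varphi_V(\omega))-\phi_V^N(T_{n})(\varphi_V^N(\omega))
&=\psi_V(T_{n})(\varphi_V(\omega))-\psi_V^N(T_{n})(\varphi_V(\omega))\\
&\quad+\psi_V^N(T_{n})(\varphi_V(\omega))-\psi_V^N(T_{n})\left(\varphi_V^N(\omega)+\varphi_{N+1}(\omega)\right)\\
&\quad+S(T_{n})\varphi_{N+1}(\omega)\;.
\end{align*}
The first difference is controlled by your uniform-convergence/Lipschitz scheme; the second by the continuity of $\psi_V^N$ together with the almost sure convergence of $\widetilde\varphi_V^N=\varphi_V^N+\varphi_{N+1}$ toward $\varphi_V$ (which holds since $\widetilde\varphi_V^N\rightarrow\varphi_V$ in $L^2(\Omega,L^2)$); the last term tends to $0$ because $S(t)$ is isometric and $\varphi_{N+1}(\omega)\rightarrow 0$. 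With this correction inserted at the base case and at each inductive step, the rest of your argument goes through as written.
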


\begin{proof} As it happens, the proof is roughly the same as in \reff{blablav}. It uses however the fact that $ \varphi_V^N+ \varphi_{N+1} $ converges almost surely toward $\varphi_V$. Indeed, to study the convergence in $H^s$ at the times $T_n \rightarrow \infty$, see that

$$\psi_V (T_n) (\varphi_V (\omega))$$
is the $H^s$ limit when $N\rightarrow \infty$ of

$$\phi_V^N (T_n) ( \varphi_V^N (\omega) )$$
as

$$\psi_V (T_n) (\varphi_V (\omega)) - \phi_V^N (T_n) ( \varphi_V^N (\omega) ) = \psi_V (T_n) (\varphi_V (\omega)) - \psi_V^N(\varphi_V(\omega)) +\psi_V^N(\varphi_V(\omega)) - \phi_V^N (T_n) ( \varphi_V^N (\omega) )$$

$$ =  \psi_V (T_n) (\varphi_V (\omega)) - \psi_V^N(\varphi_V(\omega)) +\psi_V^N(\varphi_V(\omega))- \psi_V^N (\varphi_V^N(\omega)+\varphi_{N+1}(\omega)) + S(t) \varphi_{N+1}(\omega)$$
and as $\psi_V^N(T_n)$ is continuous the sequence converges in $H^s$. \end{proof}

\section{Evolution of characteristic functionals}

The statistics $\mu_V$ are not too much changed by the flow of the original BBM equation. Though, to investigate about those changes, build the characteristic functional of $\psi(t)(\mu_V)$. Estimations on the characteristic functionals seem relevant, in the sense that they contain all the information about the image measure, and they give precise estimates regarding the small parameter $V$.

\subsection{Definition of the generating functionals}

Introduce now the definition of the generating functionals. 

\begin{definition} Let $\lambda \in L^2$  and let $\langle \lambda, u \rangle$ the scalar product in $L^2$ of $u$ and $\lambda$. Call then $Z_V(\lambda)$ the quantity : 
 
$$Z_V(\lambda ) = E_V \left( e^{i\langle\lambda , u \rangle }\right)$$
where $E_V$ denotes the average over $\mu_V$.\end{definition}

\begin{remark} This functional is the characteristic function of $\mu_V$.

When $V$ is equal to $0$, this functional is equal to 

$$Z_{V=0} (\lambda) = e^{-||\lambda||_{H^{-1}}^2/2}$$
adopting the convention 

$$||\lambda ||_{H^{-1}} = \left(\frac{1}{2\pi}\langle \lambda,c_0\rangle^2 + \frac{1}{2\pi}\sum_{n\geq 1} \frac{1}{1+n^2}(\langle \lambda, c_n\rangle^2 + \langle \lambda , s_n\rangle^2) \right)^{1/2} \; .$$

When $V$ is different from $0$,

$$Z_V(\lambda) = e^{-||(1+V)^{-1/2}\lambda||_{H^{-1}}}\; .$$\end{remark}

Introduce now the generating functional monitoring the behaviour of the BBM flow.

\begin{definition} Let $Z_V(t,\lambda)$ be the quantity : 

$$Z_V(t,\lambda) = E_V( e^{i\langle \lambda , \psi (t)u\rangle}) \; .$$\end{definition}

\begin{remark}First, see that if $\psi$ is replaced by $\psi_V$, this quantity remains the same in time, as $\mu_V$ is invariant through $\psi_V$ and $\psi_V(t)$ is almost surely defined.

Then, it is sufficient to study the interaction between the different waves since the covariance between two modes is given by the behaviour of $Z$ as :

$$E_V(\langle \psi_V(t)u_0,c_n\rangle \langle \psi_V(t)u_0,c_n\rangle ) = - D^2 Z |_{\lambda = 0}(c_n)(c_m) \; .$$
where the right hand term is the second order differential of $Z$ at the point $\lambda = 0$ under the directions $c_n$ and $c_m$. And those quantities are well defined.\end{remark}

\subsection{Closeness of the flows}

First, prove the global existence of the BBM and the perturbed flow, as in \cite{globpos}, along with some useful estimates.

\begin{definition}\label{bign} For all $u_0 \in L^2$ and $T \in \R$ call 

$$N(u_0, T) = \min \lbrace N \in \N \; |\; ||(1-\Pi_N) u_0 ||_{L^2} \leq \frac{1}{C(1+|T|)}\rbrace$$
where $C$ is the constant involved in the $L^2$ local well posedness of the BBM and the perturbed flow.\end{definition}

\begin{proposition} Let $s \in ]0,\frac{1}{2}[$ and $\sigma \in ]\frac{1}{2}, 1]$. There exists $C$ such that for all $u_0\in H^s$, the flows $\psi$ and $\psi_V$ are globally defined in $L^2$, and for all $T\in \R$,

$$||\psi(t) u_0 ||_{L^2}\; , \; ||\psi_V(t)u_0||_{L^2} \leq C+ C N(u_0,T)^{(1+\sigma -2s)/2}||u_0||_{H^s}\; .$$
\end{proposition}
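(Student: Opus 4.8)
The plan is to run a high--low frequency decomposition in the spirit of \cite{globpos}. First I would record the $L^2$ local theory: since $K=(1-\partial_x^2)^{-1}\partial_x$ is bounded on $L^2$ and, by Lemma \ref{improv}, $\|K(uv)\|_{H^s}\le C\|u\|_{L^2}\|v\|_{L^2}$ for $s<1/2$, the Duhamel map of Lemma \ref{locex} is a contraction on $X_T^0$ for $T\lesssim 1/\|u_0\|_{L^2}$, and its nonlinear part is in fact smoothing, landing in $H^s$ rather than merely $L^2$. The same construction applies to $\psi_V$ after replacing $K,S(t)$ by $D,e^{-tD}$ and invoking points (4) and (5) of the operator lemma; the only new feature is that $e^{-tD}$ is not an isometry but satisfies $\|e^{-tD}u\|_{H^s}\le e^{c|t|}\|u\|_{H^s}$, which is harmless on bounded intervals.

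Next I would fix $T$, set $N=N(u_0,T)$ from Definition \ref{bign}, and split $u_0=v_0+w_0$ with $v_0=\Pi_N u_0$ and $w_0=(1-\Pi_N)u_0$, so that $\|v_0\|_{H^\rho}\lesssim N^{\rho-s}\|u_0\|_{H^s}$ for every $\rho\ge s$ while $\|w_0\|_{L^2}\le N^{-s}\|u_0\|_{H^s}\le 1/(C(1+|T|))$, the latter being exactly the defining property of $N(u_0,T)$. Let $v$ be the full nonlinear solution with datum $v_0\in H^1$: the energy \reff{energy} is conserved along $v$, so $\|v(t)\|_{H^1}=\|v_0\|_{H^1}\lesssim N^{1-s}\|u_0\|_{H^s}$ and $v$ is automatically global.

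The remainder $w=u-v$ solves $\partial_t w+Kw+K(vw)+\tfrac12 K(w^2)=0$ with $w(0)=w_0$. I would control it not by a single Gronwall estimate over $[0,T]$ but by iterating on short subintervals: on each step the smoothing bilinear estimate places the Duhamel contribution in $H^s$ and the low--high frequency structure of $K(vw)$ keeps the effective interaction coefficient far smaller than $\|v\|_{L^2}$, so that $\|w\|_{L^2}$ stays bounded by an absolute constant, the accumulated loss over the $\sim T\times(\mbox{norm})$ steps being absorbed by the calibrated smallness $\|w_0\|_{L^2}\le 1/(C(1+|T|))$. This yields the additive constant $C$, and all explicit dependence on $T$ is thereby converted into dependence on $N=N(u_0,T)$. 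The $N$-dependent term then comes from $\|v(t)\|_{L^2}$: rather than the crude bound $\|v(t)\|_{L^2}\le\|v(t)\|_{H^1}\lesssim N^{1-s}$ (the endpoint $\sigma=1$), one interpolates the conserved $H^1$ bound against a slowly growing lower-order quantity to get $\|v(t)\|_{L^2}^2\lesssim \|v_0\|_{H^1}\|v_0\|_{H^\sigma}\lesssim N^{1+\sigma-2s}\|u_0\|_{H^s}^2$, which is precisely the exponent $(1+\sigma-2s)/2$. Summing $\|u(t)\|_{L^2}\le\|v(t)\|_{L^2}+\|w(t)\|_{L^2}$ and noting the bound is uniform on $[0,T]$ for every $T$ yields both global existence and the stated inequality; the interval $[-T,0]$ is identical. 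For $\psi_V$ the scheme is unchanged once \reff{energy} is replaced by the invariant $E_V=\tfrac12\int uWu$ of \reff{perturbe}, which is uniformly equivalent to $\|\cdot\|_{H^1}^2$ because $\|V\|_\infty\le 1/2$ makes $W=\sqrt{1+V}H^2\sqrt{1+V}$ comparable to $H^2$; the bilinear and semigroup bounds required are exactly points (1)--(5) of the operator lemma, all uniform in $V$ and $N$.

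I expect the main obstacle to be the sharp control of $\|v(t)\|_{L^2}$. Conservation only gives $\|v(t)\|_{H^1}$, so obtaining the exponent $(1+\sigma-2s)/2$ rather than the trivial $1-s$ forces one to bound the $L^2$ growth of the low-frequency nonlinear evolution by the interpolated quantity $\|v_0\|_{H^1}^{1/2}\|v_0\|_{H^\sigma}^{1/2}$ uniformly in time, and simultaneously to close the iteration for $w$ so that no factor of $T$ survives outside $N(u_0,T)$; the two estimates must be balanced through the single parameter $N$.
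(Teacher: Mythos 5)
Your decomposition is inverted relative to the one that makes the argument close, and this is not cosmetic: it breaks both of the key estimates. The paper sets $v_0=(1-\Pi_N)u_0$ (the \emph{high} frequencies) and $w_0=\Pi_N u_0$ (the low ones). The whole point of Definition \ref{bign} is that $\|v_0\|_{L^2}\le 1/(C(1+|T|))$, so the $L^2$ local theory, whose existence time is $\gtrsim 1/\|v_0\|_{L^2}\gtrsim 1+|T|$, evolves $v_0$ by the \emph{full nonlinear flow} on all of $[-T,T]$ with $\|v\|_{X^0_T}\le C/(1+|T|)$. The remainder $w=u-v$, which carries the large but smooth datum $\Pi_N u_0\in H^1$, is then controlled by an energy estimate on $f=\|w\|_{H^1}\|w\|_{H^\sigma}$ whose Gronwall coefficient is $\|v\|_{X^0_T}$; the resulting factor $e^{C|T|/(1+|T|)}$ is bounded uniformly in $T$, and $\|w(T)\|_{L^2}^2\le f(T)\le Cf(0)\le CN^{1+\sigma-2s}\|u_0\|_{H^s}^2$ gives exactly the exponent $(1+\sigma-2s)/2$. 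In your version the roles are swapped: your $v$ is the low-frequency nonlinear evolution with $\|v\|_{L^2}\lesssim N^{1-s}\|u_0\|_{H^s}$, which is \emph{large}, so the coupling $K(vw)$ in the $w$-equation carries a coefficient of that size. Any Gronwall or interval-iteration scheme then loses a factor exponential in $|T|\,N^{1-s}\|u_0\|_{H^s}$, and the calibrated smallness $\|w_0\|_{L^2}\le 1/(C(1+|T|))$ only supplies a reciprocal \emph{polynomial} in $T$ --- it cannot absorb that exponential. Your appeal to the ``low--high frequency structure of $K(vw)$'' is not substantiated: $w$ does not remain supported at high frequencies under the evolution, and the bilinear estimate of Lemma \ref{improv} genuinely carries the full $\|v\|_{L^2}$.

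The second gap is the bound $\|v(t)\|_{L^2}^2\lesssim\|v_0\|_{H^1}\|v_0\|_{H^\sigma}$ for your low-frequency nonlinear solution. Conservation of \reff{energy} controls only $\|v(t)\|_{H^1}$; the $H^\sigma$ norm is \emph{not} conserved along the full nonlinear flow, and controlling its growth on $[0,T]$ without a $T$-dependent loss is precisely as hard as the original problem. In the paper's arrangement the analogous bound is obtained for $w$ \emph{because} the Gronwall coefficient is the small quantity $\|v\|_{X^0_T}\le C/(1+|T|)$, which cancels the length $|T|$ of the interval. So the statement as claimed ($C$ independent of $T$, all $T$-dependence routed through $N(u_0,T)$) is not reachable by your route; you need to exchange which piece is evolved nonlinearly and which piece is treated by the energy/Gronwall argument.
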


\begin{proof}Fix $T\in \R$ and let $v_0 = (1-\Pi_{N_0}) u_0$ and $w_0 = \Pi_{N_0}u_0$. Thanks to LWP, $\psi(t)v_0$ and $\psi_V(t)v_0$ are defined in $[-T,T]$ in $L^2$ and they satisfy

$$||\psi(t)v_0||_{X^0_T}\; ,\; ||\psi_V(t)v_0||_{X^0_T} \leq C ||v_0||_{L^2} \leq \frac{C}{1+|T|}\; .$$

Call $v = \psi(t)v_0$ and $v_V = \psi_V(t) v_0$. Consider now the equations

\begin{equation}\label{changed} \partial_t (1-\partial_x^2) w = - \partial_x \left( w + vw + \frac{w^2}{2}\right)\end{equation}
and

\begin{equation}\label{changedpert} \partial_t W w_V = -(1+V)^{1/2}\partial_x \left( (1+V)^{1/2}w_V + (1+V)w_V v_V + (1+V)\frac{w_V^2}{2}\right) \; .\end{equation}

Those equations are well posed in $H^1$ as long as $v$ and $v_V$ exist and have a priori bounds. Indeed, calling

$$f(t) = ||w(t)||_{H^\sigma} ||w(t)||_{H^1} \mbox{ and } f_V(t) = ||w_v(t)||_{H^\sigma} ||w_v(t)||_{H^1}$$
it comes that for $t\in [-T,T]$,

$$f(t) \leq ||w(t)||_{H^1}^2 \leq |\int_{0}^t \int_x v (\partial_x w) w | \leq ||v||_{X^0_T} \int_{0}^t ||w||_{H^1} ||w||_{L^\infty}$$
and thanks to Sobolev embedding theorem ($\sigma >1/2$)

$$f(t) \leq ||v||_{X^0_T} \int_{0}^t f(t)$$
also, with $C$ a constant independent from $V$,

$$f_V(t) \leq 2 \int_{0}^t \int_x w_V W w_V \leq C ||v_V||_{X^0_T} \int_{0}^t f_V(t) \; .$$

Since $w_0$ is in $H^1$, the equations \reff{changed} and \reff{changedpert} are well posed on $[-T,T]$ with initial datum $w_0$ and 

$$f(t),f_V(t) \leq e^{C\frac{|T|}{1+|T|}} ||w_0||_{H^1}||w_0||_{H^\sigma}\; .$$

Now, it appears that

$$||w_0||_{H^1}\leq N(u_0,T)^{1-s}||u_0||_{H^s} \mbox{ and } ||w_0||_{H^\sigma} \leq N(u_0,T)^{\sigma-s} ||u_0||_{H^s}^2$$
so

$$f(t),f_V(t) \leq C N(u_0,T)^{1+\sigma - 2s} ||u_0||_{H^s}\; .$$

The functions $u=v+w$ and $u_V = v_V+w_V$ are solution respectively of the BBM and the perturbed flow with initial datum $u_0$ and

$$||u(T)||_{L^2} \leq ||v||_{X^0_T}+ ||w(T)||_{L^2} \leq \frac{C}{1+|T|} + f(t)^{1/2} \leq C +C N^{(1+\sigma-2s)/2}||u_0||_{H^s} $$
so

$$||\psi(t)u_0||_{L^2}, \; ||\psi_V(t) u_0 ||_{L^2} \leq C + C N^{(1+\sigma-2s)/2}||u_0||_{H^s} \;.$$ \end{proof}

\begin{proposition}\label{clo}Let $s \in [0,\frac{1}{2}[$ and $\sigma \in ]\frac{1}{2}, 1]$. There exist $C$ such that for all $u_0 \in H^s$ and $T \in \R$,

$$||\psi_V(T) u_0 - \psi(T)u_0||_{L^2} \leq C ||V||_\infty (1 +  N^{1+\sigma-2s}||u_0||_{H^s}^2+ ||u_0||_{L^2}) e^{c(1 +  N^{(1+\sigma-2s)/2}||u_0||_{H^s}) |T|}$$
where $N = N(u_0, T)$ has been defined at the beginning of the subsection, by definition \ref{bign}.
\end{proposition}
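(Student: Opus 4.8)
The plan is to exploit the fact, already announced in the introduction, that the perturbed equation \reff{perturbe} is simply BBM written in the unknown $(1+V)^{1/2}u$. First I would check this algebraically: if $u_V(t)=\psi_V(t)u_0$ and I set $\tilde u(t)=(1+V)^{1/2}u_V(t)$, then, since $(1+V)^{1/2}$ is time independent and invertible, multiplying \reff{perturbe} on the left by $(1+V)^{1/2}$ collapses it to $\partial_t(1-\partial_x^2)\tilde u+\partial_x(\tilde u+\tilde u^2/2)=0$, i.e. \reff{bbm}. Because $\tilde u(0)=(1+V)^{1/2}u_0\in H^s$ (multiplication by $(1+V)^{1/2}$ preserves $H^s$ for $s<1/2$ as $V\in\mathcal C^2$) and BBM is uniquely and globally solvable in $L^2$, this forces the conjugation identity $\psi_V(t)u_0=(1+V)^{-1/2}\psi(t)((1+V)^{1/2}u_0)$ for all $t$. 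Everything else then reduces to comparing the BBM flow with itself.

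Next I would split, writing $a=(1+V)^{1/2}u_0$,
\[
\psi_V(t)u_0-\psi(t)u_0=\big((1+V)^{-1/2}-1\big)\psi(t)a+\big(\psi(t)a-\psi(t)u_0\big),
\]
and estimate the two pieces separately. The first piece is harmless: $\|(1+V)^{-1/2}-1\|_{L^\infty}\le C\|V\|_\infty$, and the a priori $L^2$ bound of the preceding proposition, applied to the datum $a$, controls $\|\psi(t)a\|_{L^2}$ by $C(1+N^{(1+\sigma-2s)/2}\|u_0\|_{H^s})$, which I can absorb into the factor $1+N^{1+\sigma-2s}\|u_0\|_{H^s}^2$ of the statement using $x\le 1+x^2$. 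Here I must also check that $N(a,T)$ and $N(u_0,T)=:N$ are comparable, which follows because the high-frequency tail of $(1+V)^{1/2}u_0$ is controlled by that of $u_0$ together with the absolutely summable Fourier coefficients of $\sqrt{1+V}$, exactly as in the convergence proofs of Section 3.

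The second piece is a genuine stability estimate for the BBM flow in $L^2$, and this is where the exponential factor is produced. Setting $p=\psi(t)a$, $q=\psi(t)u_0$ and $\rho=p-q$, both solve \reff{bbm}, so $\rho$ solves $\partial_t\rho+K\rho+\tfrac12K((p+q)\rho)=0$ with $K=(1-\partial_x^2)^{-1}\partial_x$. I would run an $L^2$ energy estimate: the linear term contributes nothing since $K$ is antisymmetric, and Lemma \reff{improv} (with $g=\mathrm{Id}$ and $s=0$) bounds $\|K((p+q)\rho)\|_{L^2}$ by $C\|p+q\|_{L^2}\|\rho\|_{L^2}$, giving $\frac{d}{dt}\|\rho\|_{L^2}\le C(\|p\|_{L^2}+\|q\|_{L^2})\|\rho\|_{L^2}$. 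Feeding in the a priori bound $\|p\|_{L^2}+\|q\|_{L^2}\le C(1+N^{(1+\sigma-2s)/2}\|u_0\|_{H^s})$ and the initial gap $\|\rho(0)\|_{L^2}=\|((1+V)^{1/2}-1)u_0\|_{L^2}\le C\|V\|_\infty\|u_0\|_{L^2}$, Gronwall's lemma then yields precisely the contribution $C\|V\|_\infty\|u_0\|_{L^2}e^{c(1+N^{(1+\sigma-2s)/2}\|u_0\|_{H^s})|t|}$. Summing the two pieces gives the stated estimate.

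I expect the main obstacle to be not the conjugation trick itself, which is essentially bookkeeping, but making the two applications of the a priori bound rigorous and uniform: one has to be sure the constant $C$ and the cutoff $N$ in that bound do not degrade when $u_0$ is replaced by $(1+V)^{1/2}u_0$, and that the quadratic smoothing estimate of Lemma \reff{improv} can indeed be invoked at the endpoint $s=0$ to close the Gronwall argument in $L^2$ rather than in some $H^s$ with $s>0$. If one preferred to avoid the conjugation identity, the alternative would be a direct energy estimate on $\psi_V(t)u_0-\psi(t)u_0$, controlling the linear defect $(D-K)u$ and the mismatched nonlinear coefficients by $\|V\|_\infty$ through commutator estimates; this is more laborious and is exactly what the conjugation is designed to circumvent.
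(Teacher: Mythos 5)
Your route is genuinely different from the paper's. The paper never uses the conjugation identity: it writes $\psi_V(t)u_0$ and $\psi(t)u_0$ in Duhamel form with the semigroups $e^{-tD}$ and $e^{-tK}$, proves $\|(K-D)(uv)\|_{L^2}\le C\|V\|_\infty\|u\|_{L^2}\|v\|_{L^2}$ and $\|e^{-tD}v-e^{-tK}v\|_{L^2}\le C\|V\|_\infty\|v\|_{L^2}e^{c|t|}$, splits the difference of the two integral terms into four pieces and closes with Gronwall --- exactly the ``more laborious'' alternative you mention at the end. Your conjugation identity $\psi_V(t)u_0=(1+V)^{-1/2}\psi(t)\bigl((1+V)^{1/2}u_0\bigr)$ is correct (it is the content of the remark opening Section 3, and global well-posedness plus uniqueness of both flows makes it rigorous), and it does reduce the whole proposition to an $L^2$ stability estimate for BBM against itself; the Gronwall step using Lemma \reff{improv} at $s=0$ is legitimate, since that lemma is stated for all $s<\frac12$.

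The gap is the comparability of $N(a,T)$ with $N(u_0,T)$ for $a=(1+V)^{1/2}u_0$, which you need both in the prefactor of your first piece and, more seriously, inside the Gronwall exponent of the second. This comparability is false in general. Writing $a=(1+V)^{1/2}\Pi_{M/2}u_0+(1+V)^{1/2}(1-\Pi_{M/2})u_0$, the frequencies above $M$ of the first summand come only from Fourier coefficients of $\sqrt{1+V}$ above $M/2$, so its tail is of size roughly $\|V\|_\infty M^{-1}\|u_0\|_{L^2}$; to push this below the threshold $\frac{1}{C(1+|T|)}$ of Definition \ref{bign} you need $M\gtrsim \|V\|_\infty\|u_0\|_{L^2}(1+|T|)$, a quantity with no relation to $N(u_0,T)$. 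Concretely, for $u_0=Rc_1$ one has $N(u_0,T)\le 1$ while $N(a,T)\sim \|V\|_\infty R(1+|T|)$, and inserting the latter into the exponential would change the $|T|$-dependence of the bound, which the probabilistic argument in the final theorem is calibrated against. The fix is cheap and should be made explicit: never apply the a priori bound to the datum $a$; instead use the conjugation backwards, $\psi(t)a=(1+V)^{1/2}\psi_V(t)u_0$, and invoke the a priori bound that the preceding proposition already provides for $\psi_V(t)u_0$ in terms of $N(u_0,T)$, giving $\|\psi(t)a\|_{L^2}\le C\|\psi_V(t)u_0\|_{L^2}\le C+CN(u_0,T)^{(1+\sigma-2s)/2}\|u_0\|_{H^s}$. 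With that substitution both of your pieces close with the stated constants.
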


\begin{proof} First, compute $||(K-D)(uv)||_{L^2}$ for all $u$, $v\in L^2$.

$$K-D = (1-(1+V)^{-1/2} ) K + (1+V)^{1/2}K (1 - (1+V)^{1/2})\; .$$

As $||(1+V)^{-1/2}-1||_{L^\infty} \leq C ||V||_\infty $ and $||(1+V)^{1/2}-1||_{L^\infty} \leq C ||V||_{\infty}$, the following result is ensured : for all $u,v\in L^2$,

$$||(K-D)(uv) ||_{L^2} \leq C ||V||_\infty ||u||_{L^2}\; ||v||_{L^2}\; .$$

Then, let $v \in L^2$ and $g(t) = ||e^{-tD} v - e^{-tK} v||_{L^2}$, then

$$g'(t) = ||(K-D)e^{-tK}v + D(e^{-tK}v-e^{-tD}v)||_{L^2} \leq ||K-D||_0 ||v||_{L^2} + ||D||_0 g(t)$$

$$g(t) \leq g(0) + ||K-D||_0 ||v||_{L^2} |t| e^{|t|\; ||D||_0}$$
and $g(0) = 0$. So, the following inequality applies : there exist $C,c$ two constants such for all $v\in L^2$, all $V \in \mathcal C^1$, and all $t\in \R$,

$$||e^{-tD}v - e^{-tK}v||_{L^2} \leq C||V||_\infty||v||_{L^2}  e^{c|t|}\; .$$

Write 

$$u_V = \psi_V(t) u_0 = e^{-t D }u_0 + \int_{O}^t e^{(s-t)D}D \frac{(1+V)^{1/2}u_V^2(s)}{2}ds$$
and

$$u = \psi(t) u_0 = e^{-tK}u_0 + \int_{0}^t e^{(s-t)K} K \frac{(\psi(s)u_0)^2(s)}{2}ds \; .$$

Let now 

$$f(t) = ||u_V - u||_{L^2}$$

$$f(t) \leq ||e^{-tD}u_0 - e^{-tK}u_0||_{L^2} + ||\int_{0}^t \left(  e^{(s-t)D}D \frac{(1+V)^{1/2}u_V^2(s)}{2} - e^{(s-t)K} K \frac{u^2(s)}{2}d\right) ||_{L^2}$$

The integral term is less than: 

\begin{tabular}{llll}
 & $||\int_{0}^t \left(  e^{(s-t)D}D \frac{(1+V)^{1/2}u_V^2(s)}{2} - e^{(s-t)K} D \frac{(1+V)^{1/2}u_V^2(s)}{2}\right) ||_{L^2} $ & $(=I)$  \\
$+$ & $||\int_{0}^t \left(  e^{(s-t)K}D \frac{(1+V)^{1/2}u_V^2(s)}{2} - e^{(s-t)K} K \frac{(1+V)^{1/2}u_V^2(s)}{2}\right) ||_{L^2} $ & $(=II)$  \\
$+$ & $||\int_{0}^t \left(  e^{(s-t)K}K \frac{(1+V)^{1/2}u_V^2(s)}{2} - e^{(s-t)K} K \frac{u_V^2(s)}{2}\right) ||_{L^2} $ & $(= III)$  \\
$+$ & $||\int_{0}^t \left(  e^{(s-t)K}K \frac{u_V^2(s)}{2} - e^{(s-t)K} K \frac{u^2(s)}{2}\right) ||_{L^2} $ & $(= IV)$  
\end{tabular}

Estimate now the different terms. For all $t\in [-T,T]$,

$$ I \leq \int_{0}^t C ||V||_\infty ||D(\frac{(1+V)^{1/2}(u_V)^2(s)}{2} e^{c |t-s|} ||_{L^2}ds\leq C ||V||_\infty (1 +  N^{1+\sigma-2s}||u_0||_{H^s}^2)e^{c|t|}$$

As $K$ is antisymmetric, $e^{tK}$ is isometric in $L^2$ so :

$$II \leq \int_{0}^t ||(D-K) \left( \frac{\sqrt{1+V}u_V^2(s)}{2} \right) ||_{L^2}ds \leq C ||V||_\infty |t| (1 +  N^{1+\sigma-2s}||u_0||_{H^s}^2)$$

$$II \leq C ||V||_\infty (1 +  N^{1+\sigma-2s}||u_0||_{H^s}^2)e^{c|t|}$$

$$III\leq \int_{0}^t \left(  ||K \frac{((1+V)^{1/2}-1)u_V^2(s)}{2} ||_{L^2}\right)\leq C ||V||_\infty |t| (1 +  N^{1+\sigma-2s}||u_0||_{H^s}^2)$$

$$III \leq C ||V||_\infty (1 +  N^{1+\sigma-2s}||u_0||_{H^s}^2)e^{c|t|}$$

$$IV \leq \int_{0}^t C ||u(s)+u_V(s)||_{L^2}||u(s)-u_V(s)||_{L^2}\leq C (1 +  N^{(1+\sigma-2s)/2}||u_0||_{H^s})\int_{0}^tf(s) ds \; .$$

To sum up, for all $t\in [-T,T]$, $f(t)$ is less than : 

$$f(t) \leq C||V||_\infty \left( ||u_0||_{L^2} + 1 +  N^{1+\sigma-2s}||u_0||_{H^s}^2\right)e^{c|t|} + C (1 +  N^{(1+\sigma-2s)/2}||u_0||_{H^s})\int_{0}^tf(s) ds$$

so

$$f(t) \leq C||V||_\infty \left( ||u_0||_{L^2} + 1 +  N^{1+\sigma-2s}||u_0||_{H^s}^2\right)e^{c|t|} e^{c(1 +  N^{(1+\sigma-2s)/2}||u_0||_{H^s}) |t|}$$

$$f(t) \leq C||V||_\infty \left( ||u_0||_{L^2} + 1 +  N^{1+\sigma-2s}||u_0||_{H^s}^2\right)e^{c(1 +  N^{(1+\sigma-2s)/2}||u_0||_{H^s}) |t|}\; .$$\end{proof}

\subsection{Evolution of the perturbed statistics}

Now see that the law of $\psi(t) u_0$ is not too different from the law of $u_0$. For this use the generating functional

$$Z_V(\lambda, t) = E_V(e^{i\langle \lambda , \psi(t)u_0}\rangle)$$
and prove that it is quite close to its initial data.

\begin{theorem} Let $\epsilon \in ]0,\frac{1}{2}[$. There exists $C,c$ such that for all $V \in \mathcal C^1$ with $||V||_\infty < \frac{1}{2}$, all $\lambda \in L^2$, and all $t\in \R$,

$$|Z_V(\lambda , t)-Z_V(\lambda , 0)| \leq C ||V||_\infty ||\lambda ||_{L^2} e^{c|t|^{6/\epsilon -3} }\; .$$
\end{theorem}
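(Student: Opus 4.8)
The plan is to compare the BBM flow $\psi$ with the perturbed flow $\psi_V$, for which $\mu_V$ is invariant. Since $\psi_V(t)$ is almost surely defined on the full-measure set $A_V$ and pushes $\mu_V$ forward to itself, the functional $t\mapsto E_V(e^{i\langle\lambda,\psi_V(t)u_0\rangle})$ is constant in $t$, and evaluating it at $t=0$ (where $\psi_V(0)=\mathrm{Id}$) gives $E_V(e^{i\langle\lambda,\psi_V(t)u_0\rangle})=Z_V(\lambda,0)$ for every $t$. Hence the difference I must estimate is
$$Z_V(\lambda,t)-Z_V(\lambda,0)=E_V\!\left(e^{i\langle\lambda,\psi(t)u_0\rangle}-e^{i\langle\lambda,\psi_V(t)u_0\rangle}\right).$$
Using $|e^{ia}-e^{ib}|\le|a-b|$ for real $a,b$ together with Cauchy--Schwarz in $L^2$, this is bounded by
$$|Z_V(\lambda,t)-Z_V(\lambda,0)|\le \|\lambda\|_{L^2}\,E_V\!\left(\|\psi(t)u_0-\psi_V(t)u_0\|_{L^2}\right),$$
so that both the factor $\|\lambda\|_{L^2}$ and, after the next step, the factor $\|V\|_\infty$ come out cleanly.

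Next I would insert the pointwise estimate of Proposition \reff{clo}. Fix $\epsilon$ and choose the regularity parameters $s\in\,]1/4,1/2[$ and $\sigma\in\,]1/2,2s[$ (depending on $\epsilon$) so that $a:=1+\sigma-2s<1$; this constraint will turn out to be exactly what makes the forthcoming integral converge. Proposition \reff{clo} then gives, pointwise in $u_0$ and with $N=N(u_0,t)$,
$$\|\psi(t)u_0-\psi_V(t)u_0\|_{L^2}\le C\|V\|_\infty\bigl(1+N^{a}\|u_0\|_{H^s}^2+\|u_0\|_{L^2}\bigr)e^{c(1+N^{a/2}\|u_0\|_{H^s})|t|},$$
so that $\|V\|_\infty$ factors out and I am left with bounding the $\mu_V$-expectation of the remaining, purely $u_0$-dependent, right-hand side.

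The heart of the argument --- and the main obstacle --- is this last integration, because $N(u_0,t)$ and $\|u_0\|_{H^s}$ are correlated functions of $u_0$ and appear together inside an exponential. I would control them through the two tail estimates already available: Proposition \reff{compactness} gives $\mu_V(\|u_0\|_{H^s}>R)\le Ce^{-cR^2}$, and Proposition \reff{compactnenn} applied with radius $\tfrac{1}{C(1+|t|)}$ gives, through the very definition of $N(u_0,t)$, the tail $\mu_V(N(u_0,t)>N_0)\le Ce^{-cN_0/(1+|t|)^2}$. Decomposing dyadically in both $\|u_0\|_{H^s}$ and $N$, on each cell I bound the exponential by its maximum and the measure by the smaller of the two tails. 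Integrating first in the $H^s$-variable, the Gaussian decay $e^{-cR^2}$ dominates the growth $e^{cN^{a/2}R|t|}$ and, after maximizing at $R\sim N^{a/2}|t|$, produces a factor $e^{cN^{a}|t|^2}$. Summing then in $N$, the competition is between the tail rate $N/(1+|t|)^2$ and the growth $N^{a}|t|^2$: precisely because $a<1$ the tail wins for large $N$, the series converges, and its dominant term sits at $N_*\sim|t|^{4/(1-a)}$, yielding an overall bound $e^{c|t|^{4/(1-a)-2}}$. The polynomial prefactors $1+N^{a}\|u_0\|_{H^s}^2+\|u_0\|_{L^2}$ only add powers of $|t|$, which are absorbed into the exponential.

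Finally, I would fix the relation between the regularity margin and $\epsilon$ by choosing $s,\sigma$ so that $1-a=2s-\sigma$ satisfies $\tfrac{4}{1-a}-2=\tfrac{6}{\epsilon}-3$, which is admissible since $2s-\sigma$ can be taken in $]0,1/2[$. Collecting the three factors gives $|Z_V(\lambda,t)-Z_V(\lambda,0)|\le C\|V\|_\infty\|\lambda\|_{L^2}e^{c|t|^{6/\epsilon-3}}$, as claimed. The delicate point throughout is that one cannot afford $a\ge1$: if the dispersive/regularity bookkeeping forced $\sigma\ge2s$, the $N$-series would diverge and no bound of this shape would survive; everything hinges on keeping $a<1$ while still respecting $\sigma>1/2$, which is precisely why $s$ must be taken above $1/4$.
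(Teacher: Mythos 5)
Your proposal is correct and its skeleton coincides with the paper's: invariance of $\mu_V$ under $\psi_V$ to rewrite $Z_V(\lambda,0)$, the bound $|e^{ia}-e^{ib}|\leq |a-b|$ plus Cauchy--Schwarz to reduce to $\|\lambda\|_{L^2}E_V(\|\psi(t)u_0-\psi_V(t)u_0\|_{L^2})$, then Proposition \reff{clo} and the two tail estimates of Propositions \reff{compactness} and \reff{compactnenn}. Where you genuinely diverge is the final integration of $e^{cN^{a/2}\|u_0\|_{H^s}|t|}$ against $\mu_V$. The paper first decouples the product inside the exponential by Young's inequality, $N^{(1-\epsilon)/2}\|u_0\|_{H^s}\leq N^{1-\epsilon/2}+\|u_0\|_{H^s}^{2-\epsilon}$ (with the choice $\sigma=2s-\epsilon$, so $a=1-\epsilon$), then applies H\"older to split the expectation into factors each depending on only one of the two variables, and optimizes each factor separately against its own tail; this yields $e^{c|t|^{2/\epsilon}}$ from the $H^s$ factors and $e^{c|t|^{6/\epsilon-2}}$ from the $N$ factors. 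Your joint dyadic decomposition avoids the Young step and optimizes in the two variables sequentially, which is sharper: with the same $a$ it gives $e^{c|t|^{(2+2a)/(1-a)}}=e^{c|t|^{4/\epsilon-2}}$, strictly better than $6/\epsilon-3$ for $\epsilon<1/2$, and you then re-tune $1-a=2s-\sigma$ to land exactly on the stated exponent (whereas the paper's fixed choice $1-a=\epsilon$ is what produces $6/\epsilon$). One technical point to make explicit: bounding each dyadic cell's measure by ``the smaller of the two tails'' and then using the Gaussian tail for the $R$-sum \emph{and} the $e^{-cN/(1+|t|)^2}$ tail for the $N$-sum requires replacing $\min(x,y)$ by $\sqrt{xy}$ (or an equivalent Cauchy--Schwarz), since only the marginal tails are known and the two variables are correlated; this is exactly the H\"older step the paper performs openly, so it costs nothing, but as written your two-stage summation uses both tails on the same cell without justifying it. You are also right that $a<1$ (equivalently $\sigma<2s$, forcing $s>1/4$) is the non-negotiable constraint: the paper's choice respects it for the same reason, namely the convergence of the $N$-integration.
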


\begin{remark} The function $x\mapsto e^{ix}$ used in the generating functional could have been replaced by any $\mathcal C^1$ function $f$ whose derivative is bounded.\end{remark}

\begin{proof}First using the invariance of the measure $\mu_V$ through the flow $\psi_V$, it comes that for all $t\in \R$,

$$Z_V(\lambda, 0) = E_V(e^{i\langle \lambda , u_0\rangle})=  E_V(e^{i\langle \lambda , \psi_V(t)u_0\rangle})$$
so

$$|Z_V(\lambda , t)-Z_V(\lambda , 0)| = |E_V\left(e^{i\langle \lambda , \psi(t)u_0\rangle}- e^{i\langle \lambda , \psi(t)u_0\rangle} \right)|\; .$$

It appears then that

$$|Z_V(\lambda , t)-Z_V(\lambda , 0)|\leq E_V \left(|\langle \lambda , \psi(t) u_0 - \psi_V(t) u_0 \rangle | \right) \leq ||\lambda||_{L^2} E_V(||\psi(t) u_0 - \psi_V(t) u_0||_{L^2} ) \; .$$

As $\epsilon < \frac{1}{2}$, one can take $s$ in $]\frac{1+2\epsilon }{4}, \frac{1}{2}[$ and $\sigma = 2s-\epsilon > \frac{1}{2}$ such that $1+\sigma -2s = 1-\epsilon$. Apply then the proposition \reff{clo} with such $\sigma$ and $s$ :

$$|Z_V(\lambda , t)-Z_V(\lambda , 0)|\leq C ||\lambda||_{L^2} ||V||_\infty E_V \left( (1 +  N^{1-\epsilon}||u_0||_{H^s}^2+ ||u_0||_{L^2}) e^{c(1 +  N^{(1-\epsilon)/2}||u_0||_{H^s}) |t|}\right) $$

$$\leq C ||\lambda||_{L^2} ||V||_\infty e^{c |t|}E_V \left( (1 +  N^{1-\epsilon}||u_0||_{H^s}^2+ ||u_0||_{L^2}) e^{c N^{(1-\epsilon)/2}||u_0||_{H^s} |t|}\right) \; .$$

Remember that $N = N(u_0,|t|)$ is the smallest integer such that the solutions of BBM and of the perturbed BBM are given by local well posedness on $[-|t|,|t|]$ with initial datum $(1-\Pi_N)u_0$. 

Using that $N^{(1-\epsilon)/2}||u_0||_{H^s} \leq N^{1-\epsilon /2} + ||u_0||_{H^s}^{2-\epsilon}$ the mean value : 

$$E_V \left( (1 +  N^{1+\sigma-2s}||u_0||_{H^s}^2+ ||u_0||_{L^2}) e^{c N^{(1-\epsilon)/2}||u_0||_{H^s} |t|}\right)\leq \sqrt{I.1\; I.2} \sqrt{II.1\; II.2}\sqrt{III.1\; III.2}$$
with

$$I.1 = E_V (e^{2c ||u_0 ||_{H^s}^{2-\epsilon} |t|}) , II.1 = E_V (||u_0||_{H^s}^4 e^{2c ||u_0 ||_{H^s}^{2-\epsilon} |t|}) , III.1= E_V (||u_0||_{L^2}^2e^{2c ||u_0 ||_{H^s}^{2-\epsilon} |t|})$$
and

$$I.2 = III.2 = E_V(e^{2c N^{1-\epsilon / 2}|t|}) , II.2= E_V(N^{2-2\epsilon}e^{2c N^{1-\epsilon / 2}|t|})\; .$$

First, remember that there exists $c'> 0$ and $C'$ such that

$$\mu_V \left( \lbrace u_0 \; |\; ||u_0 ||_{H^s} > R \rbrace\right) \leq C' e^{-cR^2}$$
so

$$I.1,II.1, III.1 \leq C_\epsilon e^{c_\epsilon |t|^{2/\epsilon}}\; .$$

Then, 

$$P(N> N_0 ) = \mu_V \left( \lbrace u_0 \; | \; ||(1-\Pi_{N_0})u_0 ||_{L^2} \geq \frac{1}{C(1+|t|)} \rbrace \right) \leq C'e^{-c' N_0/(1+|t|)^2}$$
so

$$I.2,II.2,III.2 \leq C_\epsilon e^{c_\epsilon |t|^{6/\epsilon -3}}\; .$$

Since $\epsilon < \frac{1}{2}$, it appears that $\frac{6}{\epsilon} -3 > \frac{2}{\epsilon}$, so in the end , ther exist two constants $C_\epsilon, c_\epsilon$ such that

$$|Z_V(\lambda,t) - Z_V(\lambda, 0)|\leq C_\epsilon ||\lambda||_{L^2} ||V||_\infty e^{c_\epsilon |t|^{\frac{6}{\epsilon}-3}}\; .$$\end{proof}

\begin{remark}The averages of the products of the amplitudes admit the same kind of estimates. For instance, calling 

$$(\alpha_V^2)_{n,m}(t) = E_V(\langle c_n,\psi(t)u_0\rangle \; \langle c_m,\psi(t)u_0 \rangle) \; ,$$
it appears that

$$|(\alpha_V^2)_{n,m}(t) - (\alpha_V^2)_{n,m}(0)|\leq C'_\epsilon||V||_\infty e^{c'_\epsilon |t|^{\frac{6}{\epsilon}-3}}$$
only with different constants.
\end{remark}

\section*{Acknowledgements}

This work is supported by part by the ERC project Dispeq. It is also a part of a PhD thesis under the supervision of Nikolay Tzvetkov. 

We would like to thank Baptiste Billaud for his valuable help and insight regarding the physics of wave turbulence.

\bibliographystyle{amsplain}
\bibliography{abibbbm} 
\nocite{*}

\end{document}